\documentclass[11pt]{article}
\usepackage[margin=3cm]{geometry}

\usepackage{amsmath,amsthm,amsfonts,amssymb}
\usepackage{thmtools, thm-restate}
\usepackage{tikz}
\usepackage{enumitem}
\usepackage{titling}
\usepackage{tabularx}
\usepackage{graphicx}
\usepackage{appendix}
\usepackage{lipsum}
\usepackage{mathtools}
\usepackage[round]{natbib}
\usepackage{color}
\definecolor{myurlcolor}{rgb}{0,0.35,0}
\definecolor{mycitecolor}{rgb}{0,0,0.55}
\definecolor{myrefcolor}{rgb}{0.55,0,0}
\usepackage[pagebackref,draft=false]{hyperref}
\hypersetup{colorlinks,
linkcolor=myrefcolor,
citecolor=mycitecolor,
urlcolor=myurlcolor}

\usepackage{cancel}
\usepackage{bbm}
\usepackage{enumitem}
\setlist[enumerate]{label=(\arabic*), ref=\arabic*}
\usepackage{cite}
\usepackage{url}
\usepackage{tikz-cd}
\usepackage{bbm}
\usepackage{todonotes}
\usepackage[capitalize,noabbrev]{cleveref}
\usepackage{fancyhdr}
\pagestyle{fancy}
\usepackage{fancyhdr}
\setlength{\headheight}{14pt}      
\pagestyle{fancy}
\fancyhf{}                         

\fancyhead[L]{\nouppercase{\rightmark}}  
\fancyhead[R]{\thepage}        

\usepackage{pifont}   
\usepackage{array}    

\tikzstyle{ndash} = [draw, dashed, shape=circle, minimum size=20pt,inner sep=0pt]
\tikzstyle{ndexo} = [draw, dashed, shape=circle, minimum size=20pt,inner sep=0pt, fill=lightgray]
\tikzstyle{ndsel} = [draw, semithick, shape=regular polygon, regular polygon sides=3,
                     minimum size=25pt,inner sep=0pt, fill=lightgray,shape border rotate=180]

\providecommand{\pequ}{\stackrel{p}{=} }
\providecommand{\abe}{\mathrm{abe}}
\providecommand{\SA}{\mathrm{SA}}
\providecommand{\rp}{\mathrm{ep}} 

\newtheorem{theorem}{Theorem}[section]
\newtheorem{proposition}[theorem]{Proposition}
\newtheorem{lemma}[theorem]{Lemma}
\newtheorem{corollary}[theorem]{Corollary}
\newtheorem{definition}[theorem]{Definition}

\newtheorem{notation}[theorem]{Notation}
\theoremstyle{definition}
\newtheorem{example}[theorem]{Example}
\newtheorem{remark}[theorem]{Remark}

\newtheorem{assumption}[theorem]{Assumption}

\newtheorem{algorithm}[theorem]{Algorithm}

\newtheorem{defprop}[theorem]{Proposition/Definition}
\crefname{definitionthm}{Definition/Theorem}{Definitions/Theorems}
\crefname{notation}{Notation}{Notations}
\crefname{question}{Question}{Questions}

\newcommand{\Si}{\mathcal{S}}
\newcommand{\Ac}{\mathcal{A}}
\newcommand{\Bc}{\mathcal{B}}
\newcommand{\Cc}{\mathcal{C}}
\newcommand{\Dc}{\mathcal{D}}

\newcommand{\Fc}{\mathcal{F}}
\newcommand{\Gc}{\mathcal{G}}
\newcommand{\Hc}{\mathcal{H}}
\newcommand{\Ic}{\mathcal{I}}
\newcommand{\Jc}{\mathcal{J}}

\newcommand{\Lc}{\mathcal{L}}

\newcommand{\Nc}{\mathcal{N}}

\newcommand{\Pc}{\mathcal{P}}

\newcommand{\Sc}{\mathcal{S}}

\newcommand{\Uc}{\mathcal{U}}

\newcommand{\Xc}{\mathcal{X}}

\newcommand{\Pf}{\mathfrak{P}}

\newcommand{\Prb}{\mathrm{P}}
\newcommand{\Qr}{\mathrm{Q}}

\newcommand{\Mb}{\mathbb{M}}
\newcommand{\Nb}{\mathbb{N}}

\newcommand{\Rb}{\mathbb{R}}

\usepackage{centernot}
\usepackage{stmaryrd}

\usepackage{algorithm}
\usepackage{algpseudocode}

\let\hat\widehat
\let\tilde\widetilde

\newcommand{\ol}{\overline}
\newcommand{\ul}{\underline}
\newcommand{\cf}{cf.\ }
\newcommand{\pr}{\operatorname{pr}}
\newcommand{\sd}{\,\triangle\,}
\newcommand{\Uni}{\mathrm{Uni}}

\newcommand{\notind}{\not\!\perp\!\!\!\perp}
\newcommand{\notsep}{\not\perp}
\newcommand{\ind}{\perp\!\!\!\perp}

\newcommand{\Do}{\mathrm{do}}
\newcommand{\anc}{\mathrm{Anc}}
\newcommand{\sib}{\mathrm{Sib}}
\newcommand{\ch}{\mathrm{Ch}}

\newcommand{\pa}{\mathrm{Pa}}

\newcommand{\twin}{\mathrm{twin}}

\newcommand{\SCM}{(V,W,\Xc,\Prb,f)}

\newcommand{\sm}{\setminus}
\newcommand{\dcup}{\,\dot{\cup}\,}
\newcommand{\wrt}{w.r.t.\ }

\newcommand{\almostall}{a.a.\ }
\newcommand{\Ibbm}{\mathbbm{1} }
\newcommand{\resp}{resp.\ }

\newcommand{\sep}[2]{\underset{#2}{\stackrel{#1}{\perp}}}

\usepackage{tikz}
\usepackage{tikz-cd}
\usetikzlibrary{arrows,arrows.meta,calc,fit}
\usetikzlibrary{positioning,patterns,matrix}
\usetikzlibrary{decorations.markings,decorations.pathreplacing,decorations.pathmorphing}
\usetikzlibrary{shapes,shapes.arrows,shapes.geometric,shapes.multipart}
\tikzstyle{ndout} = [draw, semithick, shape=circle, minimum size=20pt,inner sep=0pt]
\tikzstyle{ndash} = [draw, dashed, shape=circle, minimum size=20pt,inner sep=0pt]
\tikzstyle{ndexo} = [draw, dashed, shape=circle, minimum size=20pt,inner sep=0pt, fill=lightgray]
\tikzstyle{ndlat} = [draw, semithick, shape=circle, minimum size=20pt,inner sep=0pt, fill=lightgray]
\tikzstyle{ndsel} = [draw, semithick, shape=regular polygon, regular polygon sides=3, minimum size=25pt,inner sep=0pt, fill=lightgray,shape border rotate=180]
\tikzstyle{arout} = [style={->,>=Latex}]
\tikzstyle{arlout} = [style={->,>=Latex}]
\tikzstyle{arlat} = [style={<->,>=Latex}]
\newcommand{\arrhead}{{Latex}}
\newcommand{\arrtail}{{}}
\newcommand{\arrstar}{Rays[n=6]}

\newcommand*{\hut}[1][]{\mathrel{\tikz [baseline=-0.25ex,\arrhead-\arrtail, #1] \draw [#1] (0pt,0.5ex) -- (1.3em,0.5ex);}}

\newcommand*{\tuh}[1][]{\mathrel{\tikz [baseline=-0.25ex,\arrtail-\arrhead, #1] \draw [#1] (0pt,0.5ex) -- (1.3em,0.5ex);}}

\newcommand*{\huh}[1][]{\mathrel{\tikz [baseline=-0.25ex,\arrhead-\arrhead, #1] \draw [#1] (0pt,0.5ex) -- (1.3em,0.5ex);}}

\newcommand*{\hus}[1][]{\mathrel{\tikz [baseline=-0.25ex,\arrhead-\arrstar, #1] \draw [#1] (0pt,0.5ex) -- (1.3em,0.5ex);}}

\newcommand*{\suh}[1][]{\mathrel{\tikz [baseline=-0.25ex,\arrstar-\arrhead, #1] \draw [#1] (0pt,0.5ex) -- (1.3em,0.5ex);}}

\newcommand*{\sus}[1][]{\mathrel{\tikz [baseline=-0.25ex,\arrstar-\arrstar, #1] \draw [#1] (0pt,0.5ex) -- (1.3em,0.5ex);}}

\tikzstyle{ndint} = [draw, semithick, shape=rectangle, minimum size=20pt,inner sep=0pt]
\tikzstyle{ndout} = [draw, semithick, shape=circle, minimum size=20pt,inner sep=0pt]
\tikzstyle{ndlat} = [draw, semithick, shape=circle, minimum size=20pt,inner sep=0pt, fill=lightgray]
\tikzstyle{arint} = [style={->,>=Latex,thick}] 
\tikzstyle{arout} = [style={->,>=Latex,thick}] 
\tikzstyle{arlout} = [style={->,>=Latex,thick}] 
\tikzstyle{arlat} = [style={<->,>=Latex,thick}] 

\tikzstyle{out} = [style={o->,>=Latex,style=semithick}]
\tikzstyle{hut} = [style={<-,>=Latex,style=semithick}]
\tikzstyle{tut} = [style=semithick]
\tikzstyle{tuh} = [style={->,>=Latex,style=semithick}]
\tikzstyle{tuo} = [style={-o,style=semithick}]
\tikzstyle{huh} = [style={<->,>=Latex,style=semithick}]
\tikzstyle{ouo} = [style={o-o,style=semithick}]
\tikzstyle{huo} = [style={<-o,>=Latex,style=semithick}]
\tikzstyle{ouh} = [style={o->,>=Latex,style=semithick}]
\tikzstyle{ous} = [style={o-{Rays[n=6]},style=semithick}]
\tikzstyle{hus} = [style={<-{Rays[n=6]},>=Latex,style=semithick}]
\tikzstyle{tus} = [style={-{Rays[n=6]},style=semithick}]
\tikzstyle{sut} = [style={{Rays[n=6]}-,style=semithick}]
\tikzstyle{suh} = [style={{Rays[n=6]}->,>=Latex,style=semithick}]
\tikzstyle{suo} = [style={{Rays[n=6]}-o,style=semithick}]
\tikzstyle{sus} = [style={{Rays[n=6]}-{Rays[n=6]},style=semithick}]

\long\def\acks#1{\vskip 0.3in\noindent{\large\bf Acknowledgments and Disclosure of Funding}\vskip 0.2in
\noindent #1}

\newenvironment{keywords}
{\bgroup\leftskip 20pt\rightskip 20pt \small\noindent{\bf Keywords:} }%
{\par\egroup\vskip 0.25ex}

\title{Foundations of Structural Causal Models with Latent Selection}
\author{Leihao Chen\thanks{Korteweg-de Vries Institute for Mathematics, University of Amsterdam, Amsterdam, the Netherlands; {\tt l.chen2@uva.nl}},~~~
        Onno Zoeter\thanks{Booking.com, The Netherlands; {\tt onno.zoeter@booking.com}}, ~~and~~
        Joris M.~Mooij\thanks{Korteweg-de Vries Institute for Mathematics, University of Amsterdam, Amsterdam, the Netherlands; {\tt j.m.mooij@uva.nl}}}

\begin{document}
\maketitle

\date{\today}

\begin{abstract}
Three distinct phenomena complicate statistical causal analysis: latent common causes, causal cycles, and latent selection. Foundational works on Structural Causal Models (SCMs), e.g., Bongers et al. (2021, Ann. Stat., 49(5): 2885-2915), treat cycles and latent variables, while an analogous account of latent selection is missing. The goal of this article is to develop a theoretical foundation for modeling latent selection with SCMs. To achieve that, we introduce a conditioning operation for SCMs: it maps an SCM with explicit selection mechanisms to one without them while preserving the causal semantics of the selected subpopulation. Graphically, in Directed Mixed Graphs we extend bidirected edges—beyond latent common causes—to also encode latent selection. We prove that the conditioning operation preserves simplicity, acyclicity, and linearity of SCMs, and interacts well with marginalization, conditioning, and interventions. These properties make those three operations valuable tools for causal modeling, reasoning, and learning after abstracting away latent details (latent common causes and selection). Examples show how this abstraction streamlines analysis and clarifies when standard tools (e.g., adjustment, causal calculus, instrumental variables) remain valid under selection bias. We hope that these results deepen the SCM-based understanding of selection bias and become part of the standard causal modeling toolbox to build more reliable causal analysis.
\end{abstract}

\begin{keywords}
    Causal Model Abstraction, Causal Modeling, Conditioning Operation, Graphical Models, Selection Bias, Structural Causal Models

\end{keywords}

\newpage
\tableofcontents

\newpage

\section{Introduction}

\citet{bongers2021foundations} provide a general measure-theoretic foundational theory for causal modeling with Structural Causal Models (SCMs) with cycles and latent variables, but an analogous treatment for latent selection is still absent. Addressing (latent) selection bias remains a significant challenge \citep{wald1943method,Heckman79SampleSB,zhao21bets,fryer2019empirical,cooper95causal_discovery_selecion}. For example,  in some cases unconsciously selecting samples can induce ``spurious dependency'' among collected samples, and therefore the famous Berkson's paradox arises \citep{berkson1946limitations,Munafo2016ColliderSW}. There are many types of selection bias without universally accepted definitions and various methods to address them \citep{lu2022toward,smith2020selection}. In this work, we focus on ``truncating selection bias," which occurs when an underlying (unobserved) filtering process, denoted ``$X_S \in \Si$", selects individual samples where the variable $X_S$ takes values within a set $\Si$. In probability theory, this can be modeled as conditioning on the event $\{X_S \in \Si\}$.


To understand its structural behavior, one approach is to model selection bias via a causal model that explicitly describes the selection mechanism \citep{pearl2009causality,bareinboim2012controlling,daniel2012using,Abouei24sID,hernan2004structural}. This necessitates detailed knowledge of the selection mechanism. However, in many situations, the selection mechanism is unobserved \citep{cooper95causal_discovery_selecion}, which makes such knowledge unavailable and introduces a layer of complexity with infinitely many possibilities. The goal of the current work is to study how to model latent selection by effectively abstracting away its details in an SCM \citep{pearl2009causality,bongers2021foundations}.

\subsection{Motivation}

Marginalization of causal models is a powerful tool for abstracting away latent details, which makes causal modeling more manageable and trustworthy \citep{bongers2021foundations,pearl2009causality,Evans16graphs_margin}. By marginalizing out latent variables, we use one simplified model to represent infinitely many complex models, abstracting away unnecessary latent details while preserving essential causal information such as observational/interventional/counterfactual distributions, $d$-separations or $\sigma$-separations \citep{pearl2009causality,Forre2017markov}, and ancestral relationships among the observed variables. The SCM marginalization and causal graph marginalization interact well and part of the nice properties can be compactly expressed via Figure~\ref{fig:diagram1}.\footnote{Replacing $M$ with the intervened model $M_{\Do(X_T=x_T)}$ where we intervene the variable $X_T$ to take on the value $x_T$ (\cf Definition \ref{def:scm_hard_intervention}), we get the corresponding results for interventional distributions and intervened graphs.}

\begin{figure}[ht]
    \centering
    \begin{tikzcd}[cells={nodes={scale=0.9}}]
   X_A \underset{\Prb_M(X_{V})}{\ind} X_B\mid X_C \arrow[d, Leftrightarrow] \arrow[r, "\text{\tiny{$d/\sigma$-Faithful}}"', Rightarrow, shift right, shift right] & A\underset{G(M)}{\overset{d/\sigma}{\perp}}B\mid C   \arrow[d, Leftrightarrow] \arrow[l, "\text{\tiny{$d/\sigma$-Markov}}"', Rightarrow, shift right, shift right]\\
   X_A \underset{\Prb_{M_{\sm L}}(X_{V\sm L})}{\ind} X_B\mid X_C  \arrow[r, "\text{\tiny{$d/\sigma$-Faithful}}"', Rightarrow, shift right, shift right] & A\underset{G(M)_{\sm L}}{\overset{d/\sigma}{\perp}}B\mid C \arrow[l, "\text{\tiny{$d/\sigma$-Markov}}"', Rightarrow, shift right, shift right]
  \end{tikzcd}
    \caption{The logical relations between stochastic conditional independence in simple SCM $M$ and marginalized model $M_{\sm L}$, and graphical separation ($d$- or $\sigma$-separation) in causal graph $G(M)$ and marginalized graph $G(M)_{\sm L}$. SCM $M$ has endogenous variables $X_V$ with $X_L$ latent, and  $A,B,C\subseteq V\sm L$. The terms ``$d/\sigma$-Markov'' and ``$d/\sigma$-Faithful'' represent $d$- or $\sigma$-Markov property and $d$- or $\sigma$-faithfulness, respectively, regarding $\Prb_M(X_V)$ and $G(M)$ (top), and $\Prb_{M_{\sm L}}(X_{V\sm L})$ and $G(M)_{\sm L}$ (bottom).}
    \label{fig:diagram1}
\end{figure}

For instance, the model $G$ in Figure \ref{fig:marg_abstract} effectively abstracts models $G^i$ for $i=1,\ldots, 5,\ldots$, yielding the same identification result (under discreteness and positivity assumptions) regardless of the latent structure (front-door criterion \citep{pearl2009causality}):\footnote{More generally, the ID-algorithm was first proved to be sound and complete for models with bidirected edges and then the results can be translated to the case with arbitrary latent structures via marginalization \citep{tian02general,Huang2008OnTC,richardson2023nested}.}
\[
\Prb(C=c\mid \Do(S=s))=\sum_{t}\Prb(C=c\mid T=t)\Prb(T=t\mid S=s).
\]

\begin{figure}[ht]
\centering
\begin{tikzpicture}[scale=0.6, transform shape]
\begin{scope}[xshift=-2cm]
    \node[ndout] (S) at (0,0) {$S$};
    \node[ndout] (T) at (1.5,0) {$T$};
    \node[ndout] (C) at (3,0) {$C$};
    \draw[arout] (S) to (T);
    \draw[arout] (T) to (C);
    \draw[arlat, bend left] (S) to (C);
    \node at (1.5,-1) {$G$};
\end{scope}
\node[align=center] at (3,0) {\Large Abstracts:};
\draw[thick, decorate, decoration={brace, amplitude=10pt}] (5,-3) -- (5,3) node[midway, below=15pt] {};
    \begin{scope}[xshift=6cm,yshift=2cm]
    \node[ndout] (S) at (0,0) {$S$};
    \node[ndout] (T) at (1.5,0) {$T$};
    \node[ndout] (C) at (3,0) {$C$};
    \node[ndlat] (L) at (1.5,1) {$L$};
    \draw[arout] (S) to (T);
    \draw[arout] (T) to (C);
    \draw[arout] (L) to (S);
    \draw[arout] (L) to (C);
    \node at (1.5,-1) {$G^1$};
\end{scope}

\begin{scope}[xshift=11cm,yshift=2cm]
    \node[ndout] (S) at (0,0) {$S$};
    \node[ndout] (T) at (1.5,0) {$T$};
    \node[ndout] (C) at (3,0) {$C$};
    \node[ndlat] (L1) at (1.5,1) {$L_1$};
    \node[ndlat] (L2) at (0,1.5) {$L_2$};
    \draw[arout] (S) to (T);
    \draw[arout] (T) to (C);
    \draw[arout] (L1) to (S);
    \draw[arout] (L1) to (C);
    \draw[arout] (L2) to (S);
    \draw[arout] (L2) to (L1);
    \node at (1.5,-1) {$G^2$};
\end{scope}

\begin{scope}[xshift=16cm,yshift=2cm]
    \node[ndout] (S) at (0,0) {$S$};
    \node[ndout] (T) at (1.5,0) {$T$};
    \node[ndout] (C) at (3,0) {$C$};
    \node[ndlat] (L1) at (1.5,1) {$L_1$};
    \node[ndlat] (L2) at (0,1.5) {$L_2$};
    \node[ndlat] (L3) at (3,1.5) {$L_3$};
    \draw[arout] (S) to (T);
    \draw[arout] (T) to (C);
    \draw[arout] (L1) to (S);
    \draw[arout] (L1) to (C);
    \draw[arout] (L2) to (S);
    \draw[arout] (L2) to (L1);
    \draw[arout] (L3) to (L1);
    \draw[arout] (L3) to (C);
    \node at (1.5,-1) {$G^3$};
\end{scope}

\begin{scope}[xshift=6cm,yshift=-3cm]
    \node[ndout] (S) at (0,0) {$S$};
    \node[ndout] (T) at (1.5,0) {$T$};
    \node[ndout] (C) at (3,0) {$C$};
    \node[ndlat] (L1) at (1.5,1) {$L_1$};
    \node[ndlat] (L2) at (0,1.5) {$L_2$};
    \node[ndlat] (L3) at (3,1.5) {$L_3$};
    \node[ndlat] (L4) at (1.5,3) {$L_4$};
    \draw[arout] (S) to (T);
    \draw[arout] (T) to (C);
    \draw[arout] (L1) to (S);
    \draw[arout] (L1) to (C);
    \draw[arout] (L2) to (S);
    \draw[arout] (L2) to (L1);
    \draw[arout] (L3) to (L1);
    \draw[arout] (L3) to (C);
    \draw[arout] (L4) to (L2);
    \draw[arout] (L4) to (L1);
    \draw[arout] (L4) to (L3);
    \node at (1.5,-1) {$G^4$};
\end{scope}
\node[align=center] at (19,-2) {\Large etc.};
\begin{scope}[xshift=11cm,yshift=-3cm]
    \node[ndout] (S) at (0,0) {$S$};
    \node[ndout] (T) at (3,0) {$T$};
    \node[ndout] (C) at (6,0) {$C$};
    \node[ndlat] (L1) at (3,1) {$L_1$};
    \node[ndlat] (L2) at (1.5,1.5) {$L_2$};
    \node[ndlat] (L3) at (4.5,1.5) {$L_3$};
    \node[ndlat] (L4) at (1.5,0) {$L_4$};
    \node[ndlat] (L5) at (4.5,0) {$L_5$};
    \node[ndlat] (L6) at (0,1.5) {$L_6$};
    \node[ndlat] (L7) at (6,1.5) {$L_7$};
    \draw[arout] (S) to (L4);
    \draw[arout] (T) to (L5);
    \draw[arout] (L4) to (T);
    \draw[arout] (L5) to (C);
    \draw[arout] (L1) to (S);
    \draw[arout] (L1) to (C);
    \draw[arout] (L2) to (S);
    \draw[arout] (L2) to (L1);
    \draw[arout] (L3) to (L1);
    \draw[arout] (L3) to (C);
    \draw[arout] (C) to (L7);
    \draw[arout] (L6) to (S);
    \node at (3,-1) {$G^5$};
\end{scope}
\end{tikzpicture}
\caption{$G$ effectively abstracts $G^i$ for $i=1,\ldots, 5,\ldots$}
\label{fig:marg_abstract}
\end{figure}
\paragraph*{Motivating questions} Selection bias is ubiquitous, often latent, and can lead to biased results; therefore, not taking it into account may lead to an untrustworthy model. Unfortunately, marginalization is not able to deal with latent selection bias (\cf Example~\ref{ex:car_mechanic}).
Considering this, the following questions arise naturally:

\begin{enumerate}[label=Q\arabic*, ref=Q\arabic*]

    \item\label{Q1} Given an SCM $M$ with a selection mechanism $X_S\in \Si$, can we always find an SCM without a selection mechanism to faithfully represent $(M,X_S\in \Si)$?\footnote{Note that this question is trickier than it seems. Finding an SCM without a selection mechanism to represent another SCM with a selection mechanism is, to some extent, analogous to the problem of finding a DAG to represent the marginalized model of another DAG, which is impossible in general \citep{richardson2002ancestral}. See also \citet{Blom19beyond} for some data-generating processes with clear causal interpretation that cannot be modeled by SCMs.}  (\cf Appendix \ref{app:impossible})

    \item\label{Q2} If not, which part of the causal semantics of $(M,X_S\in \Si)$ can be represented by an SCM in general? Can we construct a transformation that transforms $(M,X_S\in \Si)$ into an SCM $M_{|X_S\in \Si}$ so that $M_{|X_S\in \Si}$ encodes this part of the causal semantics? What properties does it have? (\cf \cref{thm:causal_semantics,def:cdSCM,sec:3.2,prop:impossible})

    \item\label{Q3} Can we similarly construct a transformation on causal graphs such that it is compatible with the transformation at the level of SCMs? What properties does it have and what is the relation between the ``conditioned SCM'' $M_{|X_S\in \Si}$ and the ``conditioned causal graph'' $G(M)_{|S}$ (Figure~\ref{fig:diagram2})?  (\cf Definition~\ref{def:cond_dmg}, Section~\ref{sec:condDMG})
\end{enumerate}
We answer these questions in the current manuscript.
\begin{figure}[ht]
    \centering
\begin{tikzcd}[cells={nodes={scale=0.9}}]
 X_A \underset{\Prb_M(X_{V})}{\ind} X_B\mid X_C, X_S\in \Si \arrow[d,"?"', Leftrightarrow] \arrow[r, "?"', Rightarrow, shift right] & A\underset{G(M)}{\overset{d/\sigma}{\perp}}B\mid C\cup S \arrow[l, "?"', Rightarrow, shift right]  \arrow[d,"?"', Leftrightarrow] \\
   X_A \underset{\Prb_{M_{|X_S\in \Si}}(X_{V\sm S})}{\ind} X_B\mid X_C  \arrow[r, "?"', Rightarrow, shift right] & A\underset{G(M)_{|S}}{\overset{d/\sigma}{\perp}}B\mid C \arrow[l, "?"', Rightarrow, shift right]
\end{tikzcd}
\caption{What are the relations between stochastic conditional independence in $M_{|X_S\in \Si}$ and graphical separation in $G(M)_{|S}$? The answer is shown in \cref{fig:cond‐diagram}.}
\label{fig:diagram2}
\end{figure}




\paragraph*{A motivating example}

To illustrate, we first discuss a toy example, demonstrating that marginalization is \emph{not} appropriate for abstracting away selection bias and how to obtain correct results without assuming any specific details about the latent selection mechanism.

\begin{example}[Car mechanic]\label{ex:car_mechanic}
A car starts successfully if its battery is charged and its start engine is operating.
Introduce latent binary endogenous variables $B_0$ (``battery''), $E_0$ (``start engine'') and $S_0$ (``car starts'') measured at time $t_0$ and observed variables $B_1,E_1$ and $S_1$ with a similar meaning for the same car but measured at time $t_1$ with $t_1>t_0$. We model this\footnote{For illustration, we assume such a simplified model. One can add more (endogenous or exogenous random) variables to the model.} by the following SCM $M$ and denote by $M^*$ its marginalized model on observed endogenous variables $B_1$, $E_1$, and $S_1$.

\begin{minipage}{0.48\linewidth}
\vspace{2 pt}
\begin{equation*}
M:\left\{\begin{array}{l}
U_B \sim \operatorname{Ber}(1-\delta),
U_E \sim \operatorname{Ber}(1-\epsilon), \\
B_0 = U_B, E_0 = U_E, S_0=B_0 \wedge E_0, \\
B_1=B_0, E_1=E_0, S_1=B_1 \wedge E_1,
\end{array}\right.
\end{equation*}
\vspace{2 pt}
\end{minipage}
\begin{minipage}{0.48\linewidth}
\vspace{2 pt}
\begin{equation*}
M^*:\left\{\begin{array}{l}
U_B \sim \operatorname{Ber}(1-\delta),
U_E \sim \operatorname{Ber}(1-\epsilon), \\
B_1=U_B, E_1=U_E, S_1=B_1 \wedge E_1,
\end{array}\right.
\end{equation*}
\vspace{2 pt}
\end{minipage}
where $U_B$ and $U_E$ are latent exogenous independent Bernoulli-distributed random variables with parameters $1-\delta$ and $1-\epsilon$. Their graphs are shown in Figure~\ref{intro}.



The question is whether there exists an SCM with variables $B_1, E_1, S_1$ encoding the causal semantics of $M$ for the subpopulation of cars for which $S_0=0$.
Consider the SCM $\tilde{M}$, whose graph is depicted in Figure~\ref{intro}, given by
\\[0.5ex]
\begin{minipage}{0.5\linewidth}
\begin{equation*}
\tilde{M}:\left\{\begin{array}{l}
\left(U_B, U_E\right) \sim \tilde{\Prb}(U_B,U_E)  \\
B_1=U_B,
E_1=U_E,
S_1=B_1 \wedge E_1
\end{array}\right.
\end{equation*}
\end{minipage}
\begin{minipage}{0.5\linewidth}
\begin{tabularx}{0.9\textwidth} {
      >{\raggedright\arraybackslash}X
      | >{\centering\arraybackslash}X
      >{\centering\arraybackslash}X }
        \hline\\[-2.5ex]
        $\tilde{\Prb}(U_B,U_E)$ & $U_E=0$ & $U_E=1$ \\
        \hline
        $U_B=0$  & $\frac{\delta\epsilon}{\delta + (1-\delta)\epsilon}$  & $\frac{\delta(1-\epsilon)}{\delta + (1-\delta)\epsilon}$  \\
        $U_B=1$  & $\frac{(1-\delta)\epsilon}{\delta + (1-\delta)\epsilon}$  & $0$  \\
        \hline
    \end{tabularx}
\end{minipage}
\\[0.5ex]
As one can check,
\[
\begin{aligned}
    \Prb_{\tilde{M}}(B_1,E_1,S_1) &= \Prb_M(B_1,E_1,S_1 \mid S_0=0)\ne \Prb_{M^*}(B_1,E_1,S_1),\\
    \Prb_{\tilde{M}}(S_1=1 \mid \Do(B_1=1)) &=\Prb_M(S_1=1\mid \Do(B_1=1),S_0=0 )\ne \Prb_{M^*}(S_1=1\mid \Do(B_1=1)), \\
    \Prb_{\tilde{M}}(S_1=1 \mid \Do(E_1=1)) &=\Prb_M(S_1=1\mid \Do(E_1=1),S_0=0)\ne \Prb_{M^*}(S_1=1\mid \Do(E_1=1)).
\end{aligned}
\]
The car mechanic is only interested in cars that failed to start at an early time $t_0$ and are sent to the studio at a later time $t_1$. So, the car mechanic (who might not even be aware of the latent selection mechanism $S_0=0$) can still use an SCM as an accurate causal model to predict the effects of interventions on the subpopulation of cars that are of her concern.
Note that the marginalized model $M^*$ does not possess the correct causal semantics of the subpopulation. Furthermore, the graph $G(\tilde{M})$ correctly expresses that $B_1$ and $E_1$ might be dependent in the subpopulation (given $S_0=0$) via the $d$-separation criterion for acyclic directed mixed graphs \citep{richardson03markov_admg}, while the graph $G(M^*)$ wrongly claims that $B_1$ and $E_1$ are independent.
Therefore, $\tilde{M}$ effectively abstracts away irrelevant latent modeling details: (i) the latent variables $B_0$, $E_0$ and $S_0$, (ii) their causal mechanisms, and (iii) the explicit selection step on $S_0=0$. However, the marginalized model $M^*$ does not, which shows that marginalization alone cannot abstract latent selection mechanisms.

\begin{figure}[ht]
\centering
\begin{tikzpicture}[scale=0.8, transform shape]
  \begin{scope}[xshift=0]
    \node[ndlat] (B0) at (1,4.5) {$B_0$};
    \node[ndlat] (E0) at (3,4.5) {$E_0$};
    \node[ndout] (B1) at (1,3) {$B_1$};
    \node[ndout] (E1) at (3,3) {$E_1$};
    \node[ndsel] (S0) at (2,3.5) {$S_0$};
    \node[ndout] (S1) at (2,2) {$S_1$};
    \draw[arout] (B0) to (S0);
    \draw[arout] (E0) to (S0);
    \draw[arout] (B0) to (B1);
    \draw[arout] (E0) to (E1);
    \draw[arout] (B1) to (S1);
    \draw[arout] (E1) to (S1);
    \node at (2,1) {$G(M)$};
  \end{scope}


  \begin{scope}[xshift=6cm]
    \node[ndout] (B1) at (1,3) {$B_1$};
    \node[ndout] (E1) at (3,3) {$E_1$};
    \node[ndout] (S1) at (2,2) {$S_1$};
    \draw[arout] (B1) to (S1);
    \draw[arout] (E1) to (S1);
    \node at (2,1) {$G(M^*)$};
  \end{scope}

   \begin{scope}[xshift=12cm]
   \node[ndout] (B1) at (1,3) {$B_1$};
    \node[ndout] (E1) at (3,3) {$E_1$};
    \node[ndout] (S1) at (2,2) {$S_1$};
    \draw[arout] (B1) to (S1);
    \draw[arout] (E1) to (S1);
    \draw[arlat, bend left] (B1) to (E1);
    \node at (2,1) {$G(\tilde{M})=G(M)_{|S_0}=G(M_{|S_0=0})$};
  \end{scope}

\end{tikzpicture}
\caption{The causal graphs of the SCMs $M$, $M^*$ and $\tilde{M}$ in Example~\ref{ex:car_mechanic}. The gray nodes are latent and the triangle means conditioning on $S_0$ to take some specific values (\cf Notation~\ref{notation:graph}, Definition~\ref{def:SCM_selection}). Marginalizing out all the latent variables yields $M^*$, while conditioning out $S_0$ (\cf Definition~\ref{def:cdSCM}) and marginalizing out the remaining latent variables yields $\tilde{M}$.   }
\label{intro}
\end{figure}
\end{example}


Note that in Example~\ref{ex:car_mechanic},  we can obtain the model $\tilde{M}$ directly from $M$ (\cf \cref{def:cdSCM,ex:car_mechanic_continue}), by
(i) merging $U_B$ and $U_E$, (ii) replacing $\Prb_M(U_B,U_E)$ by $\Prb_M(U_B,U_E \mid S_0=0)$, and (iii) marginalizing out $B_0, E_0$ and $S_0$ (by substituting the structural equations for $B_0, E_0$ and $S_0$ in the remaining structural equations and then removing these variables from the model). In fact, this procedure can be generalized and gives the desired transformation of Question \ref{Q2}, as we show in Section \ref{sec:cond}.

\subsection{Our contribution}

\citet[p.163]{pearl2009causality} claims that (when doing causal modeling):``...bidirected arcs should be assumed to exist, by default, between any two nodes in the diagram. They should be deleted only by well-motivated
justifications, such as the unlikely existence of a common cause for the two variables \emph{and the unlikely existence of selection bias}.'' Although marginalization makes it clear how bidirected edges can represent latent common causes, a rigorous approach to representing (latent) selection bias with bidirected edges has not been formalized yet.

Our main contribution is that we provide a rigorous approach to representing \emph{(latent) selection bias} with \emph{bidirected edges}. To be more precise, given a Structural Causal Model $M$ with a selection mechanism $X_S\in \Si$ (\cf Definition~\ref{def:SCM_selection}) where variable $X_S$ takes values in a measurable subset $\Si$, we define a transformation that maps $(M,X_S\in\Si)$ to a ``conditioned'' SCM $M_{|X_S\in\Si}$ without any accompanying selection mechanism, so that $M_{|X_S\in\Si}$ is an \emph{effective abstraction} of $M$ \wrt the selection $X_S\in\Si$ in the sense that:
\begin{enumerate}[label=(\roman*)]

    \item  the conditioned SCM $M_{|X_S\in\Si}$ correctly encodes as much \emph{causal semantics} (observational, interventional and counterfactual) of $M$ of the \emph{subpopulation} $X_S\in \Si$ as possible (\cf \cref{thm:causal_semantics,thm:as_much_as_possible,app:impossible});

    \item  the conditioning operation preserves important model classes, e.g., linear, acyclic and simple SCMs (\cf Proposition \ref{prop:modelcalss});

    \item this conditioning operation interacts well with other operations on SCMs, e.g., intervention, marginalization, and the conditioning operation itself  (\cf \cref{lem:int}, \cref{prop:iterative_cond,prop:cond_marg});

    \item  one can read off \emph{qualitative causal information} about $M$ under the selected subpopulation from the \emph{causal graph} of $M_{|X_S\in\Si}$ and the \emph{conditioned graph} $G(M)_{|S}$ (\cf Definition \ref{def:cond_dmg}, \cref{thm:dmarkov,thm:gmarkov}, Corollary \ref{cor:markov}).
\end{enumerate}

In Section \ref{sec:cond}, we will introduce the rigorous mathematical definition of the conditioning operation (Definition \ref{def:cdSCM}) and demonstrate that it possesses all the aforementioned properties.

The significance of this conditioning operation lies in the fact that we can take $M_{|X_S\in\Si}$ as a simplified ``proxy'' for $M$ \wrt the selection $X_S\in\Si$, which effectively abstracts away details about latent selection (i.e., satisfying the properties listed previously). This makes it a versatile tool for causal inference tasks with latent selection bias. Specifically:

\begin{enumerate}
    \item Causal Reasoning: One can directly apply all the causal inference tools for SCMs on $M_{|X_S\in\Si}$, e.g.,  identification results (adjustment criterion and Pearl's do-calculus), ID-algorithm and instrumental inequality, which simplifies causal reasoning tasks under latent selection bias (\cf \cref{ex:identif,ex:inst,ex:ID-algorithm}).

    \item Causal Modeling: Utilizing the marginalization and conditioning operation, we can represent infinitely many SCMs with a single marginalized conditioned SCM. This significantly streamlines causal modeling, eliminating the need to enumerate all possibilities with different latent structures and selection mechanisms. Moreover, it improves the robustness and trustworthiness of the model by reducing the sensitivity to various causal assumptions (\cf  \cref{ex:rei,ex:causalassum}).

    \item Causal Discovery: Many algorithms exclude selection bias by assumption—an often unrealistic idealization \citep{cooper95causal_discovery_selecion}. Nevertheless, methods originally designed for latent common causes without selection bias can, under suitable conditions, be applied directly to selected data, with their outputs interpreted as learning the conditioned model $M_{|X_S\in \Si}$ (\cf \cref{ex:discovery}). This requires no redesign of the algorithm: the unmodified procedure still admits certain causal interpretation of its output in the presence of selection.\footnote{Due to the abstraction nature of conditioned SCM, we lose some information in this process and see \cref{sec:caveat,thm:causal_semantics,thm:as_much_as_possible,app:impossible} for the subtlety of causal interpretation of conditioned SCM.}
\end{enumerate}

It is worth mentioning that many of our results rely on two facts: bidirected edges in DMGs can be used to represent latent selection bias, and DMGs admit an interpretation as causal graphs of SCMs. There is one subtlety, though: not all the endogenous variables of the conditioned SCM retain their causal interpretation. Interventions targeting such nodes yield predictions that are typically incompatible with those of the corresponding interventions on the original SCM in the presence of the selection mechanism (see also \cref{sec:caveat}). However, these ``non-intervenable" endogenous variables are easily identified as the ancestors of the selection variables.

\subsection{Connections to related work}

In a series of papers \citep{bareinboim2012controlling,bareinboim2015recovering}, the authors explored the `s-recoverability' problem, aiming to recover causal quantities for the whole population from selected data. This investigation operated under qualitative causal assumptions on the selection nodes, explicitly expressed in terms of causal graphs. However, such knowledge about selection nodes is not always available \citep[Footnote 11]{richardson2013single}. In the current work, we focus on the problem of how to \emph{model selection bias with an SCM without explicitly modeling the selection mechanism} and draw (causal) conclusions for the selected subpopulation.

There are graphical models with well-behaved marginalization and conditioning operations such as maximal ancestral graphs (MAGs) \citep{richardson2002ancestral}, $d$-connection graphs \citep{hyttinen2014constraint} and $\sigma$-connection graphs \citep{forre2018constraint}. Among them, MAGs were originally developed as a model class representing the conditional independence models of the marginalized conditioned conditional independence models of DAGs. By summarizing the common causal features of causal DAGs represented by a MAG, one can give a causal interpretation to MAGs and call them causal MAGs. One single causal MAG can represent infinitely many SCMs with different graphs but the same conditional independences among observed variables. Interpreting a graph as a causal graph of an SCM and as a causal MAG respectively will not give the same causal conclusions in general.\footnote{For example, consider a graph consisting of $A\tuh B\tuh C$ and $A\tuh C$. If it is a causal graph of an SCM, then we can conclude that variable $A$ has a direct causal effect on $C$ according to this model and we can identify $\Prb(C=c\mid \Do(A=a))=\Prb(C=c\mid A=a)$ under the positivity and discreteness assumption. However, if it is a MAG, then we cannot obtain the above two conclusions.}
Due to the nature of model abstraction, MAGs are well suited for causal discovery, and one can further draw \emph{some} causal conclusions from MAGs \citep{spirtes95causal, richardson03markov_admg, zhang2008causal, MooijClaassen20constraint}. However, MAGs are not always suitable for \emph{causal modeling} under selection bias in some cases, since: (i) it is not clear how to read off causal relationships (direct causal relations, confounding) from MAGs; (ii) there are no causal identification results for MAGs under selection bias and causal cycles yet; (iii) currently the standard theory of MAGs cannot deal with counterfactual reasoning. On the other hand, our conditioning operation transforms an SCM with selection mechanisms to an ordinary SCM, which carries an intuitive causal interpretation. All the theory for SCMs (causal identification, cycles, counterfactual reasoning) can be directly extended to the case with selection bias via the conditioning operation. Therefore, our results can address causal inference tasks such as fairness analysis  \citep{kusner2017counterfactual, zhang2018fairness}, causal modeling of dynamical systems \citep{bongers2018causal, peters2022causal} and biological systems with feedback loops \citep{versteeg2022local} under selection bias (cf. Definition~\ref{def:scm}, \cref{rem:back-door,rem:mag_instrumental}, \cref{ex:ID-algorithm,ex:mediation,ex:causalassum}). Another subtle difference between SCM conditioning and MAG conditioning is that they consider different forms of conditioning (cf. Example~\ref{ex:counterex_ci}).

Although causal graphs provide a means to differentiate selection bias from confounding due to common causes \citep{hernan2004structural,cooper95causal_discovery_selecion}, the potential outcome community tends to amalgamate the two \citep{richardson2013single,hernan2020causal}. In many cases, one can be sure about the existence of ``non-causal dependency'', but cannot be sure whether it is induced by a latent common cause or latent selection bias or the combination of the two (see e.g., \citet[Footnote~11]{richardson2013single} and \citet[p.163]{pearl2009causality}). Our conditioning operation formalizes this ambiguity within SCMs. Graphically, we employ bidirected edges to symbolize the dependence of two variables arising from either unmeasured common causes, latent selection bias, or any intricate combination of the two. Therefore, in causal modeling, our work allows the modeler to be able to represent such non-causal dependency abstractly via bidirected edges.

Some work considers the abstraction of causal models from the perspective of grouping low-level variables to high-level variables and merging values of variables \citep{rubenstein17,Bec2019abstract}. \citet{Geiger23CausalAF} study the so-called ``constructive abstraction'' of causal models. They show that it can be characterized as a composition of clustering sets of variables, merging values of variables, and marginalization. Our conditioning operation does not fall under the umbrella of ``constructive abstraction'' of \citet{Geiger23CausalAF}.

\subsection{Outline}

In Section~\ref{sec:pre}, we review basic notions of SCMs and fix the notation used throughout the article; additional preliminaries are deferred to \cref{app:def} to save space. In \cref{sec:sSCM}, we give a formal definition of SCMs with selection mechanisms. In \cref{sec:condSCM,sec:condDMG}, we introduce the conditioning operations for SCMs and DMGs, respectively, and study their mathematical properties and mutual relations; \cref{thm:causal_semantics,thm:dsep,prop:cond_scm_dmg} contain the main results. In \cref{sec:caveat}, we discuss important caveats concerning the interpretation of conditioned SCMs. Further remarks and examples related to \cref{sec:cond} are collected in \cref{app:ex}, while all proofs of the results in \cref{sec:cond} are provided in \cref{app:pf}. 

We illustrate the applicability of the conditioning operation through a series of examples in \cref{sec:ex}, including generalized versions of Reichenbach’s principle, the back-door theorem, the ID-algorithm, instrumental variables, causal model learning, mediation analysis, and a real-world causal modeling exercise on COVID-19. In \cref{app:impossible}, we show that SCMs without selection mechanisms are, in general, not flexible enough to represent SCMs with selection mechanisms, which answers Questions~\ref{Q1}–\ref{Q3} in combination with the discussion in \cref{sec:cond}. Finally, in \cref{app:variants_cond,sec:cond_iSCM}, we explore alternative conditioning operations, including variants based on different decompositions of exogenous variables, a conditioning operation for causal Bayesian networks, and a conditioning operation for SCMs with exogenous non-stochastic input variables (\cf \cref{def:cdiSCM}).

\section{Preliminaries and notation}\label{sec:pre}
This section provides the necessary background on SCMs and introduces the notions of common cause and confounding. To save space, additional preliminaries on SCMs are deferred to \cref{app:def}. We follow the formal setup of \citet{bongers2021foundations}, which allows us to formulate the theory for “simple’’ SCMs, a class that includes acyclic SCMs as well as well-behaved cyclic SCMs. However, we also depart from \citet{bongers2021foundations} in several respects—for instance, we allow for non-intervenable variables and introduce new node types (dashed nodes and triangle nodes). In this section we define (i) SCMs, (ii) (hard) interventions, (iii) SCM solution functions, (iv) simple SCMs, (v) marginalization, and (vi) basic causal relationships such as common cause and confounding (\cf \cref{def:scm,def:scm_hard_intervention,def:scm_solution_functions,def:simple_SCM,def:marg,def:causal_relation}). We also fix notation for causal graphs and (conditional) interventional distributions (\cf \cref{notation:graph,nota:prob}).

\subsection{Structural Causal Model (SCM)}

\begin{definition}[Structural Causal Model]\label{def:scm}
    A \textbf{Structural Causal Model (SCM)} is a tuple $M=(V,W,\Xc,\Prb,f)$ such that
    \begin{enumerate}[itemsep=-.5ex,label=(\roman*)]
      \item   $V,W$ are disjoint finite sets of labels for the \textbf{endogenous variables} and the \textbf{latent exogenous random variables}, respectively;
      \item   the \textbf{state space} $\Xc = \prod_{i\in V \dcup W} \Xc_i$ is a product of standard measurable spaces $\Xc_i$;
      \item   the \textbf{exogenous distribution} $\Prb$ is a probability distribution on $\Xc_W$ that factorizes as a product $\Prb = \bigotimes_{w\in W} \Prb(X_w)$ of probability distributions $\Prb(X_w)$ on $\Xc_w$;
      \item   the \textbf{causal mechanism} is specified by the measurable mapping $f : \Xc \rightarrow \Xc_V$.
\end{enumerate}
\end{definition}

\begin{definition}[Hard intervention]\label{def:scm_hard_intervention}
  Given an SCM $M$, an intervention target $T \subseteq V $ and an intervention value $x_T \in \Xc_T$, we define the \textbf{intervened SCM} $$M_{\Do(X_T=x_T)} \coloneqq  (V,W,\Xc,\Prb,(f_{V\setminus T},x_T)).$$
\end{definition}

This replaces the targeted endogenous variables with specified values. In this work, we do \emph{not} assume that all the endogenous variables in an SCM can be intervened on, which deviates from the standard modeling assumption. If an endogenous variable is modeled as ``intervenable'', then we say that we model it as causal or that it has a causal interpretation.\footnote{Although some variables are modeled as ``non-intervenable'', one can mathematically define an intervention on them. However, one should be careful with the causal interpretation \citep{Pearl19interpretation,pearl2015conditioning}. Similar problems arise in the work of causal model abstraction such as \citet{rubenstein17} and \citet{Bec2019abstract}. An intervention on the ``high-level'' variables in the abstracted models may not correspond to a well-defined intervention on the ``low-level'' variables in the detailed models. One can keep track of the ``allowed intervention targets''  $\mathcal{I}\subseteq V$ and augment $M$ to $(M,\mathcal{I})$. Similarly, one can also encode the information about which variables are latent or not in the definition of an SCM. This would introduce four types of endogenous nodes, which makes the notation quite heavy. Note that mathematically they can often be treated equally and the difference comes only at the phase of modeling.  Therefore, we do not distinguish these nodes in the definition of SCMs and only mark them informally with different types of nodes in the causal graphs (\cf \cref{notation:graph}). Another method is to introduce the so-called regime indicators \citep{Dawid02influence,Dawid21decision} to indicate on the graphs which variables are causal and which are purely probabilistic. This usually makes causal graphs much more inflated and requires us to introduce a conditioning operation for SCMs with exogenous non-stochastic input variables (\cf \cref{def:cdiSCM}).  To ease notation and the reader's mental burden, we do not adopt this approach, either.} To avoid confusion, most of the time we will only consider interventions $\Do(X_T=x_T)$ with intervention target $T$ a subset of the intervenable nodes in $V$. We consider all exogenous random variables as non-intervenable. Other types of interventions can be defined, such as soft or stochastic ones \citep{Correa2020stochastic}.

Given an SCM $M$, one can define its causal graph $G(M)$ and its augmented causal graph $G^a(M)$ to give intuitive and compact graphical representations of the causal model (see \cref{def:gra}). One can read off useful causal information directly from the causal graph without knowing the details of the underlying SCM.

\begin{notation}[Causal graphs]\label{notation:graph}
  In all the causal graphs, we use gray nodes to represent latent variables. Dashed nodes represent non-intervenable variables, and solid nodes represent intervenable variables. Exogenous variables are assumed latent and non-intervenable, so they are marked gray and dashed. Triangle nodes mean that there are selection mechanisms conditioning on the corresponding variables to take some specific values.\footnote{A triangle looks like a funnel, which means that we filter out some samples based on the values of variable $X_S$.} We sometimes abuse the notation by identifying the label and random variables in causal graphs.
\end{notation}

\begin{definition}[Solution function of an SCM]\label{def:scm_solution_functions}
   Given an SCM $M$, we call a measurable mapping
    $g^S :\Xc_{V\sm S}\times\Xc_{W} \rightarrow \Xc_{S}$ a \textbf{solution function of $M$ \wrt $S\subseteq V$} if
    for $\Prb(X_W)$-\almostall $x_W\in\Xc_W$ and for all $x_{V\sm S}\in\Xc_{V\sm S}$,\footnote{The ordering of the two quantifiers does matter and cannot be changed in the definition. See e.g.,  \citet[Lemma~F.11]{bongers2021foundations} for more details.} one has that $g^S(x_{V\sm S},x_W)$ satisfies the structural equations for $S$, i.e.,
    \[
      g^S(x_{V\sm S},x_W)=f_{S}(x_{V\sm S},g^S(x_{V\sm S},x_W),x_W).
    \]
    When $S=V$, we denote $g^V$ by $g$, and call $g$ a \textbf{solution function of} $M$.
\end{definition}

\begin{definition}[Unique solvability]\label{def:scm_unique_solvability}
  An SCM $M$ is called \textbf{uniquely solvable} \textbf{w.r.t}.\ $S\subseteq V$ if it has a solution function \wrt $S$ that is \textbf{essentially unique} in the sense that if $g^{S}$ and $\tilde{g}^{S}$ both satisfy the structural equations for $S$, then for $\Prb(X_W)$-\almostall $x_W\in\Xc_W$ and for all
  $x_{V\sm S}\in\Xc_{V\sm S}$, one has $g^{S}(x_{V\sm S},x_W)=\tilde{g}^{S}(x_{V\sm S},x_W)$.
  If $M$ has an essentially unique solution function \wrt $V$, we call it  \textbf{uniquely solvable}.
\end{definition}

Note that a subset $S$ does not inherit the unique solvability from the unique solvability of any of its supersets in general \citep[Appendix B.2]{bongers2021foundations}.

\begin{definition}[Simple SCMs]\label{def:simple_SCM}
  An SCM $M$ is called a \textbf{simple SCM} if it is uniquely solvable \wrt each subset $S\subseteq V$.
\end{definition}

Note that \emph{all acyclic SCMs are simple} \citep[Proposition 3.4]{bongers2021foundations}. One benefit of introducing the class of simple SCMs is that it preserves the most convenient properties of acyclic SCMs but allows for weak cycles. We focus on simple SCMs in this work so that we can avoid mathematical technicalities and focus on conceptual issues (\cf \cref{ass,rem:assum}).

For a simple SCM, we can plug the solution function of one component into other parts of the model so that we can get a simple SCM that ``marginalizes'' it while preserving the causal semantics of the remaining variables \citep{bongers2021foundations}.

\begin{definition}[Marginalization]\label{def:marg}
  Let $M$ be a simple SCM, $L\subseteq V$, and $g^L$ be a solution function of $M$ \wrt $L$. Then we call $M_{\sm L}=(V\sm L,W,\Xc_{V\sm L}\times \Xc_W,\Prb,\tilde{f})$ with
  \[
    \tilde{f}(x_{V\sm L},x_W)=f_{V\sm L}(x_{V\sm L},g^L(x_{V\sm L},x_W),x_W)
  \]
  a \textbf{marginalization} of $M$ over $L$.
\end{definition}

For SCMs, one can introduce a hierarchy of equivalence relations. Observational, interventional, and counterfactual equivalence mean that two SCMs have the same observational, interventional, and counterfactual semantics, respectively (see \citet[Definitions 4.1, 4.3 and 4.5]{bongers2021foundations} or Definition~\ref{def:cio_equivalence}). Counterfactual equivalence is strictly stronger than interventional equivalence, and interventional equivalence is strictly stronger than observational equivalence. \emph{Equivalence of SCMs} is also an equivalence notion stronger than the three equivalence notions mentioned above.

\begin{definition}[Equivalence]\label{def:equiv} An SCM $M=\SCM$ is \textbf{equivalent} to an SCM $\tilde{M}=(V,W,\Xc,\Prb,\tilde{f})$ if for all $v\in V$, for $\Prb$-a.a.\ $x_W\in\Xc_W$ and for all $x_V\in\Xc_V$,
  \[
    x_v=f_v(x_V,x_W)\quad \Longleftrightarrow \quad x_v=\tilde{f}_v(x_V,x_W).
  \]
  If $M$ and $\tilde{M}$ are equivalent, we write $M\equiv\tilde{M}$.\footnote{For defining equivalence of SCMs, one does not need to assume that $M$ and $\tilde{M}$ have the same sets of exogenous nodes but only needs the two sets to be isomorphic. For simplicity, we do not specify this in detail.}
\end{definition} 

A simple SCM induces a collection of distributions that includes its observational distribution and interventional distributions. Besides, one can describe counterfactual semantics of an SCM by performing interventions in its twin SCM (see Definition \ref{def:twin}). Potential outcomes are also used to express counterfactual semantics \citep{hernan2020causal,rubin1974estimating}. We can define potential outcomes via simple SCMs \citep{bongers2021foundations}.\footnote{In most of the potential outcome literature, potential outcomes are taken as primitives and are not induced by an underlying SCM.}

\begin{definition}[Potential outcome]\label{def:po}
    Let $M=\SCM$ be a simple SCM, $T\subseteq V$ be a subset, and $x_T\in \Xc_T$ be a value. The potential outcome under the perfect intervention $\Do(X_T=x_T)$ is defined as $X_{V}(x_T)\coloneqq (g^{V\sm T}(x_T,X_W),x_T)$, where $g^{V\sm T}: \Xc_{T} \times \Xc_{W}\rightarrow \Xc_{V\sm T} $ is the (essentially unique) solution function of $M$ \wrt $V\sm T$ and $X_W$ is a (fixed) random variable such that $X_W\sim \Prb$.
\end{definition}

\begin{definition}[Potential-outcome equivalence]\label{def:po_equiv}
    We say two SCMs $M^1=(V,W^1,\Xc^1,\Prb^1,f^1)$ and $M^2=(V,W^2,\Xc^2,\Prb^2,f^2)$ are \textbf{potential-outcome equivalent} if $\Xc^1_V=\Xc^2_V$ and 
    \[
    \Prb_{M^1}(\{X_{V}(x_{T_i})\}_{1\le i\le n})=\Prb_{M^2}(\{X_{V}(x_{T_i})\}_{1\le i\le n})
    \] 
    for all $T_i\subseteq V$, all $x_{T_{i}}\in \Xc_{T_i}$ and $i=1,\ldots,n$.
\end{definition}

Now we present the notations of (conditional) interventional distributions that we use in the current manuscript.

\begin{notation}[(Conditional) Interventional distributions]\label{nota:prob}
We use $\Prb_M(X_V,X_W)$ to denote the unique probability distribution of $(X_V,X_W)$ induced by a simple SCM $M$. Let $S\subseteq V$, $O\coloneqq V\sm S$, and $T\subseteq V$ with $T\cap S=\emptyset$. For a measurable subseteq $\Sc\subseteq \Xc_S$, we use
\[
\begin{aligned}
    \Prb_M(X_{O\sm T}\mid \Do(X_T=x_T),X_S\in\Si)&\coloneqq \Prb_{M_{\Do(X_T=x_T)}}(X_{O\sm T}\mid X_S\in\Si)\\
    &\coloneqq
        \frac{\Prb_{M_{\Do(X_T=x_T)}}(X_{O\sm T},X_S\in\Si)}{\Prb_{M_{\Do(X_T=x_T)}}(X_S\in\Si)}
\end{aligned}
\]
to represent the probability distribution of $X_O$ when first intervening on $X_T=x_T$ and second conditioning on $X_S\in\Si$, assuming $\Prb_{M_{\Do(X_T=x_T)}}(X_S\in\Si)>0$.\footnote{``First'' and ``second'' here refer to the order of applying the operations on the SCM, which may not coincide with the chronological order of these operations in the data-generating process that the SCM is modeling.} Using the notation of potential outcomes, we have
\[
\Prb_M(X_{O\sm T}\mid \Do(X_T=x_T),X_S\in\Si)=\Prb_M(X_{O\sm T}(x_T)\mid X_S(x_T)\in\Si),
\]
which is not equal to $\Prb_M(X_{O\sm T}(x_T)\mid X_S\in\Si)$ in general if $T\cap \anc_{G(M)}(S)\ne \emptyset$. If $T=\emptyset$, then $\Prb_M(X_{O\sm T}\mid \Do(X_T=x_T),X_S\in\Si)=\Prb_M(X_{O}\mid X_S\in \Si)$ and $X_{O\sm T}(x_T)=X_O$.
\end{notation}

\subsection{Common cause and confounding}

``Confounder'', ``common cause'' and ``confounding'' have diverse and vague meanings in different literature \citep{vanderweele13defconfounder}. For conceptual clarity, we give formal definitions for these notions in the setting of acyclic SCMs.\footnote{Providing formal definitions of these concepts for cyclic models is an open research question and is not within the scope of the current manuscript.} 

\begin{definition}[Common cause and confounding]\label{def:causal_relation}
Let $M=\SCM$ be an acyclic SCM and $A,B,C\in V$ be distinct intervenable nodes.
\begin{enumerate}
    \item We say that $X_C$ is a \textbf{common cause} of $X_A$ and $X_B$ according to $M$ if there exist $x_A\in \Xc_A$, $x_B\in \Xc_B$, $x_C\in \Xc_C$, and $x_C^\prime\in \Xc_C$ such that
    \[
        \begin{aligned}
            &\Prb_M(X_A\mid \Do(X_C=x_C),\Do(X_B=x_B)) \ne \Prb_M(X_A\mid \Do(X_B=x_B)) \text{ and }\\
            & \Prb_M(X_B\mid \Do(X_C=x_C^\prime),\Do(X_A=x_A)) \ne \Prb_M(X_B\mid \Do(X_A=x_A)).
        \end{aligned}
    \]

    \item Assume that for all $x_B\in \Xc_B$, we have
    $\Prb_M(X_A)=\Prb_M(X_A\mid \Do(X_B=x_B)).$
    We say that there is \textbf{confounding bias} between $X_A$ and $X_B$, if there exists a measurable subset $\Ac\subseteq \Xc_A$ with $\Prb_M(X_A\in \Ac)>0$ such that for $x_A\in \Ac$
\[
    \Prb_M(X_B\mid \Do(X_A=x_A))\ne\Prb_M(X_B\mid X_A=x_A).
\]


\end{enumerate}
\end{definition}

\begin{remark}\label{rem:conf}
    \begin{enumerate}
        \item Note that a common cause has to be an endogenous variable instead of an exogenous variable, because we treat exogenous variables as non-causal (non-intervenable). For example, $X_C$ is a common cause of $X_A$ and $X_B$ according to $M^{1}$ and $M^{2}$, but not according to $M^3$ and $M^3$, whose graphs are shown in Figure~\ref{fig:confounder}.

        \begin{figure}[ht]
\centering
\begin{tikzpicture}[scale=0.8, transform shape]
  \begin{scope}[xshift=0]
    \node[ndout] (C) at (1,3) {$C$};
    \node[ndout] (A) at (2,2) {$A$};
    \node[ndout] (B) at (0,2) {$B$};
    \draw[arout] (C) to (A);
    \draw[arout] (C) to (B);
    \node at (1,1) {$G(M^1)$};
  \end{scope}

  \begin{scope}[xshift=4cm]
    \node[ndlat] (C) at (1,3) {$C$};
    \node[ndout] (A) at (2,2) {$A$};
    \node[ndout] (B) at (0,2) {$B$};
    \draw[arout] (C) to (A);
    \draw[arout] (C) to (B);
    \node at (1,1) {$G(M^2)$};
  \end{scope}

   \begin{scope}[xshift=8cm]
    \node[ndexo] (C) at (1,3) {$C$};
    \node[ndout] (A) at (2,2) {$A$};
    \node[ndout] (B) at (0,2) {$B$};
    \draw[arout] (C) to (A);
    \draw[arout] (C) to (B);
    \node at (1,1) {$G(M^3)$};
  \end{scope}

  \begin{scope}[xshift=12cm]
    \node[ndash] (C) at (1,3) {$C$};
    \node[ndout] (A) at (2,2) {$A$};
    \node[ndout] (B) at (0,2) {$B$};
    \draw[arout] (C) to (A);
    \draw[arout] (C) to (B);
    \node at (1,1) {$G(M^4)$};
  \end{scope}

\end{tikzpicture}
\caption{The causal graphs of the SCMs $M^i$, $i=1,2,3,4$, where $X_C$ is a common cause of $X_A$ and $X_B$ according to $M^1$ and $M^2$ but not to $M^3$ and $M^4$. }
\label{fig:confounder}
\end{figure}

\item If $X_A$ and $X_B$ have confounding bias, then there must be a bidirected edge between $A$ and $B$ in $G(M_{\sm (V\sm \{A,B\})})$. The bidirected edge can come from either
\begin{enumerate}
    \item[(i)] a common cause $X_C$, or

    \item[(ii)] a non-intervenable variable $X_E$ such that there are directed paths from $E$ to $A$ and $B$ in $G(M)^a$, which do not intersect $B$ and $A$ respectively,\footnote{We shall show in the next section that this can represent selection bias. Therefore, a bidirected edge represents the possible existence of ``non-causal dependency'' between $X_A$ and $X_B$, which can arise from either common cause, selection bias, or any combination of the two, or in other ways.} or

    \item[(iii)] any combination of the above two items.
\end{enumerate}
If there is no bidirected edge between $A$ and $B$ in $G(M_{\sm (V\sm \{A,B\})})$, then by the do-calculus there is no confounding bias between $X_A$ and $X_B$ according to $M$, i.e., we have for a.a.\ $x_A\in \Xc_A$ (under the assumption that $B\notin \anc_{G(M_{\sm (V\sm \{A,B\})})}(A)$)
\[
    \Prb_M(X_B\mid \Do(X_A=x_A))=\Prb_M(X_B\mid X_A=x_A).
\]

\item In the potential outcome literature, the unconfoundedness assumption is usually stated as (for the special case of finite discrete outcome variables and binary ``treatments'') that $X_B$ does not cause $X_A$ (i.e.\ $X_A(x_B)=X_A$ for all $x_B$) and
\begin{equation}\label{eqn:po_unconfound}
    \forall x_A\in\{0,1\}: X_A\ind X_B(x_A).
\end{equation}
If we assume that there is an underlying acyclic SCM $M=\SCM$ inducing the potential outcomes $X_A$ and $X_B(x_A)$, then equation~\eqref{eqn:po_unconfound} (under a positivity assumption) implies
\begin{equation}\label{eqn:po_unconfound_1}
    \Prb_M(X_B\mid \Do(X_A=x_A))=\Prb_M(X_B\mid X_A=x_A) \text{ for all }x_A\in \{0,1\}.
\end{equation}
This means that $X_A$ and $X_B$ have no confounding bias according to $M$ in the sense of Definition~\ref{def:causal_relation}. See \cref{rem:pf} for a proof.
\end{enumerate}
\end{remark}

\section{Theory}\label{sec:cond}

In this section, we develop conditioning operations for both SCMs and DMGs. We first introduce s-SCMs, which explicitly encode selection mechanisms, and then define the corresponding conditioning operations and analyze their main properties: the induced causal semantics, closure of relevant model classes, commutation with marginalization, intervention, and further conditioning, the associated loss of information, graphical separation criteria and Markov properties, and the compatibility between the two conditioning operations. After the mathematical development, we conclude with several important caveats on the modeling side.

In the whole section, we make the following assumption (see \cref{rem:assum}):
\begin{assumption}\label{ass}
$M=\SCM$ is a simple SCM such that $\Prb_M(X_S\in\Si)>0$ for some $S\subseteq V$ and measurable subset $\Si\subseteq \Xc_S$.
\end{assumption}

\subsection{SCM with selection mechanism}\label{sec:sSCM}

First, we give a definition for SCMs with selection mechanisms.

\begin{definition}[SCMs with selection mechanism]\label{def:SCM_selection}
We call $M^\Si\coloneqq(M,X_S\in\Si)$ an \textbf{s-SCM} or \textbf{SCM with a selection mechanism}, where $M=\SCM$ is an SCM, $S\subseteq V$ is a subset of endogenous nodes, and $\Si\subseteq \Xc_S$ is a measurable subset. We call $M^\Si$ a simple s-SCM if $M$ is simple. The causal semantics of $M^\Si$ are defined as:\footnote{We do not specify the counterfactual semantics of s-SCMs.}
\begin{enumerate}
    \item Observational distribution:
    \[\Prb_{M^\Si}(X_{V})\coloneqq\Prb_M(X_{V}\mid X_S\in\Si);\]

    \item Interventional distributions: for $T\subseteq V\sm S$ and $x_T\in\Xc_{T}$ with $\Prb_M(X_S\in \Si\mid \Do(X_T=x_T))>0$, we define \[\Prb_{M^\Si}(X_{V\sm T}\mid \Do(X_T=x_T))\coloneqq\Prb_M(X_{V\sm T }\mid \Do(X_T=x_T) ,X_S\in\Si).\]
\end{enumerate}
\end{definition}

If $\Prb_M(X_S\in\Si)=1$, then $M^{\Si}=(M,X_S\in \Si)$ is observationally equivalent to $M$. We can draw a causal graph of an SCM with a selection mechanism by Definition \ref{def:gra}, and in addition use triangles to represent the nodes in $S$ like in Figure \ref{fig:marg_abstract}.

To gain some intuition, one can imagine a simple SCM as representing a data-generating process where the $i$-th sample is generated as follows: first sampling $X_w^{(i)}\sim \Prb(X_w)$ for each $w\in W$, and then using the solution function $g:\Xc_W\rightarrow \Xc_V$ to generate $X_v^{(i)}$ for each $v\in V$. An SCM with a selection mechanism is adding a rejection step to the above sampling procedure. More precisely, we have the following rejection sampler Algorithm~\ref{alg:sampler}. This sampler generates the observational distribution of $M^\Si$. To generate interventional distributions of $M^\Si$, one just needs to replace $M$ with the corresponding intervened submodel $M_{\Do(X_T=x_T)}$, which changes the solution function but leaves the other parts of the algorithm invariant.
\begin{algorithm}
\begin{algorithmic}
\caption{Sampler for an SCM $M$ with a selection mechanism $X_S\in\Si$}
\label{alg:sampler}
\Require $n \geq 1$
\State $i \gets 1$
\While{$i \leq n$}
        \For{each $w \in W$}
            \State sample $X_w^{(i)} \sim \Prb_M\left(X_w\right)$
        \EndFor
        \For{each $v \in V$}
            \State calculate $X_v^{(i)} \leftarrow g_v\left(X_W^{(i)}\right)$
        \EndFor
        \If{$X_S^{(i)} \in \Si$}
            \State output $X_{V}^{(i)}$
            \State $i \gets i + 1$
        \EndIf
\EndWhile
\end{algorithmic}
\end{algorithm}

\subsection{Conditioning operation for SCMs}\label{sec:condSCM}

\subsubsection{Definition}\label{sec:3.1}

Suppose that we condition on $X_S\in \Si$. Then, the conditioning operation can be divided into three steps:
\begin{enumerate}
  \item[(i)] merging exogenous variables that become dependent given the observation $X_{S}\in\Si$;
  \item[(ii)] updating the exogenous probability distribution $\Prb(X_W)$ to the posterior $\Prb_M(X_W\mid X_S\in \Si)$ given the observation $X_{S}\in\Si$;
  \item[(iii)] marginalizing out the selection variables $X_S$.\footnote{Selection variables are marginalized out because we consider latent selection.}
\end{enumerate}

Before giving a formal definition of the conditioned SCM, we discuss item (i). For the reason why we need to consider merging exogenous random variables, see \cref{app:variants_cond}. There is a ``finest'' partition of $W$ given $X_S\in \Si$:
\begin{restatable}[Finest partition]{lemma}{partition}\label{lem:partition}
    Let $\Pf_{\Si}$ denote the set of partitions $\Ic=\{I_1,\ldots, I_p\}$ of $W$ such that $\{X_{I_i}\}_{i=1}^p$ are mutually independent under $\tilde{\Prb}(X_W)=\Prb(X_W\mid X_S\in \Si)$. Then there exists $\Hc\in \Pf_{\Si}$ such that $\Hc$ is a finer partition than any other partition $\Ic \in \Pf_{\Si}$.
\end{restatable}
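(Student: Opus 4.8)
The plan is to exhibit $\Hc$ as the \emph{meet} (coarsest common refinement) of \emph{all} members of $\Pf_{\Si}$, after verifying that $\Pf_{\Si}$ is closed under binary meets. Throughout I order partitions by refinement, writing $\Ic\preceq\Jc$ when every block of $\Ic$ is contained in a block of $\Jc$ (``$\Ic$ is finer''); for two partitions $\Ic,\Jc$ of $W$ their meet $\Ic\wedge\Jc$ is the partition whose blocks are the nonempty intersections $I\cap J$ with $I\in\Ic$, $J\in\Jc$, and it is the coarsest partition refining both. Since $W$ is finite, $\Pf_{\Si}$ is a finite set, and it is nonempty because the one-block partition $\{W\}$ trivially belongs to it (mutual independence of a single group is vacuous). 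Recall also that $\tilde{\Prb}(X_W)=\Prb(X_W\mid X_S\in\Si)$ is a bona fide probability measure by \cref{ass}, so ``mutually independent under $\tilde{\Prb}$'' is well posed.

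The heart of the argument is the closure claim: if $\Ic,\Jc\in\Pf_{\Si}$ then $\Ic\wedge\Jc\in\Pf_{\Si}$. I would prove it at the level of laws. Mutual independence of $\{X_{I_i}\}_i$ means the law of $X_W$ equals the product $\bigotimes_i \tilde{\Prb}(X_{I_i})$ of its $\Ic$-marginals; similarly $\Jc$-independence gives $\bigotimes_j \tilde{\Prb}(X_{J_j})$. Now fix a block $I_i$ and consider the groups $X_{I_i\cap J_j}$ as $j$ ranges: each is simply the coordinate projection of $X_{J_j}$ onto the indices in $I_i\cap J_j$, hence a measurable function of $X_{J_j}$. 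Because the vectors $\{X_{J_j}\}_j$ are mutually independent and live on disjoint index sets, the family $\{X_{I_i\cap J_j}\}_j$ of functions of them is again mutually independent; equivalently $\tilde{\Prb}(X_{I_i})=\bigotimes_j \tilde{\Prb}(X_{I_i\cap J_j})$. Substituting this factorization of each $I_i$-marginal into $\tilde{\Prb}(X_W)=\bigotimes_i \tilde{\Prb}(X_{I_i})$ yields $\tilde{\Prb}(X_W)=\bigotimes_{i,j}\tilde{\Prb}(X_{I_i\cap J_j})$, which is exactly mutual independence of the blocks of $\Ic\wedge\Jc$. Note this uses only the principle that functions of independent variables over disjoint index sets are independent, together with product-measure algebra, so it needs no positivity or density assumptions.

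With closure in hand the conclusion is immediate: enumerate $\Pf_{\Si}=\{\Ic^{(1)},\dots,\Ic^{(m)}\}$ and set $\Hc\coloneqq\Ic^{(1)}\wedge\cdots\wedge\Ic^{(m)}$. By induction on the closure claim $\Hc\in\Pf_{\Si}$, and since each binary meet refines its arguments, $\Hc\preceq\Ic^{(k)}$ for every $k$; thus $\Hc$ is finer than every element of $\Pf_{\Si}$, as required. I expect the only genuinely delicate point to be the bookkeeping that promotes the two \emph{pairwise-looking} factorizations into a single \emph{mutual} independence over the refined blocks; phrasing everything through equality of product measures (rather than through pairwise independence statements) is what keeps this step clean and sidesteps the usual pairwise-versus-mutual pitfalls.
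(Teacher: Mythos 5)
Your proposal is correct and takes essentially the same route as the paper: both establish that $\Pf_{\Si}$ is closed under the common-refinement-by-intersections operation via the factorization $\tilde{\Prb}(X_W)=\bigotimes_i\tilde{\Prb}(X_{I_i})=\bigotimes_{i,j}\tilde{\Prb}(X_{I_i\cap J_j})$ and then use finiteness to extract the extremal element (the paper phrases this as a finite join semi-lattice with a largest element, you as an iterated meet, but the operation and order are the same up to naming conventions). Your extra care in justifying $\tilde{\Prb}(X_{I_i})=\bigotimes_j\tilde{\Prb}(X_{I_i\cap J_j})$ from the $\Jc$-independence is exactly the step the paper passes over quickly.
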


We now present the formal definition of the conditioned SCM.


\begin{definition}[Conditioned SCM] \label{def:cdSCM}
Assume Assumption \ref{ass}.
Let $g:\Xc_W\rightarrow \Xc_V$ and $g^S:\Xc_{V\sm S}\times\Xc_W\rightarrow \Xc_{S}$ be the (essentially unique) solution functions of $M$ \wrt $V$ and $S$ respectively. We define the \textbf{conditioned SCM}
$M_{|X_S\in\Si}\coloneqq \left(\hat{V}, \hat{W}, \hat{\Xc}, \hat{\Prb}, \hat{f}\right)$ by:
\begin{enumerate}[label=(\roman*)]
  \item  $\hat{V}\coloneqq V\setminus S$;

  \item  $\hat{W}\coloneqq \{\hat{w}_1,\ldots, \hat{w}_n\}$ is the finest partition of $W$ such that $\Prb_M(X_W\mid X_S\in \Si)=\bigotimes_{i=1}^n\Prb_M(X_{\hat{w}_i}\mid X_S\in \Si)$;

  \item  $\hat{\Xc}\coloneqq \Xc_{\hat{V}}\times \Xc_{\hat{W}}\coloneqq \Xc_{\hat{V}} \times \bigtimes_{i=1}^n \Xc_{\hat{w}_i}$, where $\Xc_{\hat{w}_i}\coloneqq \bigtimes_{w\in \hat{w}_i}\Xc_w$;

  \item  $\hat{\Prb}\coloneqq \bigotimes_{i=1}^{n}\hat{\Prb}(X_{\hat{w}_i})$, where $\hat{\Prb}(X_{\hat{w}_i})\coloneqq \Prb_M(X_{\hat{w}_i}\mid X_S\in \Si)$;

  \item  $\hat{f}(x_{\hat{V}},x_{\hat{W}})
  \coloneqq f_{\hat{V}}(x_{\hat{V}}, g^S(x_{\hat{V}},
  x_{\hat{w}_1},\ldots, x_{\hat{w}_n}),x_{\hat{w}_1},\ldots, x_{\hat{w}_n}).$
\end{enumerate}
\end{definition}

It is easy to check that $M_{|X_S\in\Si}$ is indeed an SCM. We mark nodes in $\anc_{G(M)}(S)$ as non-intervenable in $M_{|X_S\in \Si}$.

\begin{remark}\label{rem:defcond}
\begin{enumerate}
  \item In Definition \ref{def:cdSCM}, $M_{|X_S\in\Si}$ actually depends on the choice of $g^S$, but different versions are equivalent (in the sense of Definition \ref{def:equiv}). Here we abuse terminology and call
  $M_{|X_S\in\Si}$ ``the conditioned SCM'' of $M$ given $X_S\in \Si$ rather
  than ``a conditioned SCM'', and implicitly work with equivalence classes of SCMs. Note that if $M$ and $\tilde{M}$ are equivalent, then $M_{|X_S\in\Si}$ is equivalent to $\tilde{M}_{|X_S\in\Si}$.

  \item The definition of $\hat{W}$ does not depend on the choice of $g$ but depends on $S$ and $\Si$. Note that if $\Prb_M(X_S\in \Si)=1$, then $\hat{W}\cong W$ and conditioning on the selection mechanism $X_S\in\Si$ reduces to marginalizing out $S$. 
  
  \item If $w\in W\sm \anc_{G^a(M)}(S)$ or $w\in W\sm \anc_{G^a(M_{\sm(V\sm S)})}(S)$, then there exists $\hat{w}_i$ such that $\hat{w}_i=\{w\}$. In other words, if node $w$ is not an ancestor of $S$ in $G^a(M)$ or is not an ancestor of $S$ in $G^a(M_{\sm(V\sm S)})$, then node $w$ is not merged with any other nodes in $W$. In these cases, one has $\hat{\Prb}(X_{\hat{w}_i})=\Prb(X_{w})$.

  \item Since marginalization preserves simplicity and $\Prb$-null sets are also $\Prb_M(X_W\mid X_S\in \Si)$-null sets, $M_{|X_S\in\Si}$ is simple (\cf Proposition \ref{prop:modelcalss}).
\end{enumerate}
\end{remark}

\begin{notation}
  We often denote $M_{|X_S\in\Si}$ by $M_{|\Si}$ if it is clear from the context that $\Si$ is a measurable subset in which the variable $X_S$ takes values.
\end{notation}

\begin{example}[Example~\ref{ex:car_mechanic} continued]\label{ex:car_mechanic_continue}
    We consider Example~\ref{ex:car_mechanic} in the Introduction. Let $M$ be the SCM in Example~\ref{ex:car_mechanic}. Then $\tilde{M}=(M_{|S_0=0})_{\sm \{B_0, E_0\}}=(M_{\sm \{B_0, E_0\}})_{|S_0=0}$.
\end{example}

\begin{example}[conditioning operation for SCMs]
Consider the following SCMs with nonzero real coefficients $a_i$ for $i=1,\ldots,6$ such that $a_5+a_6\ne 0$:

\begin{minipage}{0.5\linewidth}
\begin{equation*}
    M^1: \left\{
    \begin{aligned}
        X & = E_1 \sim \mathrm{Uni}([0,1]) \\
        Y & = E_2 \sim \mathrm{Uni}([0,1]) \\
        S & = X + Y \\
        Z_1 & = a_1 X + E_3 \\
        Z_2 & = a_2 Z_1 + E_4 \\
        Z_3 & = a_3 Z_1 + a_4 Z_2 + a_5 S + a_6 Y + E_5
    \end{aligned}
    \right.
\end{equation*}
\vspace{2 pt}
\end{minipage}
\begin{minipage}{0.5\linewidth}
\begin{equation*}
    M^2:  \left\{
    \begin{aligned}
        X & = E_1 \sim \mathrm{Uni}([0,1]) \\
        Y & = E_2 \sim \mathrm{Uni}([0,1]) \\
        S & = \Ibbm (X+Y\geq 0.8) \\
        Z_1 & = a_1 X + E_3 \\
        Z_2 & = a_2 Z_1 + E_4 \\
        Z_3 & = a_3 Z_1 + a_4 Z_2 + a_5 S + a_6 Y + E_5.
    \end{aligned}
    \right.
\end{equation*}
\vspace{2 pt}
\end{minipage}
With $\pr_i$ denoting the projection to the $i$-th coordinate and $D\coloneqq \{(x,y)\in[0,1]^2: x+y\geq 0.8\}$, we then have

\begin{minipage}{0.5\linewidth}
    \begin{equation*}
    M_{|S\geq 0.8}^1:\left\{\begin{aligned}
              E_{1,2}&\sim\Uni(D)  \\
              X&=\pr_1(E_{1,2}) \\
              Y&=\pr_2(E_{1,2})\\
              Z_1 & = a_1 X + E_3 \\
              Z_2 & = a_2 Z_1 + E_4 \\
              Z_3 & = a_3 Z_1 + a_4 Z_2 + a_5 X\\
              &\quad + (a_5+a_6) Y + E_5,
    \end{aligned}\right.
 \end{equation*}
 \vspace{2 pt}
\end{minipage}
\begin{minipage}{0.5\linewidth}
\begin{equation*}
    M_{|S=1}^2:\left\{\begin{aligned}
              E_{1,2}&\sim\Uni(D)  \\
              X&=\pr_1(E_{1,2}) \\
              Y&=\pr_2(E_{1,2})\\
              Z_1 & = a_1 X + E_3 \\
              Z_2 & = a_2 Z_1 + E_4 \\
              Z_3 & = a_3 Z_1 + a_4 Z_2 \\
              & \quad + a_5  + a_6 Y + E_5.
    \end{aligned}\right.
\end{equation*}
 \vspace{2 pt}
\end{minipage}
We draw the (augmented) causal graphs of $M^1$, $M^2$, $M^1_{|S\geq 0.8}$ and $M^2_{|S=1}$ as shown in Figure \ref{fig:cdscm}. Note that
\begin{enumerate}[label=(\roman*)]
    \item bidirected edges can not only represent latent common causes but also latent selection bias;
    \item given two SCMs with the same causal graphs, the conditioned SCMs can have different graphs.
\end{enumerate}

If one applies Definition~\ref{def:scm_hard_intervention} to perform interventions on ancestors of $S$ in $M^1_{|S\geq 0.8}$ and $M^2_{|S=1}$ (e.g., $X$ and $Y$ in Figure~\ref{fig:cdscm}), the interventional distribution will correspond to a \emph{counterfactual} distribution of $M$. For instance, $\Prb_{M^1_{|S\geq 0.8}}(Z_3\mid \Do(X=x))=\Prb_{M^1}(Z_3(x)\mid S\geq 0.8)\ne \Prb_{M^1}(Z_3\mid \Do(X=x), S\geq 0.8)$. See Theorem~\ref{thm:causal_semantics} and Section \ref{sec:caveat} for details. This is the reason why we mark the ancestors of $S$ as dashed (non-intervenable) in $G^a(M^1_{|S\geq 0.8})$, $G^a(M^2_{|S=1})$, $G(M^1_{|S\geq 0.8})$ and $G(M^2_{|S=1})$.

\begin{figure}[ht]
\centering
\begin{tikzpicture}[scale=0.8, transform shape]
  \begin{scope}[xshift=0]
    \node[ndexo] (E1) at (1,4.1) {$E_1$};
    \node[ndexo] (E2) at (3,4.1) {$E_2$};
    \node[ndexo] (E3) at (0.3,2.2) {$E_3$};
    \node[ndexo] (E4) at (3.7,2.2) {$E_5$};
    \node[ndexo] (E5) at (0.5,0) {$E_4$};
    \node[ndout] (X) at (1,2.8) {$X$};
    \node[ndout] (Y) at (3,2.8) {$Y$};
    \node[ndout] (S) at (2,1.8) {$S$};
    \node[ndout] (Z1) at (1,1) {$Z_1$};
    \node[ndout] (Z2) at (2,0) {$Z_2$};
    \node[ndout] (Z3) at (3,1) {$Z_3$};
    \draw[arout] (E1) to (X);
    \draw[arout] (E2) to (Y);
    \draw[arout] (E3) to (Z1);
    \draw[arout] (E4) to (Z3);
    \draw[arout] (E5) to (Z2);
    \draw[arout] (X) to (S);
    \draw[arout] (Y) to (S);
    \draw[arout] (S) to (Z3);
    \draw[arout] (X) to (Z1);
    \draw[arout] (Z1) to (Z3);
    \draw[arout] (Z2) to (Z3);
    \draw[arout] (Z1) to (Z2);
    \draw[arout] (Y) to (Z3);
    \node at (2,-1) {$G^a(M^1)=G^a(M^2)$};
  \end{scope}
  \begin{scope}[xshift=1cm]
    \node[ndexo] (E12) at (7,4.1) {$E_{\{1,2\}}$};
    \node[ndexo] (E3') at (5.3,2.2) {$E_3$};
    \node[ndexo] (E4') at (8.7,2.2) {$E_5$};
    \node[ndexo] (E5') at (5.5,0) {$E_4$};
    \node[ndash] (X') at (6,2.8) {$X$};
    \node[ndash] (Y') at (8,2.8) {$Y$};
    \node[ndout] (Z1') at (6,1) {$Z_1$};
    \node[ndout] (Z2') at (7,0) {$Z_2$};
    \node[ndout] (Z3') at (8,1) {$Z_3$};
    \draw[arout] (E12) to (X');
    \draw[arout] (E12) to (Y');
    \draw[arout] (E3') to (Z1');
    \draw[arout] (E4') to (Z3');
    \draw[arout] (E5') to (Z2');
    \draw[arout] (X') to (Z1');
    \draw[arout] (X') to (Z3');
    \draw[arout] (Z1') to (Z3');
    \draw[arout] (Z2') to (Z3');
    \draw[arout] (Z1') to (Z2');
    \draw[arout] (Y') to (Z3');
  \node at (7,-1) {$G^a(M_{|S\geq 0.8}^1)$};
  \end{scope}
  \begin{scope}[xshift=7cm]
    \node[ndexo] (E12) at (7,4.1) {$E_{\{1,2\}}$};
    \node[ndexo] (E3') at (5.3,2.2) {$E_3$};
    \node[ndexo] (E4') at (8.7,2.2) {$E_5$};
    \node[ndexo] (E5') at (5.5,0) {$E_4$};
    \node[ndash] (X') at (6,2.8) {$X$};
    \node[ndash] (Y') at (8,2.8) {$Y$};
    \node[ndout] (Z1') at (6,1) {$Z_1$};
    \node[ndout] (Z2') at (7,0) {$Z_2$};
    \node[ndout] (Z3') at (8,1) {$Z_3$};
    \draw[arout] (E12) to (X');
    \draw[arout] (E12) to (Y');
    \draw[arout] (E3') to (Z1');
    \draw[arout] (E4') to (Z3');
    \draw[arout] (X') to (Z1');
    \draw[arout] (Z1') to (Z3');
    \draw[arout] (Z2') to (Z3');
    \draw[arout] (Z1') to (Z2');
    \draw[arout] (Y') to (Z3');
    \draw[arout] (E5') to (Z2');
  \node at (7,-1) {$G^a(M_{|S=1}^2)$};
  \end{scope}

  \begin{scope}[xshift=0,yshift=-5cm]
    \node[ndout] (X) at (1,2.8) {$X$};
    \node[ndout] (Y) at (3,2.8) {$Y$};
    \node[ndout] (S) at (2,1.8) {$S$};
    \node[ndout] (Z1) at (1,1) {$Z_1$};
    \node[ndout] (Z2) at (2,0) {$Z_2$};
    \node[ndout] (Z3) at (3,1) {$Z_3$};
    \draw[arout] (X) to (S);
    \draw[arout] (Y) to (S);
    \draw[arout] (S) to (Z3);
    \draw[arout] (X) to (Z1);
    \draw[arout] (Z1) to (Z3);
    \draw[arout] (Z2) to (Z3);
    \draw[arout] (Z1) to (Z2);
    \draw[arout] (Y) to (Z3);
    \node at (2,-1) {$G(M^1)=G(M^2)$};
  \end{scope}

  \begin{scope}[xshift=1cm, yshift=-5cm]
    \node[ndash] (X') at (6,2.8) {$X$};
    \node[ndash] (Y') at (8,2.8) {$Y$};
    \node[ndout] (Z1') at (6,1) {$Z_1$};
    \node[ndout] (Z2') at (7,0) {$Z_2$};
    \node[ndout] (Z3') at (8,1) {$Z_3$};
    \draw[arout] (X') to (Z1');
    \draw[arout] (X') to (Z3');
    \draw[arout] (Z1') to (Z3');
    \draw[arout] (Z2') to (Z3');
    \draw[arout] (Z1') to (Z2');
    \draw[arout] (Y') to (Z3');
    \draw[arlat, bend left] (X') to (Y');
  \node at (7,-1) {$G(M_{|S\geq 0.8}^1)$};
  \end{scope}
  \begin{scope}[xshift=7cm, yshift=-5cm]
    \node[ndash] (X') at (6,2.8) {$X$};
    \node[ndash] (Y') at (8,2.8) {$Y$};
    \node[ndout] (Z1') at (6,1) {$Z_1$};
    \node[ndout] (Z2') at (7,0) {$Z_2$};
    \node[ndout] (Z3') at (8,1) {$Z_3$};
    \draw[arout] (X') to (Z1');
    \draw[arout] (Z1') to (Z3');
    \draw[arout] (Z2') to (Z3');
    \draw[arout] (Z1') to (Z2');
    \draw[arout] (Y') to (Z3');
    \draw[arlat, bend left] (X') to (Y');
  \node at (7,-1) {$G(M_{|S=1}^2)$};
  \end{scope}
\end{tikzpicture}
\caption{Graphical representation of conditioning on $S\geq 0.8$ and $S=1$ respectively in $M^1$ and $M^2$. First, merge the exogenous ancestors of $S$, i.e., $E_1$ and $E_2$, to obtain a merged node $E_{\{1,2\}}$. Then update the exogenous probability distribution $\Prb(E_1,E_2)$ to the posterior $\Prb_{M^1}(E_1,E_2\mid S\geq 0.8)$ and $\Prb_{M^2}(E_1,E_2\mid S=1)$. Finally, marginalize out the node $S$. After conditioning,  $X$ and $Y$ are dashed, since we mark them as non-intervenable.}
\label{fig:cdscm}
\end{figure}
\end{example}

\subsubsection{Properties}\label{sec:3.2}

We derive some mathematical properties of the conditioning operation. First, we note that the conditioning operation preserves the simplicity, linearity, and acyclicity of SCMs.

\begin{restatable}[Simple, acyclic, linear SCMs and conditioning]{proposition}{modelclass}\label{prop:modelcalss}
    If $M$ is a simple (\resp acyclic) SCM with conditioned SCM $M_{|X_S\in\Si}$, then the conditioned SCM $M_{|X_S\in\Si}$ is simple (\resp acyclic). If $M$ is also linear, then so is $M_{|X_S\in\Si}$.
\end{restatable}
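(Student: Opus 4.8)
The plan is to exhibit the conditioned SCM as an ordinary marginalization of an auxiliary SCM and then push the three closure properties through that marginalization. I would introduce the intermediate SCM $M'\coloneqq(V,\hat W,\Xc_V\times\Xc_{\hat W},\hat\Prb,f)$ obtained from $M$ by keeping the endogenous set $V$ and the mechanism $f$ unchanged, regrouping $W$ into the finest partition $\hat W$ of \cref{lem:partition}, and replacing the exogenous law by $\hat\Prb=\Prb_M(X_W\mid X_S\in\Si)$; by \cref{lem:partition} this $\hat\Prb$ factorizes over $\hat W$, so $M'$ satisfies \cref{def:scm}. Since $\Prb$-null sets are $\hat\Prb$-null sets (\cref{rem:defcond}(4)), the solution function $g^S$ of $M$ \wrt $S$ also satisfies the $S$-equations for $\hat\Prb$-\almostall $x_W$, hence is a solution function of $M'$ \wrt $S$; comparing \cref{def:cdSCM}(v) with \cref{def:marg} then gives $M_{|X_S\in\Si}\equiv (M')_{\sm S}$. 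It therefore suffices to show that $M'$ inherits simplicity, acyclicity, and linearity from $M$, and that marginalization over $S$ preserves each of these, the latter being available from \citet{bongers2021foundations}.

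The crux, and the main obstacle, is simplicity of $M'$, i.e.\ unique solvability \wrt every $S'\subseteq V$ under the new law $\hat\Prb$. Existence of a solution function for each $S'$ is immediate, because any function satisfying the structural equations for $\Prb$-\almostall $x_W$ does so for $\hat\Prb$-\almostall $x_W$ as well, again by \cref{rem:defcond}(4). The subtle direction is \emph{essential} uniqueness: a priori $M'$ might acquire extra solution functions that differ off the smaller support of $\hat\Prb$. I would close this gap by invoking the pointwise description of unique solvability for simple SCMs from \citet{bongers2021foundations}: simplicity of $M$ yields a single $\Prb$-null set $N$ such that for every $x_W\notin N$, every $S'\subseteq V$, and every $x_{V\sm S'}$, the subsystem $x_{S'}=f_{S'}(x_{V\sm S'},x_{S'},x_W)$ has a \emph{unique} solution. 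As $N$ is also $\hat\Prb$-null, any two solution functions of $M'$ agree off $N$ up to an $\hat\Prb$-null set, which is precisely essential uniqueness for $M'$. This null-set transfer is the one genuinely non-formal ingredient, and it is exactly the subtlety already flagged in \cref{rem:defcond}(4); without it, the claim would be more than a one-line appeal to ``same mechanism''.

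Acyclicity and linearity of $M'$, by contrast, require no work: $M'$ has the same endogenous variables and the same mechanism $f$ as $M$, and both properties are determined by $f$ alone — the directed part of the functional graph in one case, the affine form of the equations in the other — so $M'$ is acyclic (\resp linear) precisely when $M$ is. It then remains to apply the behaviour of marginalization. Marginalizing over $S$ substitutes $g^S$ and so replaces directed paths through $S$ by single directed edges, creating no new directed cycle among $V\sm S$ (the bidirected edges arising from merged exogenous nodes are irrelevant to cyclicity); for a linear simple SCM the $S$-subsystem has an affine solution function, since the relevant coefficient matrix is invertible by unique solvability \wrt $S$, and substituting this affine $g^S$ into the affine $f_{\hat V}$ yields affine equations. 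Together with preservation of simplicity under marginalization \citep{bongers2021foundations}, this shows that $(M')_{\sm S}$, and hence $M_{|X_S\in\Si}$, is simple, acyclic whenever $M$ is, and linear whenever $M$ is. Finally, one checks that the conclusion is independent of the chosen version of $g^S$, as required by the equivalence-class convention of \cref{rem:defcond}(1); this is automatic since equivalent SCMs have equivalent marginalizations.
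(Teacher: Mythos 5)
Your proof is correct and follows essentially the same route as the paper's: factor the conditioning into (merging exogenous nodes and updating to the posterior) followed by marginalization over $S$, use the absolute continuity $\hat\Prb\ll\Prb$ to transfer solution functions and their essential uniqueness, and invoke closure of simplicity, acyclicity, and linearity under marginalization from \citet{bongers2021foundations}. The only nitpick is that acyclicity is not ``determined by $f$ alone'' --- the parent relation of \cref{def:pa} also depends on the exogenous law --- but since passing to an absolutely continuous law can only remove parents, the graph of your intermediate $M'$ is a subgraph of that of $M$ and the conclusion stands.
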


This implies that opting for simple/acyclic/linear SCMs as a model class and performing model abstraction through the conditioning operation will consistently maintain one within the chosen model class. This convenience proves valuable in practical applications, where adherence to the specific model class is often desired.

\begin{remark}
     \citet[Proposition 8.2]{bongers2021foundations} show that the class of simple SCMs is closed under marginalization, perfect intervention, and the twinning operation. Here we show that the class of simple SCMs is also closed under the conditioning operation of Definition~\ref{def:cdSCM} (ignoring the subtlety that the ancestors of the conditioning variables became non-intervenable).
\end{remark}

In the causal inference community, it is well known that conditioning and intervention do not commute. For pedagogical purposes, we give an example where $T\cap \anc_{G(M)}(S)\ne\emptyset$ and the intervention and conditioning do not commute. See also, e.g., \citet{mathur24simple_graphical} for an example.

\begin{example}[Conditioning and intervention do not commute]\label{ex}
   Consider a linear SCM with $\Prb(E_T), \Prb(E_X), \Prb(E_Y)$ such that $\Prb_M(X=x)>0$ and $\Prb_M(X=x\mid \Do(T=t))>0$ for $t=0,1$ and structural equations
  \[
    M:\left\{\begin{aligned}
              T&=E_T, X=\alpha T+E_X, \\
              Y&=X+\beta T+E_Y.
    \end{aligned}\right.
  \]

  In $M$, if we first condition on $X=x$ and second intervene on $T$ (despite $T$ being considered non-intervenable as $T\in \anc_{G(M)}(X)$), then we have
  \[
  \begin{aligned}
    \mathrm{E}_{(M_{|X=x})_{\Do(T=1)}}[Y]-\mathrm{E}_{(M_{|X=x})_{\Do(T=0)}}[Y]=\alpha+\beta.
  \end{aligned}
  \]
 On the contrary, if we first intervene on $T$ and second condition on $X=x$, then we have
  \[
    \begin{aligned}
      \mathrm{E}_{(M_{\Do(T=1)})_{|X=x}}[Y]-\mathrm{E}_{(M_{\Do(T=0)})_{|X=x}}[Y]=\beta.
   \end{aligned}
  \]
   In general, conditioning and intervention do not commute.
\end{example}

Conditioning does commute with interventions on the non-ancestors of the conditioned variables.

\begin{restatable}[Conditioning and intervention]{lemma}{intcond}\label{lem:int}
Assume Assumption \ref{ass}. Let $T \subseteq V\setminus \anc_{G(M)}(S)$ and $x_T\in\Xc_T$. Then we have
\[
\left(M_{\Do(X_T=x_T)}\right)_{|X_S\in\Si}\equiv\left(M_{|X_S\in\Si}\right)_{\Do(X_T=x_T)}.
\]
\end{restatable}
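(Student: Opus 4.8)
The plan is to unfold both sides of the claimed equivalence into the explicit form prescribed by \cref{def:cdSCM,def:scm_hard_intervention} and to check the two defining data of an SCM---the exogenous part $(\hat W,\hat\Xc,\hat\Prb)$ and the causal mechanism $\hat f$---agree in the sense of \cref{def:equiv}. Write $M'\coloneqq M_{\Do(X_T=x_T)}$. Since $T\subseteq V\sm\anc_{G(M)}(S)$, in particular $T\cap S=\emptyset$, so both $(M')_{|X_S\in\Si}$ and $(M_{|X_S\in\Si})_{\Do(X_T=x_T)}$ are SCMs with endogenous set $\hat V=V\sm S$, and on the latter the intervention is legitimate because $T\subseteq\hat V$. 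Set $A\coloneqq\anc_{G(M)}(S)$; it is ancestrally closed, contains $S$, and is disjoint from $T$. The whole argument rests on one structural fact: because $A$ is ancestrally closed and $T\cap A=\emptyset$, the subsystem of structural equations indexed by $A$ involves only $X_A$ and $X_W$, hence is literally unchanged when we pass from $M$ to $M'$.

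First I would establish the invariance of the selection data. The self-containedness of the $A$-subsystem, together with simplicity (unique solvability \wrt every subset, \cf \cref{def:simple_SCM} and \citet{bongers2021foundations}), implies that the induced law of $X_S$ is identical in $M$ and $M'$; hence $\Prb_{M'}(X_S\in\Si)=\Prb_M(X_S\in\Si)>0$, which simultaneously guarantees that the left-hand conditioning is well defined. For the same reason the posteriors coincide, $\Prb_{M'}(X_W\mid X_S\in\Si)=\Prb_M(X_W\mid X_S\in\Si)$, since the event $\{X_S\in\Si\}$ is measurable \wrt the unchanged $A$-subsystem. By \cref{lem:partition} the finest partition $\hat W$ is a function of this posterior only, so $\hat W$, $\hat\Xc$ and $\hat\Prb$ are the same for $(M')_{|X_S\in\Si}$ as for $M_{|X_S\in\Si}$ (interventions never touch the exogenous part). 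This disposes of the exogenous data.

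Next I would match the mechanisms. Let $g^S$ be the essentially unique solution function of $M$ \wrt $S$. Since $S\cap T=\emptyset$, the equations $f_S$ are unaffected by the intervention, so $g^S$ is also a solution function of $M'$ \wrt $S$; by essential uniqueness any choice made for $M'$ agrees with $g^S$ for $\Prb$-\almostall $x_W$. Moreover, as $f_S$ depends only on coordinates in $A\cup W$ and $T\cap A=\emptyset$, the value $g^S(x_{\hat V},x_W)$ does not depend on the $x_T$-coordinate. Plugging this into \cref{def:cdSCM}(v), the mechanism of $(M')_{|X_S\in\Si}$ is $\hat f'_v(x_{\hat V},x_{\hat W})=f'_v(x_{\hat V},g^S(x_{\hat V},x_{\hat W}),x_{\hat W})$; for $v\in\hat V\sm T$ we have $f'_v=f_v$, so this equals $\hat f_v$ of $M_{|X_S\in\Si}$, whereas for $v\in T$ we have $f'_v\equiv x_v$, giving the constant mechanism. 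On the other side, $(M_{|X_S\in\Si})_{\Do(X_T=x_T)}$ keeps $\hat f_v$ for $v\in\hat V\sm T$ and replaces $\hat f_T$ by the constant $x_T$. The two mechanisms therefore coincide for $\Prb_M(\cdot\mid X_S\in\Si)$-almost all exogenous values, which is exactly the equivalence of \cref{def:equiv}.

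I expect the main obstacle to be the first step: rigorously justifying that intervening on a non-ancestor of $S$ leaves both the law of $X_S$ and the solution function $g^S$ genuinely unchanged in a possibly cyclic (but simple) SCM, where ``ancestor'' is a graphical rather than a substitutional notion. The clean way to make this airtight is to invoke the localization properties of solution functions on ancestrally closed sets from \citet{bongers2021foundations}---namely that one may solve the $A$-block independently of the rest of the model---rather than arguing by hand; once that localization is in place, the invariance of $\Prb_M(X_S\in\Si)$, of the posterior on $X_W$, and of $g^S$ all follow, and the remaining bookkeeping on $\hat f$ is routine.
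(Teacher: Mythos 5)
Your proposal is correct and follows essentially the same route as the paper's proof: both unfold the two sides componentwise, use $T\cap\anc_{G(M)}(S)=\emptyset$ together with simplicity to show the law of $X_S$ (hence the merged exogenous set $\hat W$ and the posterior $\hat\Prb$) is unchanged by the intervention, and use $T\cap S=\emptyset$ plus essential uniqueness of the solution function \wrt $S$ to match the causal mechanisms up to a null set. The only cosmetic difference is that you phrase the first step via localization on the ancestrally closed set $A$, whereas the paper states it directly as $g_S^{-1}(\Si)=\tilde g_S^{-1}(\Si)$ up to a $\Prb(X_W)$-null set; these are the same argument.
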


Note that in the above lemma, the probability
\[
\Prb_{M_{\Do(X_T=x_T)}}(X_S\in\Si)=\Prb_M(X_S\in\Si)
\]
is well defined and strictly larger than zero.

The next presented theorem characterizes the causal semantics of conditioned SCMs in terms of the original SCM with selection mechanisms.

\begin{restatable}[Main result I: Causal semantics of conditioned SCMs]{theorem}{causalsemantics}\label{thm:causal_semantics}
  Assume Assumption \ref{ass} and write $O\coloneqq V\sm S$. Let $T_i\subseteq O$ and $x_{T_i}\in \Xc_{T_i}$ for $i=1,\ldots, n$. Then we have
   \[
    \Prb_{M_{|X_S\in \Si}}\big(\{X_{O\sm T_i}(x_{T_i})\}_{1\leq i\leq n}\big)=
    \Prb_M\big(\{X_{O\sm T_i}(x_{T_i})\}_{1\leq i\leq n}\mid X_S\in \Si\big).
   \]
\end{restatable}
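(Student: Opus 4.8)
The plan is to reduce the identity to a single change-of-variables (pushforward) computation, after establishing two structural facts about $M_{|X_S\in\Si}$: that its exogenous law is exactly the posterior $\Prb_M(X_W\mid X_S\in\Si)$, and that its potential-outcome maps coincide, as measurable functions of the exogenous draw, with those of $M$ restricted to the surviving coordinates $O=V\sm S$. Once both facts are in hand, both sides of the claimed equality are laws of one and the same tuple of measurable functions of $X_W$, evaluated under one and the same law on $\Xc_W$, so they must agree.

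For the exogenous law, I would unfold \cref{def:cdSCM}: by construction $\hat\Prb=\bigotimes_{i}\Prb_M(X_{\hat w_i}\mid X_S\in\Si)$, where $\hat W=\{\hat w_1,\dots,\hat w_n\}$ is the finest partition supplied by \cref{lem:partition}. That lemma guarantees the blocks $X_{\hat w_1},\dots,X_{\hat w_n}$ are mutually independent under $\Prb_M(X_W\mid X_S\in\Si)$, so the product of their conditional marginals reconstitutes the conditional joint; identifying $\Xc_{\hat W}$ with $\Xc_W$ then gives $\hat\Prb=\Prb_M(X_W\mid X_S\in\Si)$ as measures on $\Xc_W$. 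A byproduct I will use repeatedly is absolute continuity $\hat\Prb\ll\Prb$: since $\{X_S\in\Si\}$ has positive probability, conditioning only reweights by a bounded density, so every $\Prb$-null set is $\hat\Prb$-null.

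For the potential-outcome maps, the key observation is that the structural map $\hat f$ of \cref{def:cdSCM}(v) is the marginalization of $f$ over $S$: it substitutes the $S$-solution $g^S$ into $f_O$, so $M_{|X_S\in\Si}$ and $M$ share the structural equations for $O$ and differ only in the exogenous law. Fix an intervention target $T_i\subseteq O$ and write $\xi=g^{V\sm T_i}(x_{T_i},x_W)$, split as $(\xi_{O\sm T_i},\xi_S)$. From the defining relations of $g^{V\sm T_i}$ one reads off that $\xi_S$ solves the $S$-equations given $\xi_O=(\xi_{O\sm T_i},x_{T_i})$, whence $\xi_S=g^S(\xi_O,x_W)$ for $\Prb$-a.a.\ $x_W$ by essential uniqueness of $g^S$; substituting this back shows that $\xi_{O\sm T_i}$ satisfies the intervened structural equations of $M_{|X_S\in\Si}$. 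Since $M_{|X_S\in\Si}$ is simple (\cref{prop:modelcalss}, and simplicity is preserved under intervention, \citet[Prop.~8.2]{bongers2021foundations}), essential uniqueness of its $O\sm T_i$-solution forces $\hat g^{O\sm T_i}(x_{T_i},\cdot)=(g^{V\sm T_i})_{O\sm T_i}(x_{T_i},\cdot)$ for $\hat\Prb$-a.a.\ $x_W$; here absolute continuity is what lets me pass the $\Prb$-a.s.\ equality to $\hat\Prb$. Thus, as a function of the shared exogenous draw, the conditioned model's potential outcome $X_{O\sm T_i}(x_{T_i})$ equals the $(O\sm T_i)$-restriction of $M$'s potential outcome $X_{O\sm T_i}(x_{T_i})$ (\cref{def:po}).

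To conclude, I assemble the $n$ targets. Because all potential outcomes in \cref{def:po} are evaluated at a single exogenous variable, the previous step identifies the entire tuple $\{X_{O\sm T_i}(x_{T_i})\}_{i}$ computed in $M_{|X_S\in\Si}$ with the corresponding tuple in $M$ as one measurable map $\Phi\colon\Xc_W\to\prod_i\Xc_{O\sm T_i}$ (the counterfactual ``parallel worlds'' all sharing $X_W$). Pushing $\Phi$ forward and invoking the exogenous-law identity gives $\Prb_{M_{|X_S\in\Si}}(\{X_{O\sm T_i}(x_{T_i})\}_i)=\Phi_*\hat\Prb=\Phi_*\Prb_M(X_W\mid X_S\in\Si)=\Prb_M(\{X_{O\sm T_i}(x_{T_i})\}_i\mid X_S\in\Si)$, as claimed. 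I expect the main obstacle to be the solution-function identification: one must respect the asymmetric quantifier order in \cref{def:scm_solution_functions} (``for a.a.\ $x_W$, for all $x_{V\sm S}$''), apply essential uniqueness on the intervened model for each $T_i$ with an argument $\xi_O$ that itself depends on $x_W$, and ensure that every $\Prb$-a.s.\ statement survives the change to the posterior law $\hat\Prb$ — precisely where the absolute continuity from the first step is indispensable.
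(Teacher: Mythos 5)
Your proposal is correct and follows essentially the same route as the paper's proof: identify the exogenous law of $M_{|X_S\in\Si}$ with the posterior $\Prb_M(X_W\mid X_S\in\Si)$, show the solution functions of the conditioned model \wrt $O\sm T_i$ agree $\hat\Prb$-a.s.\ with the $O\sm T_i$-restriction of $M$'s solution functions \wrt $V\sm T_i$, and conclude by pushing the resulting tuple of maps forward under the common law. The only cosmetic difference is that the paper factors the solution-function identification through the marginalized model $M_{\sm S}$ and cites an external lemma for the last step, whereas you argue it inline via essential uniqueness of $g^S$ and absolute continuity, which is substantively the same argument.
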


By noticing that $\Prb_M(X_{O\sm T}(x_T))=\Prb_M(X_{O\sm T}\mid \Do(X_T=x_T))$ and $\Prb_M(X_{O\sm T}(x_T)\mid X_S\in \Si)=\Prb_M(X_{O\sm T}\mid \Do(X_T=x_T), X_S\in \Si)$ provided $T\cap \anc_{G(M)}(S)=\emptyset$, we have the following result.

\begin{corollary}\label{cor:causal_semantics}
Assume Assumption \ref{ass} and write $O\coloneqq V\sm S$. Then we have:
\begin{enumerate}
    \item \textbf{Observational}: $\Prb_{M_{|X_S\in\Si}}(X_O)=\Prb_M(X_O\mid X_S\in \Si)$.

    \item \textbf{Interventional}: Let $T\subseteq O$ be $T=T_1\dcup T_2$ such that $T_1\subseteq V\sm \anc_{G(M)}(S)$ and $T_2\subseteq  \anc_{G(M)}(S)\sm S$. For $x_T\in\Xc_T$,
    \[
    \Prb_{M_{|X_S\in\Si}}\left(X_{O\setminus T}\mid \Do(X_T=x_T)\right)=
     \Prb_M\left(X_{O\setminus T}(x_{T_2})\mid \Do(X_{T_1}=x_{T_1}),X_S\in \Si\right).
    \]

    \item \textbf{Counterfactual via twinning}: Let $T\subseteq O$ be $T=T_1\dcup T_2$ such that $T_1\subseteq V\setminus \anc_{G(M)}(S)$ and $T_2\subseteq \anc_{G(M)}(S)\sm S$. Let $\tilde{T}\subseteq V^\prime$ be $\tilde{T}=T_3\dcup T_4$ such that $T_{3}\subseteq (V \setminus \anc_{G(M)}(S))^\prime$ and $T_4\subseteq (\anc_{G(M)}(S)\sm S)^\prime$.\footnote{$V^\prime$ means a copy of $V$. See Definition \ref{def:twin}.} For any $x_{T}\in\Xc_{T}$ and $x_{\tilde{T}}\in\Xc_{\tilde{T}}$,
    \[
      \begin{aligned}
        &\Prb_{\left(M_{|X_S\in\Si}\right)^{\twin}}(X_{(O\cup O^\prime)\setminus(T\cup \tilde{T})}\mid\Do(X_{T}=x_{T},X_{\tilde{T}}=x_{\tilde{T}}))\\
        &=\Prb_{M^{\twin}}(X_{O\setminus T}(x_{T_2}), X_{O^\prime\setminus \tilde{T}}(x_{T_4})\mid\Do(X_{T_1}=x_{T_1},X_{T_3}=x_{T_3}),X_S\in\Si).
    \end{aligned}
   \]
\end{enumerate}
\end{corollary}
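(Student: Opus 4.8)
The plan is to derive all three identities directly from \cref{thm:causal_semantics}, using only two auxiliary facts: the potential-outcome/do-notation dictionary of \cref{nota:prob} (whose second identity needs the intervention target to avoid $\anc_{G(M)}(S)$), and the compositionality of hard interventions on disjoint targets—an intervention on $T_1\dcup T_2$ is the intervention on $T_1$ carried out inside $M_{\Do(X_{T_2}=x_{T_2})}$, i.e.\ the nesting of solution functions. No new machinery is required; the work lies entirely in tracking which sub-target is an ancestor of the selection nodes. The observational identity is immediate: taking $n=1$ and $T_1=\emptyset$ in \cref{thm:causal_semantics} and using $X_O(x_\emptyset)=X_O$ yields $\Prb_{M_{|X_S\in\Si}}(X_O)=\Prb_M(X_O\mid X_S\in\Si)$.

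For the interventional identity, note first that $M_{|X_S\in\Si}$ is an ordinary simple SCM without selection (\cref{prop:modelcalss}), so \cref{def:po} gives $\Prb_{M_{|X_S\in\Si}}(X_{O\sm T}\mid\Do(X_T=x_T))=\Prb_{M_{|X_S\in\Si}}(X_{O\sm T}(x_T))$ for the full target $T=T_1\dcup T_2$; this holds even though $T_2$ is marked non-intervenable, since the intervention remains mathematically well defined. \cref{thm:causal_semantics} with $n=1$ and target $T$ rewrites the right-hand side as $\Prb_M(X_{O\sm T}(x_T)\mid X_S\in\Si)$. It then remains to match this with the claimed expression. By \cref{nota:prob}, $\Prb_M(X_{O\sm T}(x_{T_2})\mid\Do(X_{T_1}=x_{T_1}),X_S\in\Si)$ unfolds to $\Prb_{M_{\Do(X_{T_1}=x_{T_1})}}(X_{O\sm T}(x_{T_2})\mid X_S\in\Si)$; since $T_1\cap\anc_{G(M)}(S)=\emptyset$ we have $X_S(x_{T_1})=X_S$, so the conditioning event is unchanged, and by compositionality the potential outcome $X_{O\sm T}(x_{T_2})$ computed inside $M_{\Do(X_{T_1}=x_{T_1})}$ coincides with $X_{O\sm T}(x_{T_1},x_{T_2})=X_{O\sm T}(x_T)$ in $M$. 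Both sides therefore equal $\Prb_M(X_{O\sm T}(x_T)\mid X_S\in\Si)$.

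The counterfactual identity runs the same argument inside the twin SCM $M^{\twin}$, selecting on the factual copy's $X_S$. The one new ingredient is a commutation lemma,
\[
\left(M_{|X_S\in\Si}\right)^{\twin}\equiv\left(\left(M^{\twin}\right)_{|X_S\in\Si}\right)_{\sm S'},
\]
verified directly from \cref{def:cdSCM,def:twin}: conditioning $M^{\twin}$ on the factual $X_S\in\Si$ produces the posterior $\Prb_M(X_W\mid X_S\in\Si)$ and the same finest partition $\hat W$ as $M_{|X_S\in\Si}$, the factual copy substitutes $g^S$ exactly as in $M_{|X_S\in\Si}$, and marginalizing out the counterfactual selection copy $S'$ substitutes the copied solution function into the $O'$-equations, reproducing the counterfactual copy of $M_{|X_S\in\Si}$. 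Because the stated distribution involves only $O\cup O'$ and the targets $T\cup\tilde T\subseteq O\cup O'$ avoid $S'$, marginalizing $S'$ leaves the relevant interventional distributions untouched, so the left-hand side equals the corresponding quantity in $(M^{\twin})_{|X_S\in\Si}$. Applying the interventional identity (equivalently \cref{thm:causal_semantics}) to $N:=M^{\twin}$ with selection $S$—legitimate since $\Prb_N(X_S\in\Si)=\Prb_M(X_S\in\Si)>0$ and $N$ is simple—and using that $\anc_{G(N)}(S)=\anc_{G(M)}(S)$ lies entirely in the factual copy (no directed edge leaves $V'$ towards $V$), we find that among $T\cup\tilde T$ only $T_2$ is an ancestor of $S$, while $T_1,T_3,T_4$ are not. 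Peeling $T_2$ off as the potential-outcome intervention and $T_1\cup T_3\cup T_4$ as do-operations, then using \cref{nota:prob} once more to move $T_4\subseteq(\anc_{G(M)}(S)\sm S)'$ (a non-ancestor of the factual $S$) out of the do-block into a potential-outcome superscript on the counterfactual factor $X_{O'\sm\tilde T}(x_{T_4})$, yields exactly the right-hand side of the statement.

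The main obstacle is the twin computation in the counterfactual identity: establishing the commutation lemma with its careful handling of the counterfactual selection copy $S'$, and carrying out the ancestor analysis in $M^{\twin}$ that shows the counterfactual-world copies $T_4$ of factual ancestors are themselves non-ancestors of the factual $S$, so that their placement as potential-outcome interventions rather than do-operations is immaterial. The remaining steps are routine applications of \cref{nota:prob} and the compositionality of interventions.
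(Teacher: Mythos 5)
Your proposal is correct and follows essentially the same route as the paper, which presents the corollary as an immediate unfolding of \cref{thm:causal_semantics} via the potential-outcome/do-notation dictionary of \cref{nota:prob} together with the observation that $X_S(x_{T_1})=X_S$ a.s.\ when $T_1\cap\anc_{G(M)}(S)=\emptyset$ and the compositionality of hard interventions. Your commutation lemma $\left(M_{|X_S\in\Si}\right)^{\twin}\equiv\left(\left(M^{\twin}\right)_{|X_S\in\Si}\right)_{\sm S'}$ and the ancestor analysis in $M^{\twin}$ (no directed edges between the two copies) are correct fillings-in of details the paper leaves implicit for the twinning statement.
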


We see that the conditioned SCM faithfully encapsulates the observational distribution of every observed endogenous variable under selection $X_S\in \Si$, and (taking $T_2=T_4=\emptyset$) the causal semantics of the non-ancestors of $S$ under selection $X_S\in \Si$, in accordance with the original SCM.  Therefore, the simplified abstracted model yields identical results as the original more intricate model with the selection mechanism $X_S\in \Si$ as long as one does not consider $\Prb_M(X_{O\sm T}\mid \Do(X_T=x_T),X_S\in \Si)$ where $T\cap \anc_{G(M)}(S)\ne \emptyset$.

One may wonder if it is possible to modify the definition of $M_{|X_S\in \Si}$ in such a way that we have, e.g., $\Prb_{M_{|X_S\in\Si}}\left(X_{O\setminus T}\mid \Do(X_T=x_T)\right)= \Prb_M\left(X_{O\setminus T}\mid \Do(X_{T}=x_{T}),X_S\in \Si\right)$ also for some $T\subseteq O$ such that $T\cap \anc_{G(M)}(S)\ne \emptyset$. We show in \cref{app:impossible} that it is impossible to find an SCM that preserves the causal semantics of the ancestors of latent selection variables in general. 

\begin{remark}[Not all data-generating processes with causal interpretation can be modeled by SCMs]
    \cref{prop:impossible} shows that there exist simple s-SCMs that cannot be modeled by simple SCMs. It suggests that not all data-generating processes with certain causal interpretations can be modeled by SCMs. Such examples also exist in the equilibrium behavior of dynamical systems and functional laws in physics and chemistry \citep{Blom19beyond}. 
\end{remark}

In addition, by \cref{thm:causal_semantics}, we have the following simple result. The conditional operation is an operation with a constructive definition that preserves as much causal information as possible.

\begin{restatable}[Conditioning operation preserves as much causal information as possible]{theorem}{optimal}\label{thm:as_much_as_possible} 
    There are no mappings $(M,X_S\in \Si)\mapsto \tilde{M}$ that preserve more causal information than $(M,X_S\in \Si)\mapsto M_{|X_S\in \Si}$ in the following sense. Assume that the mapping $(M,X_S\in \Si)\mapsto \tilde{M}$ is such that for all $T_i\subseteq V$ and $x_{T_i}\in\Xc_{T_i}$
    \[
        \Prb_{\tilde{M}}\big(\{X_{O\sm T_i}(x_{T_i})\}_{1\leq i\leq n}\big)=     \Prb_M\big(\{X_{O\sm T_i}(x_{T_i})\}_{1\leq i\leq n}\mid X_S\in \Si\big),
    \]
    and furthermore for some $T\subseteq \anc_{G(M)}(S)$ and $x_T\in \Xc_T$
    \[
        \Prb_{\tilde{M}}\left(X_{O\setminus T}\mid \Do(X_T=x_T)\right)=      \Prb_M\left(X_{O\setminus T}\mid \Do(X_{T}=x_{T}),X_S\in \Si\right). 
    \]
    Then it holds
    \[         
    \Prb_{M_{|X_S\in \Si}}\left(X_{O\setminus T}\mid \Do(X_T=x_T)\right)=      \Prb_M\left(X_{O\setminus T}\mid \Do(X_{T}=x_{T}),X_S\in \Si\right).     
    \]   
\end{restatable}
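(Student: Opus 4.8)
The plan is to obtain the statement as a short corollary of \cref{thm:causal_semantics}, exploiting the elementary fact that in a simple SCM \emph{without} a selection mechanism the single-target interventional distribution \emph{is} the law of the corresponding potential outcome. The conceptual heart of the argument is that the first hypothesis forces $\tilde{M}$ and $M_{|X_S\in\Si}$ to assign the same law to the potential outcome $X_{O\sm T}(x_T)$ (both are potential-outcome equivalent on this marginal in the sense of \cref{def:po_equiv}); since interventional distributions are determined by potential outcomes, the two models must then assign the same single-intervention law, and the extra interventional identity assumed for $\tilde{M}$ transfers verbatim to $M_{|X_S\in\Si}$.

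Concretely, first I would specialize the potential-outcome hypothesis to $n=1$, $T_1=T$, giving $\Prb_{\tilde{M}}(X_{O\sm T}(x_T))=\Prb_M(X_{O\sm T}(x_T)\mid X_S\in\Si)$, and likewise invoke \cref{thm:causal_semantics} with $n=1$, $T_1=T$ to get $\Prb_{M_{|X_S\in\Si}}(X_{O\sm T}(x_T))=\Prb_M(X_{O\sm T}(x_T)\mid X_S\in\Si)$. Equating the two yields $\Prb_{M_{|X_S\in\Si}}(X_{O\sm T}(x_T))=\Prb_{\tilde{M}}(X_{O\sm T}(x_T))$. Next, using \cref{def:po} and \cref{nota:prob}, for any simple selection-free SCM $N$ on endogenous set $O$ one has $\Prb_N(X_{O\sm T}\mid\Do(X_T=x_T))=\Prb_N(X_{O\sm T}(x_T))$; applying this to $N=M_{|X_S\in\Si}$ (simple by \cref{prop:modelcalss}) and to $N=\tilde{M}$ converts the previous equality into
\[
\Prb_{M_{|X_S\in\Si}}(X_{O\sm T}\mid\Do(X_T=x_T))=\Prb_{\tilde{M}}(X_{O\sm T}\mid\Do(X_T=x_T)).
\]
Finally I would feed the second hypothesis, namely $\Prb_{\tilde{M}}(X_{O\sm T}\mid\Do(X_T=x_T))=\Prb_M(X_{O\sm T}\mid\Do(X_T=x_T),X_S\in\Si)$, into the right-hand side to recover exactly the desired conclusion.

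I do not anticipate a genuine obstacle, since the whole argument is a chain of four equalities once the single-world marginal of the first hypothesis is isolated. The only care needed is bookkeeping rather than mathematical depth: one must note that $T\subseteq O=V\sm S$ — this is forced, because the intervened law of $\tilde{M}$ in the second hypothesis is defined only for targets inside the endogenous set $O$, so effectively $T\subseteq\anc_{G(M)}(S)\sm S$ — and one must ensure that $\tilde{M}$ is simple so that its potential outcomes and interventional laws are well defined and coincide; simplicity of $\tilde{M}$ is implicit already in the first hypothesis, whose left-hand side invokes potential outcomes of $\tilde{M}$ in the sense of \cref{def:po}.
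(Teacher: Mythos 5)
Your proposal is correct and is essentially the paper's own proof: the paper establishes the result by the same chain of equalities (second hypothesis, then interventional law $=$ potential-outcome law in $\tilde{M}$, then the $n=1$ case of the first hypothesis, then \cref{thm:causal_semantics}, then potential-outcome law $=$ interventional law in $M_{|X_S\in\Si}$), merely written starting from the other end. Your extra bookkeeping remarks about $T\subseteq O$ and the implicit simplicity of $\tilde{M}$ are sensible but not needed beyond what the paper already assumes.
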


The subsequent result establishes the commutativity of conditioning and marginalization.

\begin{restatable}[Conditioning and marginalization commute]{proposition}{condmarg}\label{prop:cond_marg} 
    Assume Assumption~\ref{ass} and let $L\subseteq V\sm S$. Then we have $(M_{\setminus L})_{|X_S\in \Si}\equiv (M_{|X_S\in\Si})_{\setminus L}$.
\end{restatable}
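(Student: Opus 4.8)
The plan is to expand both composite models with Definitions~\ref{def:marg} and~\ref{def:cdSCM} and to check that they agree, up to the equivalence of Definition~\ref{def:equiv}, in each of the five components of the SCM tuple. Write $K\coloneqq V\sm(S\cup L)$. Since $S\subseteq V\sm L$ and $L\subseteq V\sm S$, both operations are well defined and, applied in either order, remove exactly $S\cup L$, so that both $(M_{\sm L})_{|X_S\in\Si}$ and $(M_{|X_S\in\Si})_{\sm L}$ are SCMs with endogenous set $K$. Throughout, $g^{S}$, $g^{L}$ and $g^{S\cup L}$ denote the essentially unique solution functions of $M$ \wrt $S$, $L$ and $S\cup L$, which all exist because $M$ is simple; I will also use that marginalization and conditioning preserve simplicity (Proposition~\ref{prop:modelcalss} and the cited results of \citet{bongers2021foundations}), so that the intermediate models $M_{\sm L}$ and $M_{|X_S\in\Si}$ are simple and the solution functions invoked below exist.

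First I would settle the exogenous data. Marginalizing out $L\subseteq V\sm S$ changes neither $\Prb$ nor the law of $X_S$ viewed as a function of $X_W$: marginalization preserves the solution function of the retained variables \citep{bongers2021foundations}, and $S$ is retained, so $X_S$ as a map of $X_W$ is the same in $M$ and $M_{\sm L}$ for $\Prb$-\almostall $x_W$. Hence the conditioning event and its posterior agree, $\Prb_{M_{\sm L}}(X_W\mid X_S\in\Si)=\Prb_M(X_W\mid X_S\in\Si)$, and therefore the finest partition $\hat W$ furnished by Lemma~\ref{lem:partition}, together with $\hat\Xc_{\hat W}$ and $\hat\Prb$, is literally the same whether we condition $M_{\sm L}$ or $M$. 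On the right-hand side, marginalizing $L$ out of $M_{|X_S\in\Si}$ leaves its exogenous part untouched by Definition~\ref{def:marg}. So both composites carry the identical exogenous triple $(\hat W,\hat\Xc_{\hat W},\hat\Prb)$ (identifying $\Xc_{\hat W}\cong\Xc_W$ via the partition), and it remains only to compare the two causal mechanisms as functions $\Xc_K\times\Xc_{\hat W}\to\Xc_K$.

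For the mechanisms I would expand each side explicitly, grouping the arguments of $f$ as $(x_K,x_S,x_L,x_W)$. On the left, Definition~\ref{def:marg} substitutes $x_L=g^L(x_K,x_S,x_W)$ and then Definition~\ref{def:cdSCM} substitutes $x_S=\tilde g^S(x_K,x_W)$, the solution function of $M_{\sm L}$ \wrt $S$; on the right, conditioning first substitutes $x_S=g^S(x_K,x_L,x_W)$ and then marginalization substitutes $x_L=\hat g^L(x_K,x_W)$, the solution function of $M_{|X_S\in\Si}$ \wrt $L$. In both cases the outcome is the $K$-component $f_K$ evaluated at a pair $(x_S^\star,x_L^\star)\in\Xc_{S\cup L}$, so the whole proposition reduces to the single claim that both constructed pairs equal $g^{S\cup L}(x_K,x_W)$ for $\Prb$-\almostall $x_W$ and all $x_K$.

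This last claim is the crux, and I expect it to be the main obstacle. The argument I plan is a fixed-point / consistency argument: on the left, $\tilde g^S$ satisfies the structural equations of $M_{\sm L}$ for $S$, which after unfolding $g^L$ are exactly the structural equations of $M$ for $S$ with $x_L=g^L(\cdots)$; together with the defining equation of $g^L$ this shows the left pair solves the structural equations of $M$ for $S$ and for $L$ simultaneously. Symmetrically, the right pair solves the same joint system. Since $M$ is simple, it is uniquely solvable \wrt $S\cup L$, so both pairs coincide with $g^{S\cup L}$ essentially, i.e.\ for $\Prb$-\almostall $x_W$ and all $x_K$. Substituting into $f_K$ then makes the two mechanisms agree on the relevant full-measure set, which is exactly equivalence in the sense of Definition~\ref{def:equiv}. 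The delicate points to handle with care are the bookkeeping of the ``$\Prb$-\almostall $x_W$, all $x_K$'' quantifiers (and that $\Prb$-null sets remain null under the posterior $\hat\Prb$, as noted in Remark~\ref{rem:defcond}), and the verification that the finest partition is genuinely insensitive to the order of the two operations.
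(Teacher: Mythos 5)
Your proposal is correct and mirrors the paper's proof in its overall structure: both compare the two composites component by component, and both handle the exogenous part identically by observing that the solution function of $M_{\sm L}$ restricted to $S$ agrees with $g_S$ $\Prb$-a.s., so the conditioning event, the finest partition $\hat{W}$, and the posterior $\hat{\Prb}$ are unaffected by first marginalizing out $L$. The one place you diverge is the causal mechanism: the paper disposes of this step in one line by citing Proposition~5.4 of \citet{bongers2021foundations}, whereas you prove the needed identity directly by showing that both substituted pairs $(x_S^\star,x_L^\star)$ solve the joint structural equations for $S\cup L$ and then invoking unique solvability of $M$ \wrt $S\cup L$ --- this is sound and is essentially an inlined proof of the cited result. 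The only point to tighten is the measure under which essential uniqueness is applied: the right-hand pair satisfies the equations only for $\hat{\Prb}$-a.a.\ $x_W$, so you should invoke unique solvability of the \emph{updated} SCM (same mechanisms, exogenous law $\hat{\Prb}\ll\Prb$, hence still simple) rather than of $M$ itself; this is a one-line fix you already half-anticipate in your closing remark on null sets.
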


Suppose that we have a selection mechanism $X_{S_1\cup S_2}\in\Si_1\times \Si_2$, then we can generate three ``versions'' of the conditioned SCMs:
\begin{enumerate}
    \item[(i)] applying the single conditioning operation \wrt $X_{S_1\cup S_2}\in\Si_1\times \Si_2$ to get $M_{|\Si_1\times \Si_2}$;

    \item[(ii)] first conditioning on $X_{S_1}\in \Si_1$ and second conditioning on $X_{S_2}\in \Si_2$ to get $(M_{|\Si_1})_{|\Si_2}$;

    \item[(iii)] first conditioning on $X_{S_2}\in \Si_2$ and second conditioning on $X_{S_1}\in \Si_1$ to get $(M_{|\Si_2})_{|\Si_1}$.
\end{enumerate}
The following proposition demonstrates that these three versions are counterfactually equivalent (and hence empirically indistinguishable).

\begin{restatable}[Conditioning and conditioning commute]{proposition}{iterativecond} \label{prop:iterative_cond}
    Assume Assumption \ref{ass} with $S=S_1\dcup S_2$ and $\Si=\Si_1\times \Si_2$ where $\Si_1\subseteq \Xc_{S_1}$ and $\Si_2\subseteq \Xc_{S_2}$ are both measurable. Then $(M_{|\Si_1})_{|\Si_2}, (M_{|\Si_2})_{|\Si_1}$, and $M_{|\Si_1\times\Si_2}$ are counterfactually equivalent and induce the same laws of potential outcomes. Also, $G(M_{|\Si_1\times \Si_2})$ is a subgraph of $G((M_{|\Si_1})_{|\Si_2})$ and $G((M_{|\Si_2})_{|\Si_1})$. Furthermore, if
    \begin{enumerate}
        \item [(i)] $\anc_{G^a(M_{\sm(V\sm S_{1})})}(S_1)\cap \anc_{G^a(M_{\sm(V\sm S_{2})})}(S_2)=\emptyset$, or

        \item [(ii)] we have
        \[
        \Prb\left(X_{W}\in \left(g_{S_1}^{-1}(\Si_1)\sd g_{S_2}^{-1}(\Si_2)\right)\right)=0,
        \]
    \end{enumerate}
    then $(M_{|\Si_1})_{|\Si_2}\equiv (M_{|\Si_2})_{|\Si_1}\equiv M_{|\Si_1\times \Si_2}$.
\end{restatable}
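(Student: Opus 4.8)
\noindent The key observation I would isolate first is that, writing $A_1 \coloneqq g_{S_1}^{-1}(\Si_1)$ and $A_2 \coloneqq g_{S_2}^{-1}(\Si_2)$ as subsets of $\Xc_W$, the selection event factorizes as $\{X_S\in\Si\}=\{X_{S_1}\in\Si_1\}\cap\{X_{S_2}\in\Si_2\}=\{X_W\in A_1\cap A_2\}$. Hence conditioning first on $\Si_1$ and then on $\Si_2$ updates the exogenous law to $\Prb_M(X_W\mid A_1\cap A_2)$, exactly as the single conditioning on $\Si=\Si_1\times\Si_2$ does. By \cref{prop:modelcalss} all intermediate and final models are simple, and positivity is inherited because $\Prb_{M_{|\Si_1}}(X_{S_2}\in\Si_2)=\Prb_M(A_1\cap A_2)/\Prb_M(A_1)>0$, so every conditioning in sight is well defined.

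For the potential-outcome and counterfactual equivalence I would apply \cref{thm:causal_semantics} twice to the iterated model. Writing $O\coloneqq V\sm S$ and $O_1\coloneqq V\sm S_1=O\dcup S_2$, the outer step gives, for $T_i\subseteq O$,
\[
\Prb_{(M_{|\Si_1})_{|\Si_2}}\big(\{X_{O\sm T_i}(x_{T_i})\}_i\big)=\Prb_{M_{|\Si_1}}\big(\{X_{O\sm T_i}(x_{T_i})\}_i\mid X_{S_2}\in\Si_2\big).
\]
Applying \cref{thm:causal_semantics} to the inner step, with observed set $O_1$ and with the observational value $X_{S_2}$ included in the collection as an additional empty-intervention potential outcome, rewrites the right-hand side as $\Prb_M(\{X_{O\sm T_i}(x_{T_i})\}_i\mid X_{S_2}\in\Si_2,X_{S_1}\in\Si_1)$, which by the event factorization equals $\Prb_{M_{|\Si_1\times\Si_2}}(\{X_{O\sm T_i}(x_{T_i})\}_i)$. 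Symmetry in $(S_1,S_2)$ covers $(M_{|\Si_2})_{|\Si_1}$, giving \cref{def:po_equiv}-equivalence; running the identical computation on the twin SCMs via \cref{cor:causal_semantics}(3) upgrades it to counterfactual equivalence.

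For the subgraph claim I would treat directed and bidirected edges separately. The directed edges among $O$ are exactly those produced by substituting the $S$-solution, i.e.\ the directed edges of the latent projection of $G(M)$ onto $O$; projecting out $S_1\dcup S_2$ at once agrees with projecting out $S_1$ and then $S_2$ by transitivity of directed reachability through $S$ (the directed-edge content of \cref{prop:cond_marg}), so the directed parts coincide. The bidirected edges are controlled by the exogenous partition: by \cref{lem:partition} the single-step partition $\hat W$ is the \emph{finest} element of $\Pf_{A_1\cap A_2}$, whereas the two-step partition has blocks that are unions of first-step blocks and are independent under $\Prb_M(\,\cdot\mid A_1\cap A_2)$, hence is itself an element of $\Pf_{A_1\cap A_2}$. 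Therefore $\hat W$ refines the two-step partition, so any two endogenous variables sharing an exogenous block in the single-step model still share the larger containing block in the two-step model; every bidirected edge of $G(M_{|\Si_1\times\Si_2})$ thus survives in $G((M_{|\Si_1})_{|\Si_2})$, and likewise with $S_1,S_2$ swapped.

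For strict equivalence under (i) or (ii) I would first show that the structural maps agree $\Prb_M(\,\cdot\mid A_1\cap A_2)$-a.a.: composing the $S_2$-solution of $M_{|\Si_1}$ with the $S_1$-solution of $M$ produces a map that satisfies the structural equations of $M$ for $S=S_1\dcup S_2$ on $A_1$, so by essential unique solvability of the simple SCM $M$ it agrees $\Prb$-a.a.\ on $A_1$ with $g^S$, whence $\hat{\hat f}=\hat f$ holds $\Prb_M(\,\cdot\mid A_1\cap A_2)$-a.a. It then remains to verify that under (i) or (ii) the two exogenous partitions \emph{coincide} (not merely refine one another), as \cref{def:equiv} requires: under (ii), $A_1=A_2$ up to a $\Prb$-null set, so the second conditioning is on a probability-one event of $M_{|\Si_1}$ and induces no further merging; under (i), the variables merged in the $S_1$-step lie in $\anc_{G^a(M_{\sm(V\sm S_1)})}(S_1)$ and those merged in the $S_2$-step in $\anc_{G^a(M_{\sm(V\sm S_2)})}(S_2)$, and disjointness forces the two mergings to act on disjoint coordinates and compose into the single finest partition. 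This last point is the step I expect to be the main obstacle — pinning the merging down precisely enough to obtain coinciding partitions requires arguing that conditioning-induced dependence cannot link a coordinate feeding only into $S_1$ to one feeding only into $S_2$ when their exogenous ancestor sets are disjoint; by contrast the a.a.\ agreement of the structural maps is comparatively routine, though it needs care with the quantifier order (``for a.a.\ $x_W$, for all $x_O$'') and with the fact that the composed solution is pinned down only on $A_1$, which is exactly why the generic conclusion is equivalence of SCMs rather than literal equality of structural functions.
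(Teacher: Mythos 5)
Your proposal is correct and follows essentially the same route as the paper: counterfactual/potential-outcome equivalence by iterating \cref{thm:causal_semantics} (treating $X_{S_2}$ as an empty-intervention potential outcome and using $\{X_S\in\Si\}=\{X_{S_1}\in\Si_1\}\cap\{X_{S_2}\in\Si_2\}$), the subgraph claim via the observation that the jointly conditioned model carries the finest partition of $W$ so its exogenous blocks refine the iterated ones, and strict equivalence under (i) or (ii) by showing the exogenous label sets coincide. The one soft spot is the subgraph step: bidirected edges in $G(M_{|\cdot})$ can also arise from marginalizing out $S$ (not only from the exogenous merging), which is why the paper first invokes \cref{lem:indicator} to reduce to the case $\sib(S)\cup\ch(S)=\emptyset$ and then commutes the marginalizations via \cref{prop:cond_marg}; your direct partition argument should be preceded by that reduction (or by an explicit check that the marginalization-induced edges are likewise nested).
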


This result implies that the ordering of applying the conditioning operation does not matter up to counterfactual equivalence of SCMs. Therefore, there is essentially no ambiguity in referring to $M_{|X_S\in \Si}$ when the non-empty set $S$ is not a singleton. For marginalization \citep[Proposition 5.4]{bongers2021foundations}, a stronger property holds: marginalizing out variables in different orderings yields equivalent marginal SCMs. Overall, marginalization and conditioning commute both with each other and with themselves up to counterfactual equivalence. So, given a set of latent variables and latent selection mechanisms, irrespective of the intermediate steps taken, one consistently arrives at counterfactually indistinguishable models via marginalization and the conditioning operation. This underscores the robustness and reliability of the overall procedure for model abstraction.

\begin{example}[Iterative conditioning and joint conditioning]\label{ex:count_cond}
We give an example showing that applying the conditioning operation iteratively with different orders gives non-equivalent conditioned SCMs and joint conditioning gives a finer model than iterative conditioning does.

Consider an SCM
\[
    M:\left\{\begin{aligned}
              X_{w_1}&\sim \Uni\{1,2,3\}, X_{w_2}\sim \Uni\{1,2\}, \\
              X_{S_1}&=(X_{w_1},X_{w_2}), X_{S_2}=(X_{w_1},X_{w_2}).
    \end{aligned}\right.
\]
Take $\Si_1\coloneqq \{(2,1),(2,2),(3,1),(3,2)\}$ and $\Si_2\coloneqq \{(1,2),(2,1),(2,2)\}$. Write 
\[
    \begin{aligned}
    (M_{|\Si_1})_{|\Si_2}&=(V^{12},W^{12},\Xc^{12},f^{12},\Prb^{12})\\
    (M_{|\Si_2})_{|\Si_1}&=(V^{21},W^{21},\Xc^{21},f^{21},\Prb^{21})\\
    M_{|\Si_1\times\Si_2}&=(V^{1\times 2},W^{1\times 2},\Xc^{1\times 2},f^{1\times 2},\Prb^{1\times 2}).
    \end{aligned}
\]
Then we have $W^{12}=\{w_1,w_2\}=W$, $W^{21}=\{\{w_1,w_2\}\}$ (nodes $w_1$ and $w_2$ merge) and $\hat{W}^{1\times 2}=\{w_1,w_2\}=W$. Therefore, $(M_{|\Si_1})_{|\Si_2}\not\equiv (M_{|\Si_2})_{|\Si_1}$ and $(M_{|\Si_2})_{|\Si_1}\not\equiv M_{|\Si_1\times \Si_2}$, but $(M_{|\Si_1})_{|\Si_2}\equiv M_{|\Si_1\times \Si_2}$. In addition, $M_{|\Si_1\times \Si_2}$ is finer than $(M_{|\Si_2})_{|\Si_1}$.
\end{example}

\begin{remark}
The phenomenon in \cref{ex:count_cond} occurs since: two exogenous random variables ($X_{w_1}$ and $X_{w_2}$) merged due to conditioning on $X_{S_2}\in \Si_2$ may become independent after further conditioning on $X_{S_1}\in \Si_1$, while the conditioned SCM $M_{|X_{S_2}\in \Si_2}$ only records the information of the label set $\hat{W}$ where $w_1$ and $w_2$ become merged but forgets the original label set $W$ where $w_1$ and $w_2$ are distinct. Therefore, $w_1$ and $w_2$ are distinct in $M_{|\Si_1\times \Si_2}$ but merge in $(M_{|\Si_2})_{|\Si_1}$.

This problem can be fixed by modifying the definition of SCMs. For example, one can adapt the definition by equipping it with the information of a specific partition of $W$. We do not pursue this approach in the current manuscript and opt in \cref{def:cdSCM} for the ``joint'' version of the conditioning operation.
\end{remark}

Often, the selection nodes do not have children, as depicted in Figure \ref{sg1}. Even when the selection nodes have children, one can always create a copy of it, reducing it to the situation where the selection nodes lack children. For example, consider $G$ depicted in Figure \ref{sg2}, where selections occur on $A$ and $C$ by conditioning on a common effect $S_1$ and on $D$ by conditioning on $S_2$. Introducing a copy $\tilde{S}$ of $(S_1,S_2)$, or setting $\tilde{S}=\mathbbm{1}_{\{(S_1,S_2)\in \Si\}}$, yields the causal graph $\tilde{G}$, where conditioning is performed on $\tilde{S}$ instead of $(S_1,S_2)$. Consequently, in the causal graph $\tilde{G}$, the selection node $\tilde{S}$ does not have children. The following lemma validates this construction.

\begin{restatable}[Conditioning on binary variable without children]{lemma}{indicator}\label{lem:indicator}
Let $(M,X_S\in\Si)$ be a simple s-SCM and $(\tilde{M},X_{\tilde{S}}=1)$ be another simple s-SCM where $\tilde{M}\coloneqq (\tilde{V},W,\tilde{X},\Prb,\tilde{f})$ is such that $\tilde{V}=V\dcup\{ \tilde{S}$\}, $\tilde{\Xc}=\Xc\times \Xc_{\tilde{S}}\coloneqq\Xc\times \{0,1\}$ and $\tilde{f}_{\tilde{V}\sm \tilde{S}}(x_{\tilde{V}},x_W)=f_{V}(x_V,x_W)$ and $\tilde{f}_{\tilde{S}}(x_{\tilde{V}},x_W)=\Ibbm_{\Si}(x_S)$.  Then $M_{|X_S\in\Si}\equiv (\tilde{M}_{|X_{\tilde{S}}=1})_{\sm S}$.
\end{restatable}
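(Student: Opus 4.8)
The plan is to unfold both sides of the claimed equivalence using \cref{def:cdSCM} (conditioning) and \cref{def:marg} (marginalization), and to check that the two resulting SCMs have the same endogenous nodes $V\sm S$, the same merged exogenous nodes and distribution, and structural maps that agree for almost all exogenous values; equivalence then follows from \cref{def:equiv}. The engine of the argument is one elementary observation about the construction of $\tilde M$: both $\tilde f_{\tilde S}(x_{\tilde V},x_W)=\Ibbm_\Si(x_S)$ and the equations $\tilde f_V=f_V$ for the original nodes are \emph{independent of} $x_{\tilde S}$. Hence $\tilde M$ merely appends a childless deterministic sink to $M$, the law of $(X_V,X_W)$ is identical under $M$ and $\tilde M$, and the selection event satisfies $\{X_{\tilde S}=1\}=\{X_S\in\Si\}$ up to a $\Prb$-null set (so $\Prb_{\tilde M}(X_{\tilde S}=1)=\Prb_M(X_S\in\Si)>0$ by \cref{ass}, and the conditioning of $\tilde M$ is well defined). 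In particular $\Prb_{\tilde M}(X_W\mid X_{\tilde S}=1)=\Prb_M(X_W\mid X_S\in\Si)$, so the finest factorizing partition $\hat W$ (\cref{lem:partition}, \cref{def:cdSCM}(ii)) and the conditioned exogenous distribution $\hat\Prb$ coincide for $\tilde M_{|X_{\tilde S}=1}$ and for $M_{|X_S\in\Si}$.

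Next I would compute $N\coloneqq\tilde M_{|X_{\tilde S}=1}$ explicitly. Its endogenous set is $\tilde V\sm\{\tilde S\}=V$. Because $\tilde f_{\tilde S}$ does not depend on $x_{\tilde S}$, the essentially unique solution function of $\tilde M$ \wrt $\{\tilde S\}$ is the trivial map $(x_V,x_W)\mapsto\Ibbm_\Si(x_S)$; substituting it into \cref{def:cdSCM}(v) and using $\tilde f_V(x_V,x_{\tilde S},x_W)=f_V(x_V,x_W)$, the structural map of $N$ collapses to $(x_V,x_{\hat W})\mapsto f_V(x_V,x_{\hat W})$. Thus $N$ is the SCM with endogenous nodes $V$, exogenous data $(\hat W,\hat\Prb)$, and the \emph{same} structural equations as $M$.

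Then I would marginalize $N$ over $S$. Conditioning preserves simplicity (\cref{prop:modelcalss}, \cref{rem:defcond}(4)), so $N$ is simple and $N_{\sm S}$ is well defined, with exogenous part $(\hat W,\hat\Prb)$ unchanged by \cref{def:marg}. Since the $S$-block of $N$ coincides with $f_S$, the fixed-point equation defining a solution function of $N$ \wrt $S$ is identical to the one defining the solution function $g^S$ of $M$ \wrt $S$; because $\Prb$-null sets are $\hat\Prb$-null sets, the ``for $\Prb$-almost all $x_W$'' clause for $g^S$ upgrades to ``for $\hat\Prb$-almost all $x_W$'', so $g^S$ is a valid, hence (by unique solvability) the, solution function of $N$ \wrt $S$. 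Plugging $g^S$ into \cref{def:marg} yields
\[
\tilde f'(x_{V\sm S},x_{\hat W})=f_{V\sm S}\big(x_{V\sm S},g^S(x_{V\sm S},x_{\hat W}),x_{\hat W}\big),
\]
which is exactly the structural map $\hat f$ of $M_{|X_S\in\Si}$. As the endogenous nodes, exogenous data, and structural maps all agree (the last up to the choice of $g^S$, immaterial by \cref{rem:defcond}(1)), \cref{def:equiv} gives $M_{|X_S\in\Si}\equiv(\tilde M_{|X_{\tilde S}=1})_{\sm S}$.

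Each step is short; the only points needing genuine care — and which I expect to be the main (though mild) obstacle — are the two ``transfer'' claims across models. First, that appending the deterministic indicator $\tilde S$ leaves the conditional exogenous law, and therefore the finest factorizing partition $\hat W$, \emph{literally} unchanged, so that the merged exogenous structures on the two sides genuinely match rather than being merely isomorphic. Second, that $g^S$ is legitimately reusable as the solution function of $N$ required for the marginalization, which hinges precisely on the absolute-continuity fact that $\Prb$-null sets stay null under $\Prb_M(\,\cdot\mid X_S\in\Si)$. Once these are secured, the coincidence of the two structural maps is a direct substitution.
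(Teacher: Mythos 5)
Your proposal is correct and follows essentially the same route as the paper's proof: both arguments verify component-by-component that the two tuples agree, with the key observations being that $\{X_{\tilde S}=1\}=\{X_S\in\Si\}$ up to a $\Prb$-null set (so the merged exogenous label sets and posterior distributions coincide) and that the solution function $g^S$ of $M$ w.r.t.\ $S$ transfers to the conditioned model because $\Prb$-null sets remain null under the posterior. Your version is somewhat more explicit than the paper's (which compresses the final substitution into ``it is also obvious that''), but there is no substantive difference in approach.
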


\begin{figure}[ht]
\centering
\begin{tikzpicture}[scale=0.8, transform shape]
  \begin{scope}[xshift=0cm]
    \node[ndout] (C) at (2,4) {$C$};
    \node[ndout] (A) at (1,3) {$A$};
    \node[ndout] (B) at (3,3) {$B$};
    \node[ndsel] (S) at (2,2) {$S$};
    \draw[arout] (A) to (C);
    \draw[arout] (B) to (C);
    \draw[arout] (A) to (S);
    \draw[arout] (B) to (S);
    \draw[arout] (A) to (B);
    \node at (2,1) {$G_1$};
  \end{scope}
  \begin{scope}[xshift=5cm]
    \node[ndout] (C) at (2,4) {$C$};
    \node[ndout] (A) at (1,3) {$A$};
    \node[ndout] (B) at (3,3) {$B$};
    \node[ndsel] (S) at (1,2) {$S$};
    \draw[arout] (A) to (C);
    \draw[arout] (C) to (B);
    \draw[arout] (A) to (S);
    \draw[arout] (A) to (B);
    \node at (2,1) {$G_2$};
  \end{scope}
\end{tikzpicture}
\caption{Causal graphs representing selection on node $S$.}
\label{sg1}
\end{figure}

\begin{figure}[ht]
\centering
\begin{tikzpicture}[scale=0.8, transform shape]
  \begin{scope}[xshift=-1cm]
    \node[ndout] (C) at (2,4) {$C$};
    \node[ndout] (A) at (1,3) {$A$};
    \node[ndsel] (S1) at (3,3) {$S_1$};
    \node[ndout] (B) at (5,3) {$B$};
    \node[ndout] (D) at (0,4) {$D$};
    \node[ndsel] (S2) at (0,1.9) {$S_2$};
    \draw[arout] (C) to (B);
    \draw[arout] (C) to (S1);
    \draw[arout] (A) to (S1);
    \draw[arout] (S1) to (B);
    \draw[arout] (D) to (S2);
    \node at (3,1) {$G$};
  \end{scope}
  \begin{scope}[xshift=7cm]
    \node[ndout] (C) at (2,4) {$C$};
    \node[ndout] (A) at (1,3) {$A$};
    \node[ndout] (S1) at (3,3) {$S_1$};
    \node[ndout] (B) at (5,3) {$B$};
    \node[ndsel] (S') at (3,1.9) {$\tilde{S}$};
    \node[ndout] (D) at (0,4) {$D$};
    \node[ndout] (S2) at (0,1.9) {$S_2$};
    \draw[arout] (C) to (B);
    \draw[arout] (C) to (S1);
    \draw[arout] (A) to (S1);
    \draw[arout] (S1) to (B);
    \draw[arout] (D) to (S2);
    \draw[arout] (S1) to (S');
    \draw[arout] (S2) to (S');
    \node at (3,1) {$\tilde{G}$};
  \end{scope}
\end{tikzpicture}
\caption{Causal graphs representing selection on nodes $S_1$ and $S_2$, and on node $\tilde{S}$, respectively.}
\label{sg2}
\end{figure}

Based on the above observation, we give a conditioning operation for if we want to observe the selection variable. Let $M=\SCM$ be a simple SCM. Suppose that we want to condition on $X_S\in \Si$ for $S\subseteq V$ and $\Si\subseteq \Xc_S$ so that $\Prb_M(X_S\in \Si)>0$ but we still want to observe the values of $X_S$. Then a solution is that we introduce a selection variable $X_{\tilde{S}}$ such that $X_{\tilde{S}}=\Ibbm_{\Si}(X_S)$ as we did in Lemma~\ref{lem:indicator} and condition on $X_{\tilde{S}}=1$.

\subsection{Conditioning operation for DMGs}\label{sec:condDMG}

The conditioning operation (\cref{def:cdSCM}) is defined on simple SCMs. For causal modeling purposes, people often use causal graphs to communicate causal knowledge without referring to the precise underlying SCMs. To support this, we give a purely graphical conditioning operation defined on directed mixed graphs (DMGs). The idea is: (i) to add bidirected edges to node pairs that are ancestors of the conditioned nodes or siblings of ancestors of the conditioned nodes, and then (ii) to graphically marginalize the conditioned nodes out. 

\begin{definition}[Conditioned DMG]\label{def:cond_dmg}Let $G=(V, E^d, E^b)$ be a DMG consisting of nodes $V$, directed edges $E^d$ and bidirected edges $E^b$. For $S\subseteq V$, we define the conditioned DMG $G_{|S}$ by
  \begin{enumerate}
    \item adding bidirected edges to $G$: $\{a\huh b:a,b\in \anc_{G}(S)\cup \sib_{G}(\anc_{G}(S))\}$.\footnote{$\sib_G(v)\coloneqq \{w\in G\mid v\huh w \text{ is in } G\}$.}

    \item marginalizing out $S$ and marking ancestors of $S$ as dashed.
 \end{enumerate}
\end{definition}
The definition is inspired by the conditioning operation for SCMs. As we shall show, the purely graphical conditioning operation is compatible with the SCM conditioning operation.

\begin{example}[DMGs conditioning]\label{ex:cond_dmg}
    We show an example of the purely graphical conditioning operation. Assume that we are given a graph $G$ as shown in Figure \ref{fig:cond_dmg}. Then conditioning on node $V_5$ gives the graph $G_{|V_5}$ shown in Figure \ref{fig:cond_dmg}.
\begin{figure}[ht]
\centering
\begin{tikzpicture}[scale=0.8, transform shape]
  \begin{scope}[xshift=0cm]
    \node[ndout] (v1) at (4,4.5) {$V_1$};
    \node[ndout] (v2) at (6.5,1) {$V_6$};
    \node[ndout] (v3) at (2,2) {$V_8$};
    \node[ndout] (v4) at (3,3) {$V_2$};
    \node[ndout] (v5) at (5,3) {$V_3$};
    \node[ndout] (v6) at (7.5,3) {$V_4$};
    \node[ndout] (v7) at (4,2) {$V_5$};
    \node[ndout] (v8) at (5,1) {$V_7$};
    \draw[arout] (v1) to (v4);
    \draw[arout, bend left] (v2) to (v8);
    \draw[arout, bend left] (v8) to (v2);
    \draw[arout] (v4) to (v3);
    \draw[arout] (v4) to (v7);
    \draw[arout] (v5) to (v7);
    \draw[arout] (v7) to (v8);
    \draw[arout] (v1) to (v5);
    \draw[arout] (v6) to (v8);
    \draw[arout] (v5) to (v8);
    \draw[arlat, bend left] (v1) to (v6);
    \node at (5,0) {$G$};
  \end{scope}
  \begin{scope}[xshift=8cm]
    \node[ndash] (v1) at (4,4.5) {$V_1$};
    \node[ndout] (v2) at (6.5,1) {$V_6$};
    \node[ndash] (v4) at (3,3) {$V_2$};
    \node[ndash] (v5) at (5,3) {$V_3$};
    \node[ndout] (v6) at (7.5,3) {$V_4$};
    \node[ndout] (v8) at (5,1) {$V_7$};
    \node[ndout] (v3) at (2,2) {$V_8$};
    \draw[arout] (v4) to (v3);
    \draw[arout] (v4) to (v8);
    \draw[arout] (v1) to (v4);
    \draw[arout] (v1) to (v5);
    \draw[arout] (v6) to (v8);
    \draw[arout] (v5) to (v8);
    \draw[arout, bend left] (v2) to (v8);
    \draw[arout, bend left] (v8) to (v2);
    \draw[arlat, bend left] (v1) to (v6);
    \draw[arlat, bend left] (v5) to (v6);
    \draw[arlat, bend left] (v4) to (v6);
    \draw[arlat, bend left] (v4) to (v5);
    \draw[arlat, bend left] (v4) to (v1);
    \draw[arlat, bend left] (v1) to (v5);
    \node at (5,0) {$G_{|V_5}$};
  \end{scope}
\end{tikzpicture}
\caption{DMG $G$ and its conditioned DMG $G_{|V_5}$ in Example~\ref{ex:cond_dmg}.}
\label{fig:cond_dmg}
\end{figure}
\end{example}

The following result shows that the conditioning graph $G_{|S}$ represents $\sigma$-separations (also $d$-separations) encoded in the original graph $G$ soundly (but not completely in general). The notion of $\sigma$-separation (\cref{def:sigma_sep}) is defined for DMGs and reduces to normal $d$-separation (or $m$-separation) of acyclic DMGs  \citep{richardson03markov_admg} if there are no cycles \citep{Forre2017markov}. Note that when there are cycles, $\sigma$-separation implies $d$-separation but not the other direction. 

\begin{restatable}[Main result II: Graphical separation in conditioning graph]{theorem}{dsep}\label{thm:dsep}
  Let $G=(V, E^d, E^b)$ be a DMG and $S \subseteq V$ a set of nodes. Then for any subsets of nodes $A, B, C \subseteq V$ such that
  \[
    S \cap(A \cup B \cup C)=\emptyset,
  \]
  it holds that
  \[
    A \underset{G_{|S}}{\stackrel{\sigma/d}{\perp}} B\mid C \quad \Longrightarrow \quad A \underset{G}{\stackrel{\sigma/d}{\perp}} B\mid C\cup S.
  \]
  If furthermore $S$ is a singleton set with $\ch_{G}(S)=\emptyset$ and $C\cap\anc_G(S)=\emptyset$, then we have
  \[
    A \underset{G}{\stackrel{\sigma/d}{\perp}} B\mid C\cup S \quad \Longrightarrow \quad A \underset{G_{|S}}{\stackrel{\sigma/d}{\perp}} B\mid C .
  \]
\end{restatable}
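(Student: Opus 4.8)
The plan is to prove both implications by contraposition, translating each non-separation into a connecting walk and performing ``walk surgery'' to move between $G$ and $G_{|S}$. Throughout I would write $R\coloneqq\anc_G(S)\cup\sib_G(\anc_G(S))$ for the set that becomes a bidirected clique in $G_{|S}$ (step (1) of \cref{def:cond_dmg}), and use that graphical marginalization of $S$ (step (2)) preserves ancestral relations among the remaining nodes, so that a directed path to $C$ in $G$ not ending in $S$ survives in $G_{|S}$. I would first record the basic fact that on a $d$-connecting walk given $C\cup S$ every node of $S$ occurs as a collider (a non-collider lying in the conditioning set blocks), so the walk meets $S$ only at colliders whose two walk-neighbours carry arrowheads and therefore lie in $R$. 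For the $\sigma$-version one must additionally allow non-collider $S$-nodes sitting in a non-trivial strongly connected component with a walk-neighbour; this is the extra bookkeeping the $\sigma$-case needs but it does not change the overall scheme.

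For soundness ($\Rightarrow$) I take a connecting walk from $A$ to $B$ given $C\cup S$ in $G$ and replace each maximal sub-walk through $S$ (a chain of $S$-colliders) by the single bidirected edge $u\leftrightarrow w$ of $G_{|S}$ joining its endpoints $u,w\in R$. The only defect this can introduce is a newly created collider at a border node whose original edge into $S$ was a tail (i.e.\ $u\to s$, forcing $u\in\anc_G(S)$). The main obstacle, and the technical heart of the argument, is to repair these defects; I would do this by an ``eating'' procedure: starting from such a $u$, walk backwards along the incoming arrowheads, re-routing the bidirected edge to $w$ from successively earlier nodes. The key invariant is that every eaten node stays in $R$ — a collider one is \emph{forced} to eat is open, hence in $\anc_G(C\cup S)$, but not usable, i.e.\ not in $\anc_G(C)$, so it lies in $\anc_G(S)$ and its predecessor is again a parent or sibling of an ancestor of $S$. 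The process terminates at a node that is either a fork/tail (a non-collider, not in $C$), an endpoint in $A\cup B$, or a genuine collider in $\anc_G(C)$; routing the bidirected edge there yields an admissible node, using $\anc_G(C)\sm S\subseteq\anc_{G_{|S}}(C)$. This produces a connecting walk given $C$ in $G_{|S}$.

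For completeness ($\Leftarrow$) under $S=\{s\}$, $\ch_G(s)=\emptyset$ and $C\cap\anc_G(s)=\emptyset$, I reverse the construction. Here $\ch_G(s)=\emptyset$ guarantees that $s$ is a collider on every path through it, so marginalization creates no new edges and the edge set of $G_{|s}$ is exactly the $G$-edges on $V\sm\{s\}$ together with the added bidirected clique on $R\sm\{s\}$. Given a connecting walk in $G_{|s}$ given $C$, I keep the original edges and lift each added edge $a\leftrightarrow b$ to a collider path through $s$, namely a directed path $a\to\cdots\to s$ (prefixed by a bidirected step $a\leftrightarrow a'$ when $a\in\sib_G(\anc_G(s))\sm\anc_G(s)$) concatenated at $s$ with the reversed lift of $b$. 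The collider at $s$ is open since $s\in C\cup\{s\}$; the condition $C\cap\anc_G(s)=\emptyset$ is exactly what keeps the interior nodes of these directed paths (all ancestors of $s$) out of the conditioning set, so they remain admissible non-colliders, and it also forces the endpoints that turn from colliders into forks to avoid $C$. A short check that the collider status at the shared junction nodes is preserved — marks are altered only on the $s$-ward side, which is internal to each lifted segment — then shows the lift is connecting given $C\cup\{s\}$ in $G$. The $d$- and $\sigma$-cases run in parallel, the only difference being the scc-sensitive collider/non-collider conditions, which both surgeries respect.
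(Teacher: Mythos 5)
Your proposal is correct in substance and follows the same overall strategy as the paper: both implications are handled by contraposition and walk surgery, and your completeness half (lifting each added bidirected edge to a collider walk through $s$, using $\ch_{G}(s)=\emptyset$ to control the edge set of $G_{|s}$, $s\in C\cup S$ to open the collider at $s$, and $C\cap\anc_G(s)=\emptyset$ to keep the lifted interior nodes and the endpoints that lose an arrowhead out of $C$) is essentially identical to the paper's. The soundness half differs in mechanism. The paper first reduces to a singleton $S=\{s\}$ occurring at most once on the walk (via \cref{prop:prop_graph_cond} and a shortcutting argument), then replaces the entire stretch between the leftmost and rightmost walk nodes $v_l,v_r$ lying in $\anc_G(S)\cup\sib_G(\anc_G(S))$ by the single edge $v_l\huh v_r$, and only has to check openness at those two nodes. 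Your version replaces each $S$-segment locally and repairs defects by iteratively extending the bidirected edge backwards; the key invariant (an inadmissible collider is an ancestor of $S$ that is not an ancestor of $C$, so it and its predecessor remain inside the bidirected clique) is exactly the observation that makes the paper's edge $v_l\huh v_r$ exist, so the two surgeries are the same idea packaged differently. What the paper's reductions buy is fewer interacting cases: in your version you must also (i) handle defects at border nodes that were \emph{already} colliders on the original walk and lie in $\anc_G(S)\sm\anc_G(C)$ --- your claim that the only defect is a newly created collider is too narrow, although your eating step happens to cover this case as well; (ii) merge repairs when two $S$-segments are close enough that their backward extensions collide; and (iii) verify that the $\sigma$-unblockability (strongly-connected-component) conditions survive each re-routing, which you assert but do not check. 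None of these is a fatal obstacle, but they are where the remaining work hides, and the paper's normalization to a single occurrence of a singleton $s$ is what makes them disappear.
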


\begin{remark}\label{rem:cond_dmg}
    If we only consider a singleton conditioning event $\Si=\{x_S\}$ (see Example \ref{ex:counterex_ci} and Lemma \ref{lem:ci_event} for the subtle difference), then we can define another version of the graphical conditioning operation transforming $G$ to $G_{\mathrm{cd}(S)}$ (``cd'' represents condition on). It is defined by $G_{\mathrm{cd}(S)}\coloneqq(G_{\ul{S}})_{|S}$, i.e., we first delete all the arrows emerging from $S$ and then apply the original conditioning operation to $G_{\ul{S}}$. It is easy to see that this new construction can strengthen the above result by removing the assumption that $\ch_G(S)=\emptyset$. This is similar in spirit to marking constant variables and deterministic dependencies distinctly in a causal graph to capture more conditional independence information escaping from the usual $d$-separation Markov property (see, e.g., \citet{geiger90identifying,spirtes2001causation}).
\end{remark}

The proof of the second claim in \cref{thm:dsep} relies on the three assumptions. See \cref{ex:counterex_sep} in \cref{app:ex}.



\begin{restatable}[Graph conditioning commutes with marginalization, conditioning and intervention]{proposition}{propgraphcond}\label{prop:prop_graph_cond}
  Let $G=(V, E^d, E^b)$ be a DMG.
    \begin{enumerate}
        \item Let $L \subseteq  V$ and $S \subseteq V$ be two disjoint subsets of nodes from $G$. Then we have
        $$
        \left(G_{\sm L}\right)_{|S}=\left(G_{|S}\right)_{\sm L}.
        $$
        
        \item Let $S_1, S_2 \subseteq V$ be two disjoint subsets. Then we have
        $$
        \left(G_{| S_1}\right)_{| S_2}=\left(G_{|S_2}\right)_{| S_1}\subseteq G_{|S_1\cup S_2}.
        $$

        \item Let $T \subseteq  V$ and $S \subseteq V$ be two disjoint subsets of nodes from $G$ such that $T\cap \anc_G(S)=\emptyset$. Then we have
        $$
        \left(G_{\mathrm{do}\left(T\right)}\right)_{|S}=\left(G_{|S}\right)_{\operatorname{do}\left(T\right)}.
        $$
    \end{enumerate}
\end{restatable}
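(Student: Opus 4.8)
The plan is to prove all three identities with a common template. Each asserts an equality (and, in part (2), also an inclusion) of DMGs on the same vertex set, so it suffices to match, on both sides, the directed edges, the bidirected edges, and the dashed markings. The organising device I would use is to rewrite conditioning as a single marginalization: writing $G^{+S}$ for the graph obtained from $G$ by adding the bidirected clique on $\anc_G(S)\cup\sib_G(\anc_G(S))$ and dashing $\anc_G(S)$, \cref{def:cond_dmg} reads $G_{|S}=(G^{+S})_{\sm S}$. Together with associativity and monotonicity of graph marginalization ($(G_{\sm L_1})_{\sm L_2}=G_{\sm(L_1\cup L_2)}$, and $G\subseteq G'\Rightarrow G_{\sm L}\subseteq G'_{\sm L}$), this collapses each right-hand composite to a marginalization over a single vertex set and lets me compare edge sets by tracing paths in one underlying graph.

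Two invariance facts dispatch the directed edges and the dashings. First, marginalization preserves directed reachability among retained vertices, so $\anc_{G_{\sm L}}(S)=\anc_G(S)\sm L$; the clique in $G^{+S}$ adds only bidirected edges and intervention only deletes edges, so neither creates a directed path, giving $\anc_{G^{+S}}(\cdot)=\anc_G(\cdot)$ and, under the hypothesis $T\cap\anc_G(S)=\emptyset$ of part (3), $\anc_{G_{\Do(T)}}(S)=\anc_G(S)$. Consequently the dashed set agrees on both sides of each identity (it is the appropriate image of $\anc_G(S)$: namely $\anc_G(S)\sm L$ in (1), $\anc_G(S_1)\cup\anc_G(S_2)=\anc_G(S_1\cup S_2)$ in (2), and $\anc_G(S)$ in (3)), and the directed edges agree because a directed edge between retained vertices exists iff there is a directed path through the removed vertices, a condition symmetric in the order of the operations and blind to the (bidirected) clique.

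The substance is the bidirected edges. For part (1) I would reduce, via the reframing, to $((G_{\sm L})^{+S})_{\sm S}=(G^{+S})_{\sm(L\cup S)}$ and characterise when $a\huh b$ appears between retained vertices as: existence of a path of $G^{+S}$ (equivalently, of $G$ with the ancestor clique) whose interior lies in $L\cup S$, is non-colliding there, and carries arrowheads at $a$ and $b$. The crucial and easily-missed point is that the clique is drawn on the full set $\anc_G(S)$, hence includes the selection vertices and any ancestors lying in $L$; a clique edge $a\huh\ell$ with $\ell\in(L\cup S)\cap\anc_G(S)$ composes with an outgoing edge of $\ell$ under marginalization to produce further bidirected edges, and these internal clique edges are exactly what makes the two orders coincide. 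I would verify that the characterisation is order-independent, using that a selection vertex (carrying only incoming directed and clique-bidirected edges pointing into it) is always a collider and so contributes nothing spurious, and that the siblings manufactured by marginalizing $L$ first are precisely reproduced by the clique-then-marginalize route. Part (3) is the same path analysis with $\Do(T)$ in place of one marginalization: it deletes exactly the edges carrying an arrowhead at $T$, so it shrinks the sibling set $\sib_G(\anc_G(S))$ only at $T$-vertices; the one case to check is $t\in T\cap\sib_G(\anc_G(S))$, where intervening first deletes $t$ from the sibling set (no clique edge at $t$ is ever drawn) whereas conditioning first draws such edges only for intervention to delete them, so the orders agree, after also invoking the commutation of marginalization and intervention valid when $T\cap S=\emptyset$.

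For part (2) I would reduce both iterated graphs to marginalizations over $S_1\cup S_2$ by associativity and obtain the inclusion in $G_{|S_1\cup S_2}$ from two observations: $\anc_{G_{|S_1}}(S_2)=\anc_G(S_2)\sm S_1$, and every clique endpoint arising in either conditioning step either lies in $\anc_G(S_1\cup S_2)\cup\sib_G(\anc_G(S_1\cup S_2))$ or is linked by a path through $S_1\cup S_2$ that $G_{|S_1\cup S_2}$ also marginalizes; by monotonicity the iterated graph is then a subgraph of the joint one. The equality $(G_{|S_1})_{|S_2}=(G_{|S_2})_{|S_1}$ I would get by tracing a single bidirected edge and exhibiting its presence as a path condition in $G$ manifestly symmetric in $S_1$ and $S_2$. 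The main obstacle throughout is precisely this bidirected-edge accounting: one must confirm that the edges conjured by marginalizing a removed vertex match those produced by the clique-then-marginalize route in the opposite order, and the entire argument rests on the point—simple but decisive—that the conditioning clique is installed on all ancestors of the selection set, including the selection vertices themselves and any ancestors about to be marginalized out or intervened upon.
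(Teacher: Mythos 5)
Your proposal is correct and follows essentially the same route as the paper: your $G^{+S}$ is exactly the paper's intermediate graph $G_{\abe(S)}$, and like the paper you reduce all three claims to commutation lemmas for this clique-addition step (with marginalization, with itself, and with intervention via the case split on $t\in\sib_G(\anc_G(S))$), combined with associativity/monotonicity of graph marginalization. The key observations you flag — that the clique is installed on all of $\anc_G(S)$ including $S$ itself and any ancestors in $L$, and that intervening on a sibling node deletes exactly the clique edges that conditioning-first would have drawn — are the same ones the paper's Lemmas on $\abe$ rely on.
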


\begin{remark}\label{rem:graph_cond_cartesian}
      One should note that $\left(G_{| S_1}\right)_{| S_2}$ could be a strict subgraph of $G_{|\left(S_1 \cup S_2\right)}$. An example is shown in Figure~\ref{ce4}. When knowing that the conditioned set can be decomposed as a Cartesian product, one should use the iterative conditioned graph, which is finer than the jointly conditioned graph. See \cref{sec:graph_cond_cartesian}.
  \begin{figure}[ht]
  \centering
  \begin{tikzpicture}[scale=0.8, transform shape]
    \begin{scope}[xshift=0cm]
      \node[ndout] (C) at (2,4) {$C$};
      \node[ndout] (A) at (0,4) {$A$};
      \node[ndout] (B) at (3,4) {$B$};
      \node[ndout] (S1) at (1,2.5) {$S_1$};
      \node[ndout] (S2) at (3,2.5) {$S_2$};
      \draw[arout] (C) to (S1);
      \draw[arout] (B) to (S2);
      \draw[arout] (A) to (S1);
      \node at (2,1.5) {$G$};
    \end{scope}
    \begin{scope}[xshift=5cm]
      \node[ndash] (C) at (2,4) {$C$};
      \node[ndash] (A) at (0,4) {$A$};
      \node[ndash] (B) at (3,4) {$B$};
      \draw[arlat, bend left] (A) to (C);
      \node at (2,1.5) {$(G_{|S_1})_{|S_2}=(G_{|S_2})_{|S_1}$};
    \end{scope}
    \begin{scope}[xshift=10cm]
      \node[ndash] (C) at (1.5,4) {$C$};
      \node[ndash] (A) at (0,3) {$A$};
      \node[ndash] (B) at (3,3) {$B$};
      \draw[arlat, bend left] (A) to (C);
      \draw[arlat, bend left] (A) to (B);
      \draw[arlat, bend left] (C) to (B);
      \node at (1.5,1.5) {$G_{|(S_1\cup S_2)}$};
    \end{scope}
  \end{tikzpicture}
  \caption{$\left(G_{| S_1}\right)_{| S_2}=\left(G_{|S_2}\right)_{| S_1}\subsetneq G_{|\left(S_1 \cup S_2\right)}$}
  \label{ce4}
  \end{figure}
\end{remark}

The following proposition states that the purely graphical conditioning operation is compatible with the SCM conditioning operation. See \cref{rem:cond_scm_dmg} in \cref{app:ex} for some remarks on \cref{prop:cond_scm_dmg}.

\begin{restatable}[Main result III: DMG conditioning is compatible with SCM conditioning]{proposition}{condscmdmg}\label{prop:cond_scm_dmg}
  Let $M$ be a simple SCM with conditioned SCM $M_{|X_S\in\Si}$. Then $G(M_{|X_S\in\Si})$ is a subgraph of $G(M)_{|S}$. If furthermore $S=\{s_1,\ldots, s_n\}$ and $\Si=\bigtimes_{i=1}^n\Si_i$ with $\Si_i\subseteq \Xc_{s_i}$ measurable for $i=1,\ldots,n$, then $G(M_{|X_S\in\Si})$ is a subgraph of $((G(M)_{|s_1})_{\ldots})_{|s_n}$.
\end{restatable}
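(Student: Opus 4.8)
The plan is to compare the two graphs edge by edge on their common vertex set $V\sm S$, classifying the edges of $G(M_{|X_S\in\Si})$ according to their origin in \cref{def:cdSCM}. Write $M_{\sm S}$ for the marginalization of $M$ over $S$ (\cref{def:marg}); by construction the structural map $\hat f$ of $M_{|X_S\in\Si}$ is literally that of $M_{\sm S}$, and passing from $M_{\sm S}$ to $M_{|X_S\in\Si}$ only (a) reweights the exogenous law, which does not touch the graph, and (b) merges exogenous variables into the blocks $\hat W$, which can only \emph{add} bidirected edges. Hence $G(M_{|X_S\in\Si})$ has exactly the directed edges of $G(M_{\sm S})$, all of its bidirected edges, and possibly extra bidirected edges $i\leftrightarrow j$ produced by a merge. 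For the first two families I would invoke the marginalization--graph inclusion $G(M_{\sm S})\subseteq (G(M))_{\sm S}$ of \citet{bongers2021foundations} together with monotonicity of latent projection: since $(G(M))_{|S}$ is obtained from $G(M)$ by first adding bidirected edges and then projecting out $S$, and any path witnessing an edge in a smaller graph survives in a larger one, we get $(G(M))_{\sm S}\subseteq (G(M))_{|S}$; moreover adding bidirected edges before projecting cannot create new directed edges, so the directed parts agree. Thus all non-merge edges already lie in $(G(M))_{|S}$.

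The heart of the argument, and the step I expect to be the main obstacle, is the extra ``merge'' edges. Such an edge $i\leftrightarrow j$ arises from two \emph{distinct} exogenous nodes $w_1\ne w_2$ lying in the same block $\hat w_k$ with $\hat f_i$ depending on $x_{w_1}$ and $\hat f_j$ on $x_{w_2}$; lifting these dependences through the solution function $g^S$ yields directed paths $w_1\to\cdots\to i$ and $w_2\to\cdots\to j$ in $G^a(M)$ whose interior vertices lie in $S$. The crucial input is \cref{rem:defcond}(3): since $w_1,w_2$ were merged (their block is not a singleton), both are ancestors of $S$ in $G^a(M)$. Let $c_1$ (\resp $c_2$) be the first endogenous vertex on the $w_1$-- (\resp $w_2$--) path, so that $c_1\in\{i\}\cup S$, $c_2\in\{j\}\cup S$, and $c_1,c_2$ are children of $w_1,w_2$ in $G^a(M)$. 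Because $w_1\in\anc_{G^a(M)}(S)$, there is a directed path $w_1\to d_1\to\cdots\to s\in S$ with first endogenous vertex $d_1\in\anc_{G(M)}(S)$; as $c_1$ and $d_1$ share the exogenous parent $w_1$, either $c_1=d_1\in\anc_{G(M)}(S)$ or $c_1\leftrightarrow d_1$ in $G(M)$, whence $c_1\in\anc_{G(M)}(S)\cup\sib_{G(M)}(\anc_{G(M)}(S))$; likewise for $c_2$. Therefore \cref{def:cond_dmg}(1) inserts the bidirected edge $c_1\leftrightarrow c_2$. Concatenating the reversed $S$-path $i\leftarrow\cdots\leftarrow c_1$, this inserted edge, and the $S$-path $c_2\to\cdots\to j$ produces a path from $i$ to $j$ all of whose interior vertices lie in $S$ and are non-colliders, with arrowheads at both endpoints; projecting out $S$ (\cref{def:cond_dmg}(2)) thus yields $i\leftrightarrow j$ in $(G(M))_{|S}$. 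The degenerate case $c_1=c_2\in S$ is even easier: it is a fork in $S$ and the edge already appears in $(G(M))_{\sm S}$ with no inserted edge. This gives $G(M_{|X_S\in\Si})\subseteq (G(M))_{|S}$, and since both graphs mark exactly $\anc_{G(M)}(S)$ as non-intervenable, the dashed-node annotation is consistent.

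For the iterated statement I would combine this single-step inclusion with \cref{prop:iterative_cond} and monotonicity of the graphical conditioning operation. First, iterating the graph inclusion in \cref{prop:iterative_cond} (peeling off one $s_k$ at a time, each step being a direct application of the two-set statement to the current SCM) gives $G(M_{|X_S\in\Si})\subseteq G\big((\cdots(M_{|\Si_1})\cdots)_{|\Si_n}\big)$. Second, set $N^{(0)}=M$, $N^{(k)}=(N^{(k-1)})_{|\Si_k}$ and $H^{(0)}=G(M)$, $H^{(k)}=(H^{(k-1)})_{|s_k}$; I would prove $G(N^{(k)})\subseteq H^{(k)}$ by induction on $k$. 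The inductive step chains the single-set result $G\big((N^{(k-1)})_{|\Si_k}\big)\subseteq (G(N^{(k-1)}))_{|s_k}$ with monotonicity of $(\cdot)_{|s_k}$ (if $G\subseteq G'$ then $\anc_G(S)\subseteq\anc_{G'}(S)$ and $\sib_G\subseteq\sib_{G'}$, so more edges are inserted, and latent projection is monotone) and the inductive hypothesis $G(N^{(k-1)})\subseteq H^{(k-1)}$. Composing the two chains yields $G(M_{|X_S\in\Si})\subseteq ((\cdots(G(M)_{|s_1})\cdots)_{|s_n})$, as required. Throughout, the only genuinely delicate point is the merge step of the second paragraph; everything else reduces to bookkeeping via monotonicity of latent projection and the already-established commutation and inclusion results.
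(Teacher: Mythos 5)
Your proof is correct and follows essentially the same route as the paper's: both decompose the conditioning operation into merging, updating, and marginalizing, observe that only exogenous ancestors of $S$ in $G^a(M)$ can merge (so any merge-induced bidirected edge connects endogenous nodes lying in $\anc_{G(M)}(S)\cup\sib_{G(M)}(\anc_{G(M)}(S))$, which is precisely what the first step of \cref{def:cond_dmg} accounts for), invoke the marginalization inclusion $G(M_{\sm S})\subseteq G(M)_{\sm S}$ from Bongers et al., and handle the iterated claim via \cref{prop:iterative_cond} together with monotonicity of graphical conditioning. The only cosmetic differences are that you trace the merge-induced walks through the latent projection explicitly rather than factoring through the intermediate graph $G(M)_{\abe(S)}$ before marginalizing, and that your phrases ``does not touch the graph'' and ``exactly the directed edges of $G(M_{\sm S})$'' should read ``can only delete edges'' and ``at most'' (the posterior is absolutely continuous w.r.t.\ the prior, so parent sets can shrink) --- harmless here since the claimed inclusion only needs that direction.
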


(Generalized) Directed global Markov properties connect the causal graph $G(M)$ and the induced distribution $\Prb_M(X_V)$ of the SCM $M$ in the sense that they enable one to read off conditional independence relations from the  graph via the $d$-separation (resp.\ $\sigma$-separation) criterion (\cref{def:dmarkov,def:gmarkov}). Obviously, $\Prb_{M_{|X_S\in \Si}}(X_O)$ satisfies the (generalized) directed Markov  property relative to $G(M_{|X_S\in \Si})$ (\cref{thm:dmarkov,thm:gmarkov}).  Therefore, the above proposition immediately implies that $\Prb_{M_{|X_S\in \Si}}(X_O)$ satisfies the Markov property relative to $G(M)_{|S}$ (as Corollary \ref{cor:markov} states), illustrating the role of the conditioned graph $G(M)_{|S}$ as an effective graphical abstraction. However, one should be aware of the subtlety that one \emph{cannot} directly conclude
\[
A \underset{G(M)}{\stackrel{\sigma/ d}{\perp}} B\mid C,S
\quad  \Longrightarrow \quad X_A\underset{\Prb_M(X_V)}{\ind} X_B\mid X_C,X_S\in \Si
\]
even if $\Prb_M(X_V)$ satisfies the Markov property relative to $G(M)$. See Example \ref{ex:counterex_ci} for details.

The following lemma establishes a connection between conditional independence given a variable and a family of conditional independencies given certain events.
\begin{restatable}{lemma}{cievent}\label{lem:ci_event}
    Let $X_A,X_B,X_C$ and $X_S$ be random variables defined on a probability space $(\Omega,\Fc,\Prb)$ and $X_S$ take values in a standard measurable space $(\Xc_S,\Bc_{\Xc_S})$. Then the first statement implies the second statement:
    \begin{enumerate}
        \item $X_A\ind X_B \mid X_C,X_S\in \Hc$ for all $\Hc\in \Bc_{\Xc_S}$ with positive probability.
        \item $X_A\ind X_B \mid X_C,X_S$.
    \end{enumerate}
\end{restatable}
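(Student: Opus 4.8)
The plan is to reduce the statement to the standard factorisation characterisation of conditional independence and then to pass from conditioning on the positive-probability events $\{X_S\in\Hc\}$ to conditioning on the random variable $X_S$ by a martingale (Lévy upward) argument, exploiting that $(\Xc_S,\Bc_{\Xc_S})$ is standard and hence countably generated.

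First I would fix the language. For bounded measurable $\phi,\psi$ and a sub-$\sigma$-algebra $\Dc$, recall that $X_A\ind X_B\mid \Dc$ is equivalent to
\[
\Eb[\phi(X_A)\psi(X_B)\mid\Dc]=\Eb[\phi(X_A)\mid\Dc]\,\Eb[\psi(X_B)\mid\Dc]\qquad(\text{a.s.})
\]
for all such $\phi,\psi$. Thus the target statement~(2) is exactly the instance $\Dc=\sigma(X_C,X_S)$, while the hypothesis~(1) says precisely that, under each conditional law $\Prb_{\Hc}:=\Prb(\,\cdot\mid X_S\in\Hc)$ with $\Prb(X_S\in\Hc)>0$, the same factorisation holds with $\Dc=\sigma(X_C)$ (the conditional expectations now taken under $\Prb_\Hc$).

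Next I would build an approximating filtration. Since $\Xc_S$ is standard, $\Bc_{\Xc_S}=\sigma(B_1,B_2,\dots)$ is countably generated; let $\Pc_n$ be the finite partition of $\Xc_S$ generated by $B_1,\dots,B_n$, so that $\Pc_{n+1}$ refines $\Pc_n$ and $\sigma(\bigcup_n\Pc_n)=\Bc_{\Xc_S}$. Set $\Dc_n:=\sigma(X_C)\vee\sigma(\{X_S\in\Hc\}:\Hc\in\Pc_n)$, so that $\Dc_n\uparrow\sigma(X_C,X_S)$. The key step is the atomwise identity, valid for every atom $\{X_S\in\Hc\}$ with $\Hc\in\Pc_n$ and $\Prb(X_S\in\Hc)>0$ and every bounded $Z$,
\[
\Eb[Z\mid\Dc_n]\,\mathbbm{1}_{\{X_S\in\Hc\}}=\Eb_{\Prb_\Hc}[Z\mid X_C]\,\mathbbm{1}_{\{X_S\in\Hc\}}\qquad(\text{a.s.}),
\]
which holds because, restricted to the atom, $\Dc_n$ is generated by $X_C$, and the defining averaging property of $\Eb[Z\mid\Dc_n]$ over the sets $\{X_C\in\Gc,\,X_S\in\Hc\}$ coincides with that of $\Eb_{\Prb_\Hc}[Z\mid X_C]$. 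Applying this with $Z\in\{\phi\psi,\phi,\psi\}$ and invoking hypothesis~(1) atom by atom (the finitely many atoms of probability zero being negligible) yields
\[
\Eb[\phi(X_A)\psi(X_B)\mid\Dc_n]=\Eb[\phi(X_A)\mid\Dc_n]\,\Eb[\psi(X_B)\mid\Dc_n]\qquad(\text{a.s.}),
\]
i.e.\ $X_A\ind X_B\mid\Dc_n$ for every $n$. Finally I would let $n\to\infty$: by Lévy's upward martingale convergence theorem $\Eb[Z\mid\Dc_n]\to\Eb[Z\mid\sigma(X_C,X_S)]$ a.s.\ for each bounded $Z$, and since all these conditional expectations are bounded by $\|\phi\|_\infty$ and $\|\psi\|_\infty$, the right-hand product converges a.s.\ to the product of the limits. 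Passing to the limit gives the factorisation with $\Dc=\sigma(X_C,X_S)$, which is statement~(2).

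I expect the main obstacle to be the careful justification of the atomwise identity for $\Eb[\,\cdot\mid\Dc_n]$ together with the bookkeeping of null sets—in particular discarding atoms of probability zero and keeping the a.s.\ qualifiers uniform over the finitely many atoms at each level $n$; the reduction to the factorisation criterion and the martingale passage to the limit are then routine. It is worth noting that no standardness of $\Xc_A,\Xc_B,\Xc_C$ is required: only the countable generation of $\Bc_{\Xc_S}$ is used, which is exactly what the standardness hypothesis on $X_S$ supplies, and the argument treats the presence or absence of $X_C$ uniformly by retaining $\sigma(X_C)$ in every $\Dc_n$.
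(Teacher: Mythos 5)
Your proposal is correct and follows essentially the same route as the paper: both construct the increasing filtration $\sigma(X_C)\vee\sigma(\text{finitely many generator events of }\Bc_{\Xc_S})$, establish the product factorisation at each finite level from the event-wise hypothesis, and pass to the limit via L\'{e}vy's upward theorem. Your atomwise identity merely spells out the intermediate step that the paper asserts without elaboration (namely, why conditional independence given each positive-probability atom yields the factorisation relative to $\Gc_n$), and your handling of null atoms is if anything slightly more careful than the paper's insistence that each generator set have positive probability.
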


This lemma incorporates some classical cases as special instances. For example, if $X_S$ has a countable support $\Xc_S$, then $X_A\ind X_B\mid X_S=s$ for all $s\in \Xc_S$ implies $X_A\ind X_B\mid X_S$ (the converse also holds).
The converse of the above lemma does not hold as Example \ref{ex:counterex_ci} shows. Also given one single $\Si\subseteq \Xc_S$ such that $\Prb(X_S\in \Si)>0$ and $X_A\ind X_B\mid X_C,X_S\in \Si$, we cannot infer $X_A\ind X_B\mid X_C,X_S$ in general.

Understanding these subtleties in conditional independence allows us to better appreciate the following Markov property, which is an easy corollary from \cref{prop:modelcalss,thm:gmarkov,thm:dmarkov,prop:cond_scm_dmg}.

\begin{corollary}\label{cor:markov}
    Let $M$ be a simple SCM with conditioned SCM $M_{|X_S\in \Si}$. Then the uniquely induced distribution $\Prb_{M_{|X_S\in \Si}}(X_O)$ satisfies the generalized directed global Markov property relative to $G(M)_{|S}$, i.e., for $A,B,C\subseteq O$ we have
    \[
        A\sep{\sigma}{G(M)_{|S}} B\mid C\quad \Longrightarrow \quad X_A\underset{\Prb_{M_{|X_S\in \Si}}(X_O)}{\ind} X_B\mid X_C.\footnote{\text{Recall that $X_A\underset{\Prb_{M_{|X_S\in \Si}}(X_O)}{\ind} X_B\mid X_C \quad \Longleftrightarrow \quad  X_A\underset{\Prb_{M}(X_V)}{\ind} X_B\mid X_C,X_S\in \Si$.}}
    \]
    Furthermore, assume one of the following conditions: (i) $M$ is acyclic; (ii) all endogenous state spaces $\mathcal{X}_v$ are discrete; (iii) $M_{|\Si}$ satisfies the third assumption in \cref{thm:dmarkov}. Then $\Prb_{M_{|X_S\in \Si}}(X_O)$ satisfies the directed global Markov property relative to $G(M)_{|S}$, i.e., 
    \[
        A\sep{d}{G(M)_{|S}} B\mid C\quad \Longrightarrow \quad X_A\underset{\Prb_{M_{|X_S\in \Si}}(X_O)}{\ind} X_B\mid X_C.
    \]
\end{corollary}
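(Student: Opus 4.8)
The plan is to prove this as a two-step reduction, exactly as the phrase ``easy corollary'' suggests: transport the separation statement from the abstract conditioned graph $G(M)_{|S}$ down to the genuine causal graph $G(M_{|X_S\in\Si})$ of the conditioned SCM using the subgraph relation of \cref{prop:cond_scm_dmg}, and then convert a separation in $G(M_{|X_S\in\Si})$ into a conditional independence via the Markov property of the conditioned SCM relative to \emph{its own} causal graph (\cref{thm:gmarkov} for the $\sigma$-version, \cref{thm:dmarkov} for the $d$-version).

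First I would record the two structural inputs. By \cref{prop:modelcalss}, $M_{|X_S\in\Si}$ is again a simple SCM, so \cref{thm:gmarkov} applies to it: its uniquely induced distribution $\Prb_{M_{|X_S\in\Si}}(X_O)$ satisfies the generalized directed global Markov property relative to $G(M_{|X_S\in\Si})$, meaning $A\sep{\sigma}{G(M_{|X_S\in\Si})}B\mid C$ implies $X_A\underset{\Prb_{M_{|X_S\in\Si}}(X_O)}{\ind}X_B\mid X_C$. By \cref{prop:cond_scm_dmg}, $G(M_{|X_S\in\Si})$ is a subgraph of $G(M)_{|S}$ on the common node set $O$.

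The one genuinely load-bearing step is the monotonicity of graphical separation under passing to subgraphs: if $G'$ is a subgraph of $G$ on the same vertices, then $A\sep{\sigma}{G}B\mid C$ implies $A\sep{\sigma}{G'}B\mid C$, and likewise for $d$-separation. I would prove this by contraposition on walks. Any $\sigma$-connecting walk in $G'$ is also a walk in $G$, and each of its non-endpoints remains connecting when read in $G$, because the two quantities controlling $\sigma$-openness can only shrink when edges are deleted: the ancestor set governing colliders satisfies $\anc_{G'}(C)\subseteq\anc_G(C)$, and the strongly connected components governing conditioned non-colliders satisfy $\mathrm{Sc}_{G'}(v)\subseteq\mathrm{Sc}_G(v)$ (removing edges can only split components and remove directed paths, never merge or create them). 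Both inclusions point in the favorable direction, so a walk open in $G'$ stays open in $G$; hence no open walk in $G$ forces no open walk in $G'$. For ordinary $d$-separation the same argument applies, using only $\de_{G'}(v)\subseteq\de_G(v)$ at colliders. Chaining the subgraph step with the Markov property yields
\[
A\sep{\sigma}{G(M)_{|S}}B\mid C \;\Longrightarrow\; A\sep{\sigma}{G(M_{|X_S\in\Si})}B\mid C \;\Longrightarrow\; X_A\underset{\Prb_{M_{|X_S\in\Si}}(X_O)}{\ind}X_B\mid X_C,
\]
which is the first assertion; the footnoted reformulation in terms of $\Prb_M(X_V)$ and the event $X_S\in\Si$ is just the observational identity in \cref{cor:causal_semantics}.

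For the second assertion I would check that each hypothesis makes \cref{thm:dmarkov} applicable to the \emph{conditioned} model: under (i), \cref{prop:modelcalss} transfers acyclicity from $M$ to $M_{|X_S\in\Si}$; under (ii), the endogenous state spaces of $M_{|X_S\in\Si}$ are exactly the $\Xc_v$ for $v\in O\subseteq V$, hence discrete; and (iii) is by definition the assumption required of $M_{|\Si}$. In each case \cref{thm:dmarkov} gives the $d$-version of the Markov property relative to $G(M_{|X_S\in\Si})$, and the identical subgraph-plus-monotonicity chain promotes it to $G(M)_{|S}$. The main obstacle is thus not any estimate but making the separation-monotonicity lemma precise for $\sigma$-separation, where one must verify that the ancestor- and strongly-connected-component dependence of $\sigma$-openness is monotone in the direction in which subgraphs gain separations; everything else is bookkeeping across the four cited results.
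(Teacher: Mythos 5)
Your proposal is correct and follows exactly the route the paper intends: the paper presents this as an immediate consequence of \cref{prop:modelcalss}, \cref{thm:gmarkov}/\cref{thm:dmarkov} applied to the conditioned SCM relative to its own graph $G(M_{|X_S\in\Si})$, and the subgraph relation of \cref{prop:cond_scm_dmg}, with the separation-monotonicity under subgraphs left implicit. Your explicit verification that both the ancestor sets and the strongly connected components shrink under edge deletion (so that a $C$-$\sigma$-open walk in the subgraph remains open in the supergraph) is the right way to fill in that implicit step, and your case check for conditions (i)--(iii) in the $d$-separation part is likewise what the paper relies on.
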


We can apply the intervention operation to $M_{|X_S\in \Si}$ and $G(M)_{|S}$, respectively. Recall that a simple SCM $M$ always satisfies the generalized Markov property w.r.t.\ $G(M)$. Therefore, we obtain that $(M_{|X_S\in \Si})_{\Do(X_T=x_T)}$ satisfies the generalized Markov property w.r.t.\ $(G(M)_{|S})_{\Do(T)}$ for any $T\subseteq O$.


The converse of (generalized) directed global Markov properties is $d$-faithfulness (resp.\ $\sigma$-faithfulness) \citep{spirtes2001causation,pearl2009causality,forre2018constraint}, which plays an important role in constraint-based causal discovery algorithms \citep{spirtes95causal,spirtes99alogorithm,MooijClaassen20constraint}. A natural question arises: how does faithfulness interact with the conditioning operation? Recall that marginalization preserves faithfulness. If $\Prb_M(X_V)$ is faithful to $G(M)$, then $\Prb_{M_{\sm L}}(X_{V\sm L})$ is faithful to $G(M)_{\sm L}$ for any $L\subseteq V$. However, this does not generally hold for conditioning. Even if $\Prb_M(X_V)$ is faithful to $G(M)$, the distribution $\Prb_{M_{|X_S\in \Si}}(X_{V\sm S})$ may not be faithful to $G(M)_{|S}$. See \cref{ex:unfaith} for a simple example.

\begin{example}[Conditioning does not preserve faithfulness]\label{ex:unfaith}
    Consider an SCM $M$ and its conditioned SCM $M_{|X_s=1}$
    \[
        M:\begin{cases}
            X_{w_1},X_{w_2}\sim \mathrm{Ber}(0.5),\\
            X_a=X_{w_1}, X_s=X_{w_2}\\
            X_b=\Ibbm(X_{w_2}=0)X_a,
        \end{cases}
        \qquad
        M_{|X_s=1}:\begin{cases}
            X_{w_1}\sim \mathrm{Ber}(0.5),X_{w_2}=1,\\
            X_a=X_{w_1},\\
            X_b=\Ibbm (X_{w_2}=0)X_a.
        \end{cases}
    \]
    It is easy to see that $X_a\underset{\Prb_{M_{|X_s=1}}(X_{V\sm \{s\}})}{\ind} X_b$. It holds that $\Prb_{M_{|X_s=1}}(X_{V\sm \{s\}})$ is faithful to $G(M_{|X_s=1})$ but not to $G(M)_{|\{s\}}$. 
\end{example}



Faithfulness is usually stated in terms of conditional independence given variables while the conditioned SCMs encode conditional independence information given selection variables taking values in some sets. \cref{ex:counterex_ci} in \cref{app:ex} shows the subtle difference between the two. However, \cref{lem:ci_event} builds the connection between them and allows us to state faithfulness in terms of conditional independence given events.


Putting all the above results in Section~\ref{sec:cond} together gives us an answer to Questions~\ref{Q2} and \ref{Q3} in the Introduction. For the interaction between SCMs and causal graphs, assume that we have a simple SCM $M$ and only a subset $O$ of $V$ is observable. Denote $V \setminus O$ by $L\dcup S$ where $L$ denotes the latent part that is marginalized out and $S$ denotes the latent selection nodes. Fix a measurable set $\Si\subseteq \Xc_S$ with $\Prb_M(X_S\in \Si)>0$. Define the observable marginalized conditioned SCM
    \[
        M_{O|\Si}\coloneqq (M_{\sm L})_{|\Si}\equiv (M_{|\Si})_{\sm L},
    \]
    and observable marginalized conditioned graph
    \[
    G^{[O]}\coloneqq ((G(M))_{\setminus L})_{|S}=((G(M))_{|S})_{\setminus L},
    \]
    Then we have \cref{fig:cond‐diagram} (dashed arrows mean that the implications are only true under some extra conditions, and the numbers near the arrows correspond to theorems, lemmas, corollaries, and examples) for $A,B,C\subseteq O$. 
    \begin{figure}[ht]
  \centering
  \[
\begin{tikzcd}[ampersand replacement=\&, column sep=3.0em, row sep=2.2ex]
  A\underset{G^{[O]}}{\overset{d/\sigma}{\perp}}B\mid C
    \&
  A\underset{G(M)}{\overset{d/\sigma}{\perp}}B\mid C,S
    \&
  {} \\
  {}
    \&
  |[name=mid]|
  X_A \underset{\Prb_M(X_{V})}{\ind} X_B\mid X_C,\, X_{S}
    \&
  {} \\
  X_A \underset{\Prb_{M_{O|\Si}}(X_{O})}{\ind} X_B\mid X_C
    \&
   X_A \underset{\Prb_M(X_{V})}{\ind} X_B\mid X_C,\, X_{S}\in \Si
    \&
  {}
  \arrow[from=1-1, to=1-2, "\ref{thm:dsep}", Rightarrow, shift left]
  \arrow[from=1-1, to=1-2, dashed, "\ref{thm:dsep}"', Leftarrow, shift right]
  \arrow[from=1-1, to=3-1, Rightarrow,
       shift left=1.2ex,
       "{\scriptscriptstyle d/\sigma\text{-Markov}}+\ref{cor:markov}" ]
\arrow[from=1-2, to=mid, Rightarrow,
       shift left=1.2ex,
       "{\scriptscriptstyle d/\sigma\text{-Markov}}"{xshift=0.2em}]
\arrow[from=1-2, to=mid, Leftarrow,
       shift right=1.2ex, swap,
       "{\scriptscriptstyle d/\sigma\text{-Faith}}"{xshift=-0.2em}]
\arrow[from=mid, to=3-2, dashed, Rightarrow,
       shift left=1.2ex,
       "\text{(*)}"{xshift=0.2em}]
\arrow[from=mid, to=3-2, dashed, Leftarrow,
       shift right=1.2ex, swap,
       "{\scriptscriptstyle \ref{lem:ci_event}}"{xshift=-0.2em}]
  \arrow[from=3-1, to=3-2, " \ref{thm:causal_semantics}", Leftrightarrow]
\end{tikzcd}
\]
  \caption{Diagram relating graphical separation and stochastic independence under marginalization and conditioning for a simple SCM $M$.}
  \label{fig:cond‐diagram}
\end{figure}

(*): \cref{ex:counterex_ci} tells us that this implication does not hold in general, but if $\Si=\{x_s\}$ is a singleton set (recall that we assume $\Prb_M(X_S\in \Si)>0$) then this implication holds. 

The diagram shown in \cref{fig:cond‐diagram} still holds if we replace $G^{[O]}$ and $M_{O|\Si}$ with $(G^{[O]})_{\Do(T)}$ and $(M_{O|S})_{\Do(X_T=x_T)}$ respectively.\footnote{One should be careful with the causal interpretation when $T\nsubseteq O\sm\anc_{G(M)}(S)$. See \cref{sec:caveat} for more details.}

\begin{remark}
    To compare observational distributions and interventional distributions in one single graph and therefore derive the general measure-theoretic causal calculus rigorously, one may need the so-called exogenous input variables (or non-stochastic regime indicator variables according to \citet{Dawid21decision}) and transitional probability theory \citep{forre2021transitional,forre2020causal}. We discuss a conditioning operation for this more general class of models in Appendix~\ref{sec:cond_iSCM}. With this, we can generalize measure-theoretic causal calculus and other identification results to the case with latent selection bias via the conditioning operation (\cf Definition~\ref{def:cdiSCM}), and develop a commutative diagram similar to the one shown above but with exogenous non-stochastic input variables. Since causal information can be alternatively characterized by conditional independence involving regime indicators \citep{Dawid02influence,Dawid21decision}, the corresponding diagram for SCMs with input nodes gives us a clearer picture of what causal information can be preserved during the process of model abstraction via the conditioning operation.
\end{remark}

\subsubsection{Conditioning operation for DMGs: explicit modularity and locality}\label{sec:graph_cond_cartesian}

As mentioned in \cref{rem:graph_cond_cartesian}, when knowing that the conditioning set $\Si\subseteq \Xc_S$ can be decomposed as a Cartesian product $\Si=\bigtimes_{i=1}^n \Si_i$ with $\Si_i\subseteq \Xc_{s_i}$ and $S=\{s_1,\ldots,s_n\}$, we can obtain a finer conditioned graph by iterative conditioning than by joint conditioning. We give a formal definition.

\begin{definition}[Conditioned DMG: special case]\label{def:cond_dmg_special} Let $G$ be a DMG. For $S=\{s_1,\ldots,s_n\}\subseteq V$, we define the conditioned DMG $G_{|^\boxtimes S}$ by $((G_{|s_1})_{\ldots})_{|s_n}$.
\end{definition}

This definition is more in line with the principle that the SCM expresses the modular structure of causal mechanisms and selection mechanisms. This is particularly relevant when modeling physical systems where the locality principle of special relativity should be respected (both for causal mechanisms and selection mechanisms). Note that the definition of $G_{|^\boxtimes S}$ does not depend on the ordering of the iterative conditioning by \cref{prop:prop_graph_cond}. \cref{thm:dsep,prop:prop_graph_cond} all hold if we replace $(\cdot)_{|S}$ with $(\cdot)_{|^\boxtimes S}$. By \cref{prop:cond_scm_dmg}, we know that $G(M_{|X_S\in \Si})$ is a subgraph of $G(M)_{|^\boxtimes S}$ where $S=\{s_1,\ldots,s_n\}$ and $\Si=\bigtimes_{i=1}^n\Si_i$. Furthermore, we have $(G_{|^\boxtimes S_1})_{|^\boxtimes S_2}=(G_{|^\boxtimes S_2})_{|^\boxtimes S_1}=G_{|^\boxtimes (S_1\cup S_2)}$. If we introduce a common child $S^*$ of $s_1,\ldots,s_n$ and call the extended graph $G^*$, then $G_{|S}=(G^*)_{|S^*}=(G^*)_{|^\boxtimes S^*}\supseteq G_{|^\boxtimes S}$.

\subsection{Caveats on modeling interpretation}\label{sec:caveat}

In the previous subsections, we presented the SCM conditioning operation and DMG conditioning as purely mathematical operations and derived some mathematical properties of them. In this subsection, we make some remarks on how to interpret the conditioned SCMs appropriately to avoid confusion in modeling applications.

The subtleties are about intervening on ancestors of selection nodes. In this case, conditioning and interventions are not commutative, as we showed before. Therefore, one should be careful about the order of these two operations. On the one hand, if we first intervene and second condition on descendants of intervened variables, then the selected subpopulation will also change according to the intervention. On the other hand, first conditioning and second intervening on ancestors of selection nodes has a ``counterfactual flavor''. Suppose that an SCM $M$ with three variables $T$ (``treatment''), $Y$ (``outcome'') and $S$ (``selection") has a causal graph $T\tuh Y\tuh S$.  Intuitively, ``first-conditioning-second-intervening'' indicates that we first observe the results of the treatment and select units with specific values (say $S=s$) and fix this subpopulation. After that, we ``go back'' to perform an intervention (say $\Do(T=t)$) on this \emph{fixed selected subpopulation} instead of on the total population.\footnote{ This is often impossible to do in the real world where time travel is not an option, except if we can ``redo'' interventions while the exogenous variables remain invariant. Therefore, we prefer not to degrade from rung-2 causal queries to rung-3 ones.} Mathematically, we have
\[
\begin{aligned}
  \Prb_{\left((M_{|S=s})_{\Do(T=t)}\right)}(Y)&=\Prb_{M_{|S=s}}(Y\mid \Do(T=t))\\
  &=\Prb_{M_{|S=x}}(Y(t)) \\
  &=\Prb_M(Y(t)\mid S=s)\\
  &=\Prb_{M^{\mathrm{twin}}}(Y^{\prime}\mid \Do(T^\prime=t), S=s)\\
  &\ne \Prb_M(Y\mid \Do(T=t), S=s ), \quad \text{(in general)}\\
  &=\Prb_{\left((M_{\Do(T=t))_{|S=s}}\right)}(Y)
\end{aligned}
\]
where we used the language of potential outcomes and the twinning operation. In Pearl's terminology, this mixes different rungs: a rung-two query in the conditioned SCM is equivalent to a rung-three query in the original SCM. See also \citet{pearl2015conditioning} for an illustration. 


One can think of at least three possible ways to use the conditioning operation for modeling:
\begin{enumerate}
  \item[(i)] marginalize out all the ancestors of the selection nodes, or only consider cases where selection happens on exogenous random variables, so that there is no chance of being tempted to intervene on the ancestors of the selection nodes;
  \item[(ii)] specify in the conditioned SCM or in its graph which variables are ancestors of the selection nodes in the original SCM and do not apply interventions on them (which is what we opt for in this work);
  \item[(iii)] one can ignore the issue if one does not mind mixing up the rung-two quantities and rung-three quantities for her tasks, at the risk of introducing confusion about the causal interpretation of the conditioned SCM (which is not recommended). 
\end{enumerate}

See \cref{rem:model} in \cref{app:ex} for some further remarks on modeling interpretation.

\section{Applications}\label{sec:ex}

In this section, we illustrate several applications of the conditioning operation. The conditioning operation has a wide range of uses: all classical results for SCMs, such as identification results (back-door adjustment, do-calculus), apply directly to the conditioned SCMs $M_{|X_S\in\Si}$. Using the properties of the conditioning operation, these conclusions for $M_{|X_S\in\Si}$ can then be translated back to $(M,X_S\in\Si)$. In combination with marginalization, conditioning operation also provides a way to interpret a DMG as a causal graph that compactly encodes causal assumptions, with latent details of both latent common causes and latent selection abstracted away.

The examples in this section form a cohesive sequence. It navigates us from the philosophical implications of conditioning (a “generalized Reichenbach’s principle”), to the versatility of applying classical results to conditioned SCMs (back-door criterion, ID-algorithm, causal discovery, instrumental variables, mediation analysis), and finally to a concrete practical application of conditioned SCMs to modeling real-world problems (the COVID example).

\subsection{Reichenbach's principle under latent selection}

Reichenbach's Principle of Common Cause \citep{Reichenbach1956-REITDO-2} is often stated in this way: if two variables are dependent, then one must cause the other, or the variables must have a common cause (or any combination of these three possibilities). Note that this conclusion holds only when latent selection bias is ruled out, an assumption that is often left implicit.

\begin{example}[Reichenbach's principle]\label{ex:rei}
Using the conditioning operation, we can generalize and prove the principle under the framework of SCMs in the following way. Assume that $M$ is a simple SCM that has two observed endogenous variables $X$ and $Y$. By the Markov property (Theorem \ref{thm:gmarkov}), if $X$ and $Y$ are dependent, then $X \tuh Y$, $X \hut Y$, or $X \huh Y$ (or any combination of these three possibilities) are in the graph $G(M)$. There exist infinitely many SCMs $M^i$, $i\in I$ with an infinite index set $I$,
such that $(M^i_{\sm L_i})_{|\Si_i}=M$ where $L_i$ is a set of latent variables of $M^i$ and $X_{S_i}\in\Si_i$ is the latent selection in $M^i$. Hence, it implies that if two variables are dependent, then one causes the other, or the variables have a common cause, or are subject to latent selection (or any combination of these four possibilities).
\end{example}

\begin{remark}
    This provides one possible explanation for some real-world scenario in which one can exclude the possibilities of causal effects and common causes between two variables but can still observe the stochastic dependency between them.
\end{remark}

\subsection{Causal identification under latent selection}

\begin{example}[Back-door theorem]\label{ex:identif}

  Let $M^1$ and $M^2$ be two SCMs with three variables $T$ (``treatment''), $X$ (``covariates''), and $Y$ (``outcome'') whose causal graphs are shown in Figure \ref{fig_ex_identif}.  Under some assumptions, Pearl's Back-Door Theorem \citep{pearl2009causality} gives, for $i=1,2$, the identification result:\footnote{For simplicity, here we ignore the measure-theoretic subtlety. Indeed, we need to assume $\Prb_{M^i}(X)\otimes \Prb_{M^i}(T)\ll \Prb_{M^i}(X,T)$ and then the identity holds $\Prb_{M^i}(T)$-a.s. See \citet{forre25causality} for more details.}
  \begin{equation}\label{eq:bd}
      \Prb_{M^i}(Y\mid \Do(T=t))
      =\int\Prb_{M^i}(Y\mid X=x, T=t) \Prb_{M^i}(X\in\mathrm{d}x).
  \end{equation}
  \begin{figure}[ht]
  \centering
  \begin{tikzpicture}[scale=0.8, transform shape]
    \begin{scope}[xshift=0]
      \node[ndout] (T) at (1,1) {$T$};
      \node[ndout] (X) at (2,2) {$X$};
      \node[ndout] (Y) at (3,1) {$Y$};
      \draw[arout] (X) to (T);
      \draw[arout] (T) to (Y);
      \draw[arout] (X) to (Y);
      \node at (2,0) {$G(M^1)$};
    \end{scope}
    \begin{scope}[xshift=4cm]
      \node[ndout] (T) at (1,1) {$T$};
      \node[ndout] (X) at (2,2) {$X$};
      \node[ndout] (Y) at (3,1) {$Y$};
      \draw[arout] (X) to (T);
      \draw[arout] (T) to (Y);
      \draw[arout] (X) to (Y);
      \draw[arlat, bend left] (X) to (Y);
      \node at (2,0) {$G(M^2)$};
    \end{scope}

    \begin{scope}[xshift=8cm,yshift=1cm]
      \node[ndout] (T) at (1,1) {$T$};
      \node[ndout] (X) at (2,2) {$X$};
      \node[ndout] (Y) at (3,1) {$Y$};
      \node[ndlat] (L) at (2,0) {$L$};
      \draw[arout] (X) to (T);
      \draw[arout] (T) to (Y);
      \draw[arout] (X) to (Y);
      \draw[arout] (L) to (T);
      \draw[arout] (L) to (Y);
      \node at (2,-1) {$G(M^3)$};
    \end{scope}
  \end{tikzpicture}
  \caption{Causal graphs of SCMs $M^1$ and $M^2$ in Example \ref{ex:identif} and of $M^3$ in Remark \ref{rem:back-door}.}
  \label{fig_ex_identif}
  \end{figure}

  Thanks to marginalization and the conditioning operation, we can see $M^1$ and $M^2$ as abstractions of other SCMs, i.e., $M^i=(\tilde{M}^i_{\sm L^i})_{|\Si^i}$,
  for SCMs $\tilde{M}^i$, latent variables $L^i=\{L^i_1,\ldots,L^i_n\}$, and latent selection variables $S^i=\{S^i_1,\ldots,S^i_m\}$ taking values in measurable sets $\Si^i$ with $i=1,2$. For both $M^1$ and $M^2$, we present two examples $\tilde{M}^i_{(j)}$ for $j=1,2$, respectively, out of the infinite possibilities in Figure \ref{fig_ex_identif2}.

  With the help of Theorem \ref{thm:causal_semantics}, we can write \eqref{eq:bd} as
  \begin{equation}\label{eq:back_door}
      \Prb_{\tilde{M}^i}(Y\mid \Do(T=t),S^i\in\Si^i)=\int\Prb_{\tilde{M}^i}(Y\mid X=x, T=t, S^i\in\Si^i) \Prb_{\tilde{M}^i}(X\in\mathrm{d}x\mid S^i\in\Si^i).
  \end{equation}
  Thus, the back-door theorem can be applied directly to the conditioned SCM, which is useful especially if the specific latent structure of the SCM is \emph{unknown}. 

  \begin{figure}[ht]
  \centering
  \begin{tikzpicture}[scale=0.8, transform shape]
    \begin{scope}[xshift=0]
      \node[ndout] (T) at (1,1) {$T$};
      \node[ndout] (X) at (2,2) {$X$};
      \node[ndout] (Y) at (3,1) {$Y$};
      \node[ndlat] (L) at (2,3.5) {$L^1$};
      \node[ndsel] (S) at (3.5,3.5) {$S^1$};
      \draw[arout] (X) to (T);
      \draw[arout] (T) to (Y);
      \draw[arout] (X) to (Y);
      \draw[arout] (L) to (X);
      \draw[arout] (L) to (S);
      \node at (2,0) {$G(\tilde{M}^1_{(1)})$};
    \end{scope}
    \begin{scope}[xshift=4.5cm]
      \node[ndout] (T) at (1,1) {$T$};
      \node[ndout] (X) at (2.5,3) {$X$};
      \node[ndout] (Y) at (4,1) {$Y$};
      \node[ndlat] (l1) at (2.8,1.6) {$L_1^2$};
      \node[ndlat] (l2) at (5,2) {$L_2^2$};
      \node[ndlat] (l3) at (3.5,3.5) {$L_3^2$};
      \node[ndsel] (S) at (3.9,2.3) {$S^2$};
      \draw[arout] (X) to (T);
      \draw[arout] (T) to (Y);
      \draw[arout] (X) to (Y);
      \draw[arout] (l1) to (X);
      \draw[arout] (l1) to (Y);
      \draw[arout] (l3) to (X);
      \draw[arout] (l2) to (Y);
      \draw[arout] (l3) to (S);
      \draw[arout] (l2) to (S);
      \node at (3,0) {$G(\tilde{M}^2_{(1)})$};
    \end{scope}
    \begin{scope}[xshift=0,yshift=-5cm]
      \node[ndout] (T) at (1,1) {$T$};
      \node[ndout] (X) at (2,2) {$X$};
      \node[ndout] (Y) at (3,1) {$Y$};
      \node[ndlat] (l1) at (2,4) {$L_1^1$};
      \node[ndlat] (l2) at (0.5,2) {$L_2^1$};
      \node[ndsel] (s1) at (0.5,3.5) {$S_1^1$};
      \node[ndsel] (s2) at (3.5,3.5) {$S_2^1$};
      \node[ndsel] (s3) at (2,3) {$S_3^1$};
      \node[ndsel] (s4) at (3.5,2) {$S_4^1$};
      \draw[arout] (X) to (T);
      \draw[arout] (T) to (Y);
      \draw[arout] (X) to (Y);
      \draw[arout] (s1) to (X);
      \draw[arout] (s2) to (X);
      \draw[arout] (s4) to (X);
      \draw[arout] (l1) to (s1);
      \draw[arout] (l1) to (s2);
      \draw[arout] (s1) to (s3);
      \draw[arout] (s2) to (s3);
      \draw[arout] (s2) to (s4);
      \draw[arout] (l2) to (s1);
      \draw[arout] (l2) to (X);
      \node at (2,0) {$G(\tilde{M}^1_{(2)})$};
    \end{scope}
    \begin{scope}[xshift=4.5cm,yshift=-5cm]
        \node[ndout] (T) at (1,1) {$T$};
        \node[ndout] (X) at (2.5,2.5) {$X$};
        \node[ndout] (Y) at (4,1) {$Y$};
        \node[ndlat] (l1) at (5,3.5) {$L_1^2$};
        \node[ndlat] (l2) at (5,2) {$L_2^2$};
        \node[ndlat] (l3) at (3.4,3.4) {$L_3^2$};
        \node[ndsel] (s1) at (4,2.5) {$S_1^2$};
        \node[ndsel] (s2) at (2.5,4) {$S_2^2$};
        \draw[arout] (X) to (T);
        \draw[arout] (T) to (Y);
        \draw[arout] (X) to (Y);
        \draw[arout] (s1) to (X);
        \draw[arout] (s1) to (Y);
        \draw[arout] (l3) to (X);
        \draw[arout] (l1) to (l3);
        \draw[arout] (l1) to (s1);
        \draw[arout] (l1) to (l2);
        \draw[arout] (l1) to (Y);
        \draw[arout] (l2) to (Y);
        \draw[arout] (s2) to (X);
        \node at (3,0) {$G(\tilde{M}^2_{(2)})$};
    \end{scope}
  \end{tikzpicture}
  \caption{Some possible causal graphs of SCMs $\tilde{M}^i$ in Example \ref{ex:identif}.}
  \label{fig_ex_identif2}
  \end{figure}
\end{example}

\begin{remark}
    One can generalize other identification results similarly.
\end{remark}

\begin{remark}\label{rem:back-door}
    The result in Example \ref{ex:identif} differs from that in \citet{Correa17causal}, where it is assumed that the explicit causal structure of the selection mechanism is known, allowing identification of the causal effect in the whole population from the selected data. It is a slight generalization of the conditional back-door adjustment in \citet{pearl2009causality}, which is expressed as $\Prb(Y=y\mid \Do(T=t),S=s)=\sum_{x} \Prb(Y=y\mid X=x,T=t,S=s)\Prb(X=x\mid S=s)$ when certain graphical criteria are met. One difference is that in \eqref{eq:back_door}, $\Si^i$ may not be a singleton but a general set. The generalized back-door criterion for MAGs cannot be applied here, since it rules out selection bias explicitly \citep{Maathuis15generalized}. Note that even if we rule out selection bias, interpreting the graph $G(M^1)$ as a MAG will have different consequences than interpreting it as a causal graph of an SCM. Indeed, $\Prb(Y\mid \Do(T))$ is not identifiable in $G(M^3)$ in Figure \ref{fig_ex_identif} while the MAG representation of $G(M^3)$ is syntactically equal to the graph $G(M^1)$.
\end{remark}

Pearl's do-calculus is proved to be sound and complete (under some conditions) for identifying interventional distributions in terms of the observational distribution given a causal graph \citep{pearl95causal,huang06pearl}. Using a causal graph and observational distribution as inputs, the ID-algorithm, as a sound and complete algorithm, systematically expresses the target interventional distribution in terms of a functional of the observational distribution, if the target is identifiable and outputs FAIL if not \citep{tian02general,shpister06joint,Huang2008OnTC}. Various variants of the ID-algorithm exist, each with different targets and inputs (see e.g., \citet{yaroslav23identifiability} and the references therein).

\begin{example}[ID-algorithm]\label{ex:ID-algorithm}
One such variant, the s-ID-algorithm, is a sound and complete algorithm for the \emph{s-identification problem}, whose goal is to identify interventional distributions on a subpopulation ($\Prb(X_A\mid \Do(X_T=x_T),X_S=1)$) given a causal graph with selection mechanism ($G$) and selected observational distribution ($\Prb(X_V\mid X_S=1)$) \citep[Theorem 1, Corollary 2]{Abouei24sID}.\footnote{Note that in the usual c-ID-algorithm for conditional interventional distribution, the input is $\Prb(X_V)$ but not $\Prb(X_V\mid X_S=1)$.}  As we shall see, the conditioning operation can help simplify a part of the original proof \citep[Lemma~5]{abouei24sIDlatent}. 

Consider the single-variable case $T=\{t\}$. In the setting of \citet{Abouei24sID}, there are no latent variables. Therefore, if $ T\cap \anc_{G}(S)=\emptyset$, then there are no bidirected edges connecting to $t$ in $G_{|S}$, which implies that $\Prb(X_A\mid \Do(X_T=x_T),X_S=1)$ is identifiable by \citet[Theorem 1]{tian02general}. Now, assuming that $A \underset{G_{\ul{T}} }{\overset{d}{\perp}} T \mid S$,  the second rule of Pearl's do-calculus provides the identification result. Combining these two gives a sound and complete algorithm for the s-identification problem \citep[Theorem 1]{Abouei24sID}. The soundness of this algorithm immediately generalizes to settings with latent variables.

Besides, if $T \cap \anc_{G}(S)=\emptyset$, then one can also consider identifying the conditional causal effect on the subpopulation $\Prb(X_A\mid \Do(X_T=x_T),X_B,X_S=1)$ from a graph with latent variables and selected observational distribution $\Prb(X_V\mid X_S=1)$, by first applying the conditioning operation for $G$ to get $G_{|S}$ and then applying the classical ID-algorithm for conditional causal effect with latent variables on $G_{|S}$ \citep{shpitser06identification}.\footnote{If $T \cap \anc_{G}(S)\ne \emptyset$, one can still apply the corresponding ID-algorithm to $G_{|S}$, but the algorithm would output an expression for $\Prb(X_A(x_T)\mid X_S=1)$ instead of $\Prb(X_A\mid \Do(X_T=x_T),X_S=1)$. See Theorem \ref{thm:causal_semantics} and Section \ref{sec:caveat}.}  This result seems to be new in the literature to our knowledge.\footnote{When we were writing this manuscript, we found that an s-ID-algorithm under latent variables was proposed in \citet{abouei24sIDlatent}. However, they only consider identification for the unconditional interventional distribution $\Prb(X_A\mid \Do(X_T=x_T),X_S=1)$, not for the conditional interventional distribution $\Prb(X_A\mid \Do(X_T=x_T),X_B,X_S=1)$.} Similar generalizations can be made for other variants of the ID-algorithm, by first applying the conditioning operation for the graph and then applying the corresponding version of the ID-algorithm to the conditioned graph (e.g., in the one-line formulation of the ID-algorithm \citet[Theorem~48]{richardson2023nested}, replace $G$ with $G_{|S}$).

However, one should note that applying an ID-algorithm to the conditioned graph alone can hardly give a complete algorithm in general, due to the abstraction nature of the conditioning operation. For example, in the case of the s-ID-algorithm, we can use the conditioning operation to handle cases where $T\cap \anc_{G}(S)=\emptyset$, but a complete algorithm should also be able to address cases where $T\cap \anc_{G}(S)\ne \emptyset$ and $T \underset{G_{\ul{T}} }{\overset{d}{\perp}} A \mid S$ (see \citet[Theorem 1]{Abouei24sID}).
\end{example}


\subsection{Causal discovery under latent selection}

Many causal discovery algorithms address unobserved common causes, but exclude selection bias. For simplicity, we consider consistent algorithms that output a single ADMG (instead of equivalence class). We can interpret the output of such algorithms as $G((M_{\sm L})_{|\Si})$ where $M$ is an acyclic SCM with latent nodes $L$, selection mechanism $X_S\in \Si$, and $L\cap S=\emptyset$. This can give a certain causal interpretation to the output of these algorithms under selection bias even if selection bias is excluded in the original formulations. The high-level idea is: given one such algorithm $\Ac$, ideal infinite i.i.d.\ data $\Dc$ and a model class $\Mb$ of SCMs, the algorithm outputs a causal graph $\Ac(\Dc)$ such that there exists $M$ in $\Mb$ such that $G(M)=\Ac(\Dc)$. Now suppose that the data $\widetilde{\Dc}$ are generated by an s-SCM $M^S$ in some model class $\Mb^S$ such that the conditioning operation projects $\Mb^S$ into a subclass $\Mb_{|S}$ of $\Mb$. Then we can apply the \emph{same algorithm} to get $\Ac(\widetilde{\Dc})$ and by the fact $\Prb_{M^S}(X_O)=\Prb_{M}(X_O\mid X_S\in\Si)=\Prb_{M_{|X_S\in \Si}}(X_O)$, we have $G(M_{|X_S\in \Si})=\Ac(\widetilde{\Dc})$. \cref{thm:causal_semantics,thm:as_much_as_possible,prop:impossible} tell us that $\Ac(\widetilde{\Dc})$ can be seen as the closest approximation of $M^S$ in $\Mb$.

\begin{example}[Causal discovery]\label{ex:discovery}
    For one instance, \citet{wang23discovery} explored recovering causal graphs uniquely from data generated by an acyclic linear non-Gaussian SCM with a bow-free graph (i.e., no simultaneous bidirected and directed edges between two variables) and rule out selection bias. Assume that the data are generated from an acyclic linear s-SCM $(M,X_S\in \Si)$ and the conditioned marginalized SCM of it has a bow-free graph. If the exogenous distribution of $(M_{\sm L})_{|\Si}$ is non-Gaussian and $(M_{\sm L})_{|\Si}$ satisfies the assumptions in \citet[Section~3]{wang23discovery}, then we can use the algorithm $\mathrm{BANG}$ in \citet{wang23discovery} to recover the graph of $(M_{\sm L})_{|\Si}$.
\end{example}

If we know from data or prior knowledge that a node $t$ is not an ancestor of $S$, then we can give a causal interpretation of $X_t$ in the discovered graph and apply causal identification results to identify $\Prb_M(X_O\mid \Do(X_t=x_t), X_S\in \Si)$ with $O\subseteq V\sm (L\cup S)$. For example, if the data are selected by $X_S=x_S$, we can sometimes read off whether $t\notin \anc_{G(M)}(S)$ from a PAG (Partial Ancestral Graphs) or a MAG \citep{spirtes95causal,richardson2002ancestral}.\footnote{Note that if $t\in \anc_{G(M)}(S)$, we can still apply the identification result to the interventional distribution given $\Do(X_t=x_t)$ in $M_{|X_S\in \Si}$, but the causal identification results will output a formula for $\Prb_M(X_O(x_t)\mid X_S\in \Si)$ instead of $\Prb_M(X_O\mid \Do(X_t=x_t), X_S\in \Si)$ (\cf Theorem \ref{thm:causal_semantics}, Section \ref{sec:caveat}).} 

 In addition to the causal discovery algorithms mentioned above, some causal model selection methods, such as the inflation technique \citep{WolfeSpekkensFritz+2019}, can also be generalized to deal with selection bias via the conditioning operation.

 \subsection{Instrumental variable and mediation analysis under latent selection}

 In some situations, we cannot achieve point identification results, but we can derive informative bounds for target causal effects. A well-known example is the instrumental inequality \citep{pearl2009causality,balke94counterfactual,balke97bounds,pearl95instrumental}. More recent advances include, e.g., showing that the instrumental inequality is sharp for finite discrete variables under certain constraints on the cardinality of the variables \citep{Bhadane25Revisiting,VanHimbeeck19quantum}, and extending the bounds to continuous outcomes \citep{Zhang21bounding}. Not only can the original instrumental inequality for binary variables be extended to the case with certain selection bias immediately via the conditioning operation, but also the results we mentioned above.

\begin{example}[Instrumental variables]\label{ex:inst}

The instrumental inequality was originally derived for the SCMs with the graph $G(M)$ shown in Figure \ref{fig_ex_instrum}. Similarly to Example \ref{ex:identif}, if we know that for an SCM $\tilde{M}$ with latent variables $L$ and latent selection $S\in\Si$, the causal graph $G\big(\big(\tilde{M}_{\sm L}\big)_{|\Si}\big)$ takes the form shown in Figure~\ref{fig_ex_instrum}, then we can conclude that the same form of inequality also holds for $\tilde{M}$ under the subpopulation.

  \begin{figure}[ht]
  \centering
  \begin{tikzpicture}[scale=0.8, transform shape]
    \begin{scope}[xshift=0]
      \node[ndout] (T) at (-0.5,1) {$T$};
      \node[ndout] (X) at (1,1) {$X$};
      \node[ndlat] (U) at (2,2) {$U$};
      \node[ndout] (Y) at (3,1) {$Y$};
      \draw[arout] (T) to (X);
      \draw[arout] (X) to (Y);
      \draw[arout] (U) to (Y);
      \draw[arout] (U) to (X);
      \node at (1,0) {$G(M)$};
    \end{scope}
    \begin{scope}[xshift=5cm]
      \node[ndout] (T) at (-0.5,1) {$T$};
      \node[ndout] (X) at (1,1) {$X$};
      \node[ndout] (Y) at (3,1) {$Y$};
      \draw[arout] (T) to (X);
      \draw[arout] (X) to (Y);
      \draw[arlat, bend left] (X) to (Y);
      \node at (1.5,0) {$G\big(\big(\tilde{M}_{\sm L}\big)_{|\Si}\big)$};
    \end{scope}

    \begin{scope}[xshift=10cm]
      \node[ndout] (T) at (-0.5,1) {$T$};
      \node[ndout] (X) at (1,1) {$X$};
      \node[ndlat] (L1) at (1,2.5) {$L_1$};
      \node[ndlat] (L2) at (3,2.5) {$L_2$};
      \node[ndsel] (S) at (2,1.75) {$S$};
      \node[ndout] (Y) at (3,1) {$Y$};
      \draw[arout] (L1) to (X);
      \draw[arout] (L2) to (Y);
      \draw[arout] (L1) to (S);
      \draw[arout] (L2) to (S);
      \draw[arout] (T) to (X);
      \draw[arout] (X) to (Y);
      \node at (1,0) {$G$};
    \end{scope}

    \begin{scope}[xshift=15cm]
      \node[ndout] (T) at (-0.5,1) {$T$};
      \node[ndout] (X) at (1,1) {$X$};
      \node[ndout] (Y) at (3,1) {$Y$};
      \draw[arout] (T) to (X);
      \draw[arout] (X) to (Y);
      \draw[arout, bend left] (T) to (Y);
      \node at (1,0) {$\tilde{G}$};
    \end{scope}
  \end{tikzpicture}
  \caption{$G(M)$ and $G\big(\big(\tilde{M}_{\sm L}\big)_{|\Si}\big)$ are graphs for the instrumental variables model. $G$ is the graph of a model with selection bias whose marginalized and conditioned graph is $G\big(\big(\tilde{M}_{\sm L}\big)_{|\Si}\big)$ while $\tilde{G}$ is its MAG representation.}
  \label{fig_ex_instrum}
  \end{figure}

  If we further assume a continuous linear model $Y=\beta X+f(U)$ in $M$, then the parameter $\beta$ is identifiable when $\mathrm{Cov}(X,Y)\ne 0$ and is estimated as $\frac{\mathrm{Cov}_{M}(T,Y)}{\mathrm{Cov}_{M}(X,Y)}$, where selection bias is implicitly ruled out \citep{imbens00instrumental}. With the conditioning operation, we can see that the parameter remains identifiable from the selected conditional distribution $\Prb_{\tilde{M}}(T,X,Y\mid S\in\Si)$ as $\frac{\mathrm{Cov}_{\tilde{M}}(T,Y\mid
  S\in\Si)}{\mathrm{Cov}_{\tilde{M}}(X,Y\mid S\in\Si)}$ even under certain forms of selection bias. Therefore, we have extended the identification result to include certain forms of selection bias.
\end{example}

\begin{remark}\label{rem:mag_instrumental}
    Note that it is unclear how MAGs can handle this example. If graph $G$ in Figure \ref{fig_ex_instrum} is interpreted as a MAG then conditioning on $S$ and marginalizing out $L_1$ and $L_2$ would yield the MAG $\tilde{G}$ in Figure \ref{fig_ex_instrum}, where the assumptions of instrumental variables are violated, thus being too coarse to establish the instrumental inequality.
\end{remark}

Mediation analysis is crucial in many fields such as epidemiology, natural science, and policy making, where understanding ``path-specific'' causal effects is often necessary \citep{pearl2001direct,pearl14interpretation,pearl2009causality,Robins1992IdentifiabilityAE}. Traditional methods rely on linear regression, but linear SCMs have been proven problematic due to potential nonlinear interactions among variables, latent common causes, and selection bias in real-world problems \citep{shpitser2013counterfactual}.

\begin{example}[Mediation analysis]\label{ex:mediation}
 With the help of potential outcomes and causal graphs of SCMs, \citet{pearl14interpretation} and \citet{shpitser2013counterfactual} study methods to perform mediation analysis when there are nonlinear functional dependencies and unobserved common causes. By extending the interpretation of bidirected edges to also represent \emph{selection bias}, we can extend these results to account for selection bias immediately, similarly to the approach in previous examples.

For another example on how the conditioning operation is helpful, suppose that one is interested in the effect of, e.g., $A$ (obesity) on $Y$ (mortality) while conditioning a mediating variable on the path between them to a specific value (e.g., $S=1$: having heart disease) \citep{smith2020selection}. The graph $G$ is shown in Figure~\ref{fig:mediation}. Applying the graphical conditioning operation gives $G_{|S}$. This shows that we can obtain a causal identification result for $\mathrm{E}[Y(a)\mid S=1]-\mathrm{E}[Y(a^\prime)\mid S=1]$ via back-door adjustment on $L$.

\begin{figure}[ht]
  \centering
  \begin{tikzpicture}[scale=0.8, transform shape]
    \begin{scope}[xshift=0]
      \node[ndout] (A) at (0,0) {$A$};
      \node[ndout] (S) at (1.5,0) {$S$};
      \node[ndout] (Y) at (3,0) {$Y$};
      \node[ndout] (L) at (2,-1) {$L$};
      \draw[arout] (A) to (S);
      \draw[arout] (S) to (Y);
      \draw[arout] (L) to (S);
      \draw[arout] (L) to (Y);
      \draw[arout, bend left] (A) to (Y);
      \node at (1.5,-2) {$G$};
    \end{scope}
    \begin{scope}[xshift=5cm]
      \node[ndash] (A) at (0,0) {$A$};
      \node[ndout] (Y) at (3,0) {$Y$};
      \node[ndash] (L) at (2,-1) {$L$};
      \draw[arout] (A) to (Y);
      \draw[arout] (L) to (Y);
      \draw[arlat, bend right] (A) to (L);
      \node at (1.5,-2) {$G_{|S}$};
    \end{scope}
  \end{tikzpicture}
  \caption{Graph $G$ for mediation analysis conditioning on one mediator and its conditioned graph $G_{|S}$.}
  \label{fig:mediation}
  \end{figure}
\end{example}

\begin{remark}
    The previous example shows that the conditioning operation is helpful in obtaining causal identification results, so it could also play a role in fairness analysis \citep{Nabi18fair,Chiappa19path-specific,kusner2017counterfactual,zhang2018fairness,Bhadane25Revisiting}.
\end{remark}

\subsection{Causal modeling under latent selection}

The question of how to perform causal modeling under selection bias is one of the original main motivations for this work. In the following example, we show how the conditioning operation can help with causal modeling under (latent) selection bias. The high-level idea is from Example~\ref{ex:rei} that even if there are no causal effects and no common causes between two variables there could still be dependency between them caused by selection bias. To state the example, recall that one possible workflow of causal inference is:
\begin{enumerate}
    \item[(i)] asking causal queries;
    \item[(ii)] \textbf{building a causal model} from prior knowledge and data;
    \item[(iii)] determining the target causal quantity and identifying the estimand in terms of available observational and interventional distributions;
    \item[(iv)] using data to estimate the estimand.
\end{enumerate}
As~concise encodings of causal assumptions, causal graphs can be used to decide the estimand for addressing causal queries, and therefore incorrect graphs might generate wrong results. 

\begin{example}[Causal modeling]\label{ex:causalassum}
To understand the causal effect of treatment strategies from different countries on the fatality rate of COVID-19, \citet{von2021simpson} analyzed data
from the initial virus outbreaks in 2020 in China and Italy, and assumed the causal graph $G$ shown in Figure \ref{fig_ex_causalassum}. For COVID-19 infected people, age ($A$), country of residence ($C$) at the time of infection and fatality rate ($F$) are recorded.

The data suggest that $C$ and $A$ are dependent. In the traditional understanding of bidirected edges, assuming that $C$ and $A$ do not share a latent common cause, one has to draw a directed edge between $C$ and $A$ so that the hypothesized graph is compatible with the observation. However, drawing a directed edge from $C$ to $A$ is not a reasonable causal assumption. It assumes that if we conduct a randomized trial to assign people to different countries, then immediately ($A$ and $C$ are measured almost the same time) the resulting age distribution will differ depending on the assigned country. Similarly, $A\tuh C$ would also be an unreasonable assumption.

However, the conditioning operation tells us that bidirected edges do not have to represent latent common causes only, but can also represent latent selection bias. Therefore, we can draw a bidirected edge $C\huh A$ as shown in $\tilde{G}$ to explain the statistical association between $C$ and $A$, which could represent different
latent selection mechanisms or latent common causes or combinations of the two between $C$ and $A$.\footnote{Note that the difference of common cause and selection bias does not matter for the current task, which shows the power of model abstraction.} 
First, the age distribution may differ between two countries
already before the outbreak of the virus (latent selection on `person was alive ($S^\prime=1$) in
early 2020', as in $G^1$). Second, since only \emph{infected} patients were registered and both the country and the age may influence the risk of getting infected, selection
of the infection status ($S=1$) can also lead to $C\huh A$ (as in $G^2$). The combinations of both
selection mechanisms (such as in $G^3$ or $G^4$) also lead to $C\huh A$. With the conditioning operation, we do not need to list (potentially infinitely many) all the possible causal graphs in detail, including \emph{all} relevant latent
variables that model the selection mechanism. We only need to consider DMGs on these three observed variables, which is a much smaller (finite) model space.

Thanks to properties of the conditioning operation, we can answer causal queries like ``what would be the effect on fatality of changing from China to Italy". It allows us to compute the total causal effect $\mathrm{TCE}(F;c^\prime\rightarrow c)\coloneqq \mathrm{E}[F\mid\Do(C=c)]-\mathrm{E}[F\mid \Do(C=c^\prime)]$ via the abstracted
(conditioned) model $\tilde{G}$ (e.g., by adjusting on age) without fully knowing all the latent details. Note that the results based on $G$ and $\tilde{G}$ are clearly different. In fact, for an SCM with graph $G$, one has:
\[
\mathrm{TCE}(F;c^\prime\rightarrow c)\coloneqq \mathrm{E}[F\mid\Do(C=c)]-\mathrm{E}[F\mid \Do(C=c^\prime)]=\mathrm{E}[F\mid C=c]-\mathrm{E}[F\mid C=c^\prime].
\]
On the other hand, for an SCM with graph $\tilde{G}$, one has:
\[
\begin{aligned}
  \mathrm{TCE}(F;c^\prime\rightarrow c)\coloneqq &~\mathrm{E}[F\mid\Do(C=c)]-\mathrm{E}[F\mid \Do(C=c^\prime)]\\
  =&\sum_a\left(\mathrm{E}[F\mid C=c,A=a]-\mathrm{E}[F\mid C=c^\prime,A=a]\right)\Prb(A=a)\\
  \ne&\sum_a(\mathrm{E}[F\mid C=c,A=a]\Prb(A=a\mid C=c) &\text{(in general)}\\
  &~-\mathrm{E}[F\mid C=c^\prime,A=a]\Prb(A=a\mid C=c^\prime)) \quad \\
  =&~\mathrm{E}[F\mid C=c]-\mathrm{E}[F\mid C=c^\prime],
\end{aligned}
\]
where in the second equality we use the back-door theorem allowed by the graph $\tilde{G}$.

\begin{figure}[ht]
\centering
\begin{tikzpicture}[scale=0.8, transform shape]
  \begin{scope}[xshift=0]
    \node[ndout] (T) at (1,1) {$C$};
    \node[ndout] (X) at (2,2) {$A$};
    \node[ndout] (Y) at (3,1) {$F$};
    \draw[arout] (T) to (X);
    \draw[arout] (T) to (Y);
    \draw[arout] (X) to (Y);
    \node at (2,0) {$G$};
  \end{scope}
  \begin{scope}[xshift=4cm]
    \node[ndout] (T) at (1,1) {$C$};
    \node[ndout] (X) at (2,2) {$A$};
    \node[ndout] (Y) at (3,1) {$F$};
    \draw[arout] (T) to (Y);
    \draw[arout] (X) to (Y);
    \draw[arlat, bend left] (T) to (X);
    \node at (2,0) {$\tilde{G}$};
  \end{scope}

  \begin{scope}[xshift=9cm]
    \node[ndout] (T) at (1,1) {$C$};
    \node[ndout] (X) at (2,2) {$A$};
    \node[ndlat] (T') at (-0.25,1) {$C^\prime$};
    \node[ndlat] (X') at (0.25,3) {$A^\prime$};
    \node[ndsel] (S') at (-0.8,2.25) {$S^\prime$};
    \node[ndout] (Y) at (3,1) {$F$};
    \draw[arout] (T') to (S');
    \draw[arout] (X') to (S');
    \draw[arout] (T') to (T);
    \draw[arout] (X') to (X);
    \draw[arout] (T) to (Y);
    \draw[arout] (X) to (Y);
    \node at (1,0) {$G^{1}$};
  \end{scope}
  \begin{scope}[xshift=0,yshift=-4cm]
    \node[ndout] (T) at (1,1) {$C$};
    \node[ndout] (X) at (2,2) {$A$};
    \node[ndlat] (T') at (-0.25,1) {$C^\prime$};
    \node[ndlat] (X') at (0.25,3) {$A^\prime$};
    \node[ndsel] (S) at (0.8,1.8) {$S$};
    \node[ndout] (Y) at (3,1) {$F$};
    \draw[arout] (T') to (S);
    \draw[arout] (X') to (S);
    \draw[arout] (T') to (T);
    \draw[arout] (X') to (X);
    \draw[arout] (T) to (Y);
    \draw[arout] (X) to (Y);
    \node at (1,0) {$G^2$};
  \end{scope}
  \begin{scope}[xshift=5cm,yshift=-4cm]
    \node[ndout] (T) at (1,1) {$C$};
    \node[ndout] (X) at (2,2) {$A$};
    \node[ndlat] (T') at (-0.25,1) {$C^\prime$};
    \node[ndlat] (X') at (0.25,3) {$A^\prime$};
    \node[ndsel] (S) at (0.8,1.8) {$S$};
    \node[ndsel] (S') at (-0.8,2.25) {$S^\prime$};
    \node[ndout] (Y) at (3,1) {$F$};
    \draw[arout] (T') to (S);
    \draw[arout] (X') to (S);
    \draw[arout] (T') to (S');
    \draw[arout] (X') to (S');
    \draw[arout] (T') to (T);
    \draw[arout] (X') to (X);
    \draw[arout] (T) to (Y);
    \draw[arout] (X) to (Y);
    \node at (1,0) {$G^{3}$};
  \end{scope}
  \begin{scope}[xshift=10cm,yshift=-4cm]
    \node[ndout] (T) at (1,1) {$C$};
    \node[ndout] (X) at (2,2) {$A$};
    \node[ndlat] (T') at (-0.25,1) {$C^\prime$};
    \node[ndlat] (X') at (0.25,3) {$A^\prime$};
    \node[ndsel] (S) at (0.8,1.8) {$S$};
    \node[ndsel] (S') at (-0.8,2.25) {$S^\prime$};
    \node[ndout] (Y) at (3,1) {$F$};
    \draw[arout] (T') to (S);
    \draw[arout] (X') to (S);
    \draw[arout] (T') to (S');
    \draw[arout] (X') to (S');
    \draw[arout] (S') to (S);
    \draw[arout] (T') to (T);
    \draw[arout] (X') to (X);
    \draw[arout] (T) to (Y);
    \draw[arout] (X) to (Y);
    \node at (1,0) {$G^{4}$};
  \end{scope}
\end{tikzpicture}
\caption{Causal graphs for COVID-19 data. Note that after applying the conditioning operation to selection variables and marginalizing out remaining latent variables, we reduce $G^i$ to $\tilde{G}$ for $i=1,2,3,4$.}
\label{fig_ex_causalassum}
\end{figure}
\end{example}


\section{Discussion} \label{s5}

Although SCMs have been widely used to study selection bias from a structural causal viewpoint, a formal theory was still absent. We gave a mathematical definition of s-SCMs (Definition~\ref{def:SCM_selection}), which formalizes the idea of SCMs with selection mechanisms, and a description of the data-generating processes that they are modeling. Motivated by the marginalization of causal models, which plays an important role in abstracting away unnecessary \emph{unconditioned} latent details of causal models, we defined a conditioning operation (Definition~\ref{def:cdSCM}) to transform an SCM with selection mechanisms into an SCM without selection mechanisms so that the new SCM preserves important information from the original SCM with selection mechanisms. The benefit of doing so is that, without the need to develop a separate theory for s-SCMs, we can \emph{reduce} the problems involving s-SCMs to ordinary SCMs, so that all the well-developed tools from SCMs can be applied directly. We also explored the theoretical limit of such a transformation by showing what can be preserved (Section~\ref{sec:3.2}) and what is impossible to preserve (Appendix~\ref{app:impossible}) during such a model abstraction process.

Most importantly, we generalized the interpretation of bidirected edges in directed mixed graphs (interpreted as causal graphs of SCMs) so that they can represent not only latent common causes but also latent selection bias. This makes the rough idea of ``using bidirected edges to represent selection bias'' formal, such as Pearl's claim in his causality book \citep[p.163]{pearl2009causality}. Using the same symbol (bidirected edge) to represent potential latent common causes and latent selection bias is also consistent with some observation in epidemiology \citep[Footnote 11]{richardson2013single}. Combined with marginalization and intervention, the conditioning operation provides a powerful tool for causal model abstraction and helps with many causal inference tasks such as prediction under interventions, identification, and model selection.

 One approach of causal modeling involves: (i) commencing with a complete graph, i.e., it has two directed edges in different directions and a bidirected edge between any two observed endogenous variables; (ii) iteratively deleting edges based on prior knowledge and available data. Our result contributes to this procedure by mathematically proving that, within the SCM framework, one should retain the bidirected edge between two variables when there is insufficient knowledge to rule out both  unmeasured common cause and latent selection bias.

The current work focuses mainly on the theoretical aspects of the conditioning operation. Some of the applications are briefly examined. We envision exploring further and more detailed research of applications enabled by conditioning operation in future work. In particular, the conditioning operation might be helpful in giving a causal interpretation to the output of certain causal discovery algorithms under selection bias.

Markov categories have recently emerged as a categorical framework for probability and statistics \citep{fritz2020synthetic}. In this “synthetic” approach, classical measure-theoretic foundation is replaced by a categorical one, and many familiar results can be proved  algebraically within the framework \citep{fritz2020synthetic,fritz2021_definetti_catprob_josa,chen2024aldous_hoover_catprob,fritz2025empirical_slln_catprob}.  Crucially, causal modeling can also be formulated at this abstract level \citep{fritz2023d,lorenz2023causal_string_diagrams}. In particular, it is possible to extend the theory of conditioning SCMs to the categorical setting, where recent work on partializations of Markov categories might be relevant \citep{mohammed2025partializations_markov_categories}.

\acks{The authors acknowledge Booking.com for support. The authors thank Philip Boeken, Stephan Bongers, Robin Evans, Tobias Fritz, Areeb Shah Mohammed, and Thomas Richardson for discussions. The authors are grateful to Luigi Gresele for discussions regarding \cref{ex:causalassum} and for affirming our approach to causal modeling of the Covid example.}

\newpage
\appendix
\section{More SCM preliminaries}\label{app:def}

  To be as self-contained as possible, we include the definitions of twin SCM and (augmented) causal graphs of SCMs. We follow the formal definitions of \citet{bongers2021foundations}.

  \begin{definition}[Twin SCM]\label{def:twin}
    Let $M=\SCM$ be an SCM. The twinning operation maps $M$ to the \textbf{twin structural causal model (twin SCM)}
    $$
    M^{\twin}\coloneqq \left(V \cup V^{\prime}, W, \mathcal{X}_V \times \mathcal{X}_{V^\prime}\times\Xc_W,\Prb,\tilde{f}\right),
    $$
    where $V^{\prime}=\left\{v^{\prime}: v \in V\right\}$ is a disjoint copy of $V$ and the causal mechanism $\tilde{f}: \mathcal{X}_V \times \mathcal{X}_{V^\prime}\times\Xc_W \rightarrow \mathcal{X}_V \times \mathcal{X}_{V^\prime}$
    is the measurable mapping given by $\tilde{f}\left(x_V, x_{V^\prime}, x_W\right)=\left(f(x_V, x_W), f(x_{V^\prime}, x_W)\right)$.
  \end{definition}

  \begin{definition}[Parent]\label{def:pa}
   Let $M=\SCM$ be an SCM. We call $k \in V \cup W$ a \textbf{parent} of $v \in V$ if and only if there does not exist a measurable mapping $ \tilde{f}_v: \mathcal{X}_{V\sm k} \times \mathcal{X}_{W\sm k} \rightarrow \mathcal{X}_v$ such that for $\Prb(X_W)$-almost every $x_W\in\Xc_W$, for all $x_V\in\Xc_V$,
  $$
  x_v=f_v(x_V, x_W) \quad \Longleftrightarrow \quad x_v=\tilde{f}_v\left(x_{V\sm k}, x_{W\sm k}\right).
  $$
  \end{definition}

  \begin{definition}[Graph and augmented graph]\label{def:gra}
  Let $M=\SCM$ be an SCM. We define:
  \begin{enumerate}
        \item[(1)] the augmented graph $G^a(M)$ as the directed graph with nodes $V\cup W$ and directed edges $u \rightarrow v$ if and only if $u \in V\cup W$ is a parent of $v \in V$;

    \item[(2)] the graph $G(M)$ as the directed mixed graph with nodes $V$, directed edges $u \tuh v$ if and only if $u \in V$ is a parent of $v \in V$ and bidirected edges $u \huh v$ if and only if there exists a $w \in W$ that is a parent of both $u \in V$ and $v \in V$.
  \end{enumerate}
  \end{definition}
  Note that $G(M)=(G^a(M))_{\sm W}$, where the graphical marginalization (also known as ``latent projection'') is defined in \citet[Definition~5.7]{bongers2021foundations}.

  \begin{example}\label{ex:appdend}
    Consider the SCM
    \[
    M:\left\{\begin{array}{l}
    U \sim \operatorname{Ber}(1-\xi),U_B \sim \operatorname{Ber}(1-\delta),
    U_E \sim \operatorname{Ber}(1-\varepsilon),\\
    B_0 = U, E_0 = U, S_0=B_0 \wedge E_0, \\
    B_1=B_0\wedge U_B, E_1=E_0\wedge U_E, S_1=B_1 \wedge E_1.
    \end{array}\right.
    \]
    Then we have the (augmented) causal graphs of $M$ shown in Figure~\ref{suppl}.
  \end{example}

  \begin{figure}[ht]
  \centering
  \begin{tikzpicture}[scale=0.85, transform shape]
    \begin{scope}[xshift=0]
      \node[ndexo] (U) at (2,5.5) {$U$};
      \node[ndout] (B0) at (1,4.5) {$B_0$};
      \node[ndout] (E0) at (3,4.5) {$E_0$};
      \node[ndexo] (Ub) at (0,4) {$U_B$};
      \node[ndexo] (Ue) at (4,4) {$U_E$};
      \node[ndout] (B1) at (1,3) {$B_1$};
      \node[ndout] (E1) at (3,3) {$E_1$};
      \node[ndout] (S0) at (2,3.5) {$S_0$};
      \node[ndout] (S1) at (2,2) {$S_1$};
      \draw[arout] (U) to (B0);
      \draw[arout] (U) to (E0);
      \draw[arout] (Ub) to (B1);
      \draw[arout] (Ue) to (E1);
      \draw[arout] (B0) to (S0);
      \draw[arout] (E0) to (S0);
      \draw[arout] (B0) to (B1);
      \draw[arout] (E0) to (E1);
      \draw[arout] (B1) to (S1);
      \draw[arout] (E1) to (S1);
      \node at (2,1) {$G^a(M)$};
    \end{scope}

    \begin{scope}[xshift=5cm]
      \node[ndout] (B0) at (1,4.5) {$B_0$};
      \node[ndout] (E0) at (3,4.5) {$E_0$};
      \node[ndout] (S0) at (2,3.5) {$S_0$};
      \node[ndout] (B1) at (1,3) {$B_1$};
      \node[ndout] (E1) at (3,3) {$E_1$};
      \node[ndout] (S1) at (2,2) {$S_1$};
      \draw[arout] (B0) to (S0);
      \draw[arout] (E0) to (S0);
      \draw[arout] (B0) to (B1);
      \draw[arout] (E0) to (E1);
      \draw[arout] (B1) to (S1);
      \draw[arout] (E1) to (S1);
      \draw[arlat, bend left] (B0) to (E0);
      \node at (2,1) {$G(M)$};
    \end{scope}

  \end{tikzpicture}
  \caption{The (augmented) causal graphs of the SCM $M$ in Example \ref{ex:appdend}.}
  \label{suppl}
  \end{figure}

\begin{definition}[(Counterfactual/interventional/observational) equivalence]\label{def:cio_equivalence}
\label{def:ce}\label{def:cou_equ}  A simple SCM $M=\SCM$ is \textbf{counterfactually equivalent} to a simple SCM $\tilde{M}=(\tilde{V},\tilde{W},\tilde{\Xc},\tilde{\Prb},\tilde{f})$ \wrt $O\subseteq V\cap\tilde{V}$ if for any
$T_1\subseteq O$ and $x_{T_1}\in\Xc_{T_1}$, and any $T_{2}\subseteq O^\prime$ and $x_{T_{2}}\in\Xc_{T_{2}}$,
\[
  \begin{aligned}
    &\Prb_{M^{\twin}}(X_{(O\cup O^\prime)\setminus(T_1\cup T_2)}\mid \Do(X_{T_1}=x_{T_1},X_{T_2}=x_{T_2}))\\
    &=\Prb_{\tilde{M}^{\twin}}(X_{(O\cup O^\prime)\setminus(T_1\cup T_2)}\mid \Do(X_{T_1}=x_{T_1},X_{T_2}=x_{T_2})).
\end{aligned}
\]
We say $M$ is \textbf{interventionally equivalent} to $\tilde{M}$ \wrt $O$ if
\[
    \Prb_{M}(X_{O\setminus T_1}\mid \Do(X_{T_1}=x_{T_1}))=\Prb_{\tilde{M}}(X_{O\setminus T_1}\mid \Do(X_{T_1}=x_{T_1})).
\]
We say $M$ is \textbf{observationally equivalent} to $\tilde{M}$ \wrt $O$ if
\[
    \Prb_{M}(X_{O})=\Prb_{\tilde{M}}(X_{O}).
\]
We say that $M$ and $\tilde{M}$ are observationally/interventionally/counterfatually equivalent if $V=\tilde{V}$ and $M$ is observationally/interventionally/counterfatually equivalent to $\tilde{M}$ \wrt $V$.
\end{definition}

\begin{remark}
We have
\[
\begin{aligned}
    \text{Equivalence of SCMs}\quad &\Longrightarrow \quad\text{Counterfactual equivalence}\\
    \quad &\Longrightarrow \quad \text{Interventional equivalence}\\
    \quad &\Longrightarrow \quad \text{Observational equivalence},
\end{aligned}
\]
but not conversely. See \citet[Proposition~4.6]{bongers2021foundations}.

\end{remark}

\begin{definition}[Directed global Markov property]\label{def:dmarkov}
    Let $G$ be a DMG with nodes $V$ and $\Prb(X_V)$ a probability distribution on $\Xc_V=\prod_{v\in V}\Xc_v$ for standard measurable spaces $X_v$. We say that the probability distribution $\Prb(X_V)$ satisfies the \textbf{directed global Markov property relative to $G$} if for subsets $A,B,C\subseteq V$ the set $A$ being $d$-separated from $B$ given $C$ implies that the random variable $X_A$ is conditionally independent of $X_B$ given $X_C$.
\end{definition}

\begin{theorem}[Directed Markov property for SCMs;  \citet{Forre2017markov}]\label{thm:dmarkov}
Let $M$ be a uniquely solvable SCM that satisfies at least one of the following three conditions:
\begin{enumerate}
    \item[(1)] $M$ is acyclic;
    \item[(2)]  all endogenous state spaces $\mathcal{X}_v$ are discrete and $M$ is ancestrally uniquely solvable \citep[Definition~3.9]{bongers2021foundations};
    \item[(3)] $M$ is linear \citep[Definition~C.1]{bongers2021foundations} and each of its causal mechanisms $\left\{f_v\right\}_{v \in V}$ has a nontrivial dependence on at least one exogenous variable, and $\Prb(X_W)$ has a density \wrt the Lebesgue measure on $\mathbb{R}^{W}$.
\end{enumerate}
Then its observational distribution $\Prb_M(X_V)$ exists, is unique, and satisfies the directed global Markov property relative to $G(M)$.
\end{theorem}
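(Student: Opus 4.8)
Existence and uniqueness are immediate from unique solvability: by \cref{def:scm_unique_solvability} the model has an essentially unique solution function $g\colon\Xc_W\to\Xc_V$, and I would define $\Prb_M(X_V)$ as the pushforward of $\Prb(X_W)$ along $x_W\mapsto g(x_W)$. Since $g$ is determined up to a $\Prb(X_W)$-null set, this pushforward is independent of the chosen representative, so $\Prb_M(X_V)$ exists and is unique. For the Markov property the plan is to prove the stronger statement at the level of the augmented graph and then descend. Concretely, let $\Prb_M(X_V,X_W)$ be the pushforward of $\Prb(X_W)$ along $x_W\mapsto(g(x_W),x_W)$; I will argue that $\Prb_M(X_V,X_W)$ satisfies the global Markov property relative to $G^a(M)$, and transfer this to $G(M)$.

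\textbf{The descent step.} This reduction uses two standard facts and is case-independent. First, since $G(M)=(G^a(M))_{\sm W}$ is the latent projection of $G^a(M)$ onto $V$ (the remark after \cref{def:gra}), separation is preserved: for $A,B,C\subseteq V$, $A\sep{d}{G(M)}B\mid C$ implies $A\sep{d}{G^a(M)}B\mid C$. Second, marginalizing $\Prb_M(X_V,X_W)$ over $X_W$ preserves every conditional-independence statement among the $X_V$. Hence, once the augmented Markov property is in hand, $A\sep{d}{G(M)}B\mid C$ yields $A\sep{d}{G^a(M)}B\mid C$, which yields $X_A\ind X_B\mid X_C$ under $\Prb_M(X_V,X_W)$, and finally under $\Prb_M(X_V)$. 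This settles the acyclic case (1) immediately: $G^a(M)$ is then a DAG on independent exogenous sources with each $X_v$ a deterministic function of its parents, so $\Prb_M(X_V,X_W)$ factorizes recursively and the classical local-implies-global Markov argument for DAGs applies, with $\sigma$- and $d$-separation coinciding.

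\textbf{The cyclic cases.} For (2) and (3) the augmented graph has cycles, $d$-separation is strictly stronger than $\sigma$-separation, and no recursive DAG factorization is available. The backbone is the general result of \citet{Forre2017markov} that every uniquely solvable SCM satisfies the $\sigma$-global Markov property relative to $G^a(M)$; the task is to upgrade the extra separations claimed by $d$- but not $\sigma$-separation. For (2), I would pass to the strongly connected components of $G^a(M)$ and its acyclic condensation: ancestral unique solvability supplies, for each ancestrally closed set, a measurable solution and hence a well-defined Markov kernel for each component given its parents, which assemble into a recursive factorization over the condensation with components as super-nodes. Discreteness then lets one manipulate the conditional mass functions directly and check combinatorially that $d$-separation in $G^a(M)$, read through the condensation, forces the required conditional independence. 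For (3), the solution is affine, $X_V=(I-B)^{-1}(c+\Gamma X_W)$ with $I-B$ invertible by unique solvability; the density assumption together with the nontrivial dependence of each mechanism on some exogenous variable ensures that the induced joint law has a density without spurious degeneracies, and matching the trek structure of $(I-B)^{-1}\Gamma$ against $d$-connection in $G^a(M)$ yields $d$-separation $\Rightarrow$ conditional independence.

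\textbf{Main obstacle.} The acyclic case is routine; the difficulty is entirely in the cyclic cases, where the loss of a topological ordering removes the recursive factorization and where $d$- and $\sigma$-separation genuinely diverge. The hardest ingredient is the measure-theoretic $\sigma$-Markov backbone at the augmented level, which requires relating $\sigma$-connecting walks to dependencies actually carried by the solution function; on top of this, the delicate bookkeeping is verifying that $d$-separation statements survive the passage to the condensation of strongly connected components (case 2) and are faithfully reflected in the algebra of $(I-B)^{-1}$ (case 3).
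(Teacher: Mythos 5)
You should first note that the paper itself does not prove this statement: it is imported verbatim (as a preliminary, in \cref{app:def}) from \citet{Forre2017markov} and \citet[Theorem~6.2]{bongers2021foundations}, so your sketch has to be measured against those sources' proofs. Your existence/uniqueness argument, the descent from $G^a(M)$ to $G(M)$ via latent projection, and case (1) are all fine. The genuine gap is the declared backbone of the cyclic cases: it is \emph{false} that every uniquely solvable SCM satisfies the $\sigma$-global Markov property relative to $G^a(M)$. Unique solvability \wrt $V$ alone implies existence and uniqueness of $\Prb_M(X_V)$ but no Markov property whatsoever; the $\sigma$-Markov theorem (\cref{thm:gmarkov} in this paper) requires simplicity, i.e., unique solvability \wrt \emph{every} subset (\cref{def:simple_SCM}), and \citet{bongers2021foundations} exhibit uniquely solvable non-simple SCMs violating even the generalized directed global Markov property — this is exactly why the present paper restricts to simple SCMs (\cref{rem:assum}). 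Since cases (2) and (3) assume only ancestral unique solvability, respectively unique solvability plus linearity, your ``$\sigma$-backbone then upgrade'' strategy is unavailable at its first step. Moreover, even where a $\sigma$-backbone holds, the upgrade \emph{is} the theorem: $d$-separation strictly refines $\sigma$-separation precisely inside strongly connected components (it can separate two nodes of the \emph{same} component, which a factorization over the condensation with components as super-nodes can never detect), so your case-(2) reduction cannot by itself produce those extra independencies. The actual discrete-case proof in \citet{Forre2017markov} is a delicate induction over ancestral subsets exploiting positivity of discrete events; your phrase ``check combinatorially that $d$-separation forces the required conditional independence'' names the hard core rather than performing it.

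Case (3) has a second, independent failure: matching the trek structure of $(I-B)^{-1}\Gamma$ against $d$-connection controls only covariances, hence yields vanishing \emph{partial correlations} (this is Spirtes' 1995 linear result). Under the theorem's hypotheses the exogenous law is merely assumed to have a Lebesgue density — it need not be Gaussian — and for non-Gaussian laws zero partial correlation does not imply conditional independence, so trek matching cannot deliver the claimed conclusion $X_A \ind X_B \mid X_C$ except in the Gaussian special case. The cited proof instead works directly with densities: the affine solution map pushes $\Prb(X_W)$ forward to a joint law admitting a density (this is where the ``nontrivial dependence of each $f_v$ on at least one exogenous variable'' enters, preventing degeneracies), and a factorization of that density compatible with $G^a(M)$ is established and then read off against $d$-separation. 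To repair your sketch you would need to replace the false backbone by the correct class hypotheses (simplicity or the modularity of \citet{Forre2017markov}) where a $\sigma$-result is invoked, and replace both cyclic-case ``upgrades'' by the density- and ancestral-induction arguments of the cited source.
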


We first recall the definition of $\sigma$-separation. In the following, we write \[\operatorname{Sc}_G(C)\coloneqq \{\tilde{v}\in V: \exists \tilde{v} \tuh \cdots \tuh v \text{ and } \tilde{v} \hut \cdots \hut v \text{ for some } v\in C\}\] for the \textbf{strongly connected component} of $C\subseteq V$.

\begin{definition}[$\sigma$-sepation for DMGs, \citep{Forre2017markov,bongers2021foundations}]\label{def:sigma_sep}
Let $G$ be a DMG with nodes $V$ and $C \subseteq V$ a subset of nodes and $\pi$ a walk in $G$:
$$
\pi=\left(v_0 \sus \cdots \sus v_n\right) .
$$
\begin{enumerate}
    \item We say that the walk $\pi$ is \textbf{$C$-$\sigma$-blocked} or \textbf{$\sigma$-blocked by $C$} if:
    \begin{enumerate}
        \item [(i)] $v_0 \in C$ or $v_n \in C$ or;

        \item [(ii)] there are two adjacent edges in $\pi$ of one of the following forms:
            $$
            \begin{array}{rlll}
            \text { left chain: } & v_{k-1} \hut v_k \hus v_{k+1} & \text { with } & v_k \in C \wedge v_k \notin \mathrm{Sc}_G\left(v_{k-1}\right), \\
            \text { right chain: } & v_{k-1} \suh v_k \tuh v_{k+1} & \text { with } & v_k \in C \wedge v_k \notin \mathrm{Sc}_G\left(v_{k+1}\right), \\
            \text { fork: } & v_{k-1} \hut v_k \tuh v_{k+1} & \text { with } & v_k \in C \wedge v_k \notin \operatorname{Sc}_G\left(v_{k-1}\right) \cap \operatorname{Sc}_G\left(v_{k+1}\right), \\
            \text { collider: } & v_{k-1} \suh v_k \hus v_{k+1} & \text { with } & v_k \notin \operatorname{Anc}_G(C) .
            \end{array}
            $$
    \end{enumerate}
We say that the walk $\pi$ is  $C$-$\sigma$-open if it is not  $C$-$\sigma$-blocked.

\item Let $A, B, C \subseteq V$ (not necessarily disjoint) be  subsets of nodes. We then say that:
 \textbf{$A$ is $\sigma$-separated from $B$ given $C$ in $G$}, in symbols:
$$
A \underset{G}{\stackrel{\sigma}{\perp}} B \mid C,
$$
if every walk/path from a node in $A$ to a node in $B$ is $\sigma$-blocked by $C$. (In the definition, taking either walk or path gives an equivalent definition.)
\end{enumerate}
\end{definition}

\begin{definition}[Generalized directed global Markov property;  \citet{Forre2017markov}]\label{def:gmarkov}
    Let $G=(V,E,H)$ be a DMG and $\Prb(X_V)$ a probability distribution on $\Xc_V=\prod_{v\in V}\Xc_v$ for standard measurable spaces $X_v$. We say that the probability distribution $\Prb(X_V)$ satisfies the \textbf{generalized directed global Markov property relative to $G$} if for subsets $A,B,C\subseteq V$ the set $A$ being $\sigma$-separated (\cref{def:sigma_sep}) from $B$ given $C$ implies that the random variable $X_A$ is conditionally independent of $X_B$ given $X_C$.
\end{definition}

\begin{theorem}[Generalized directed Markov property for SCMs; \citet{Forre2017markov,bongers2021foundations}] \label{thm:gmarkov}
 Let $M$ be a simple SCM. Then its observational distribution $\Prb_M(X_V)$ exists, is unique, and satisfies the generalized
 directed global Markov property relative to $G(M)$.
\end{theorem}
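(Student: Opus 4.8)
The plan is to treat the three assertions separately: existence and uniqueness follow quickly from simplicity, whereas the generalized directed global Markov property is the substantive claim and is exactly the content of \citet{Forre2017markov}, refined measure-theoretically in \citet{bongers2021foundations}. I would reconstruct the latter through the augmented graph $G^a(M)$.

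First I would settle existence and uniqueness. Since $M$ is simple (Definition~\ref{def:simple_SCM}) it is in particular uniquely solvable \wrt $V$ (Definition~\ref{def:scm_unique_solvability}), so there is an essentially unique measurable solution function $g:\Xc_W\to\Xc_V$ with $g(x_W)=f(g(x_W),x_W)$ for $\Prb$-\almostall $x_W$. I would define $\Prb_M(X_V,X_W)$ as the pushforward of $\Prb$ along $x_W\mapsto(g(x_W),x_W)$ and take $\Prb_M(X_V)$ to be its $\Xc_V$-marginal. Essential uniqueness of $g$ (up to $\Prb$-null sets) makes this law independent of the chosen solution function, giving both existence and uniqueness.

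For the Markov property I would first reduce to the augmented graph. Because $G(M)=(G^a(M))_{\sm W}$ is the latent projection of $G^a(M)$ over the exogenous nodes (\cf Definition~\ref{def:gra}), and $\sigma$-separation is stable under latent projection, for $A,B,C\subseteq V$ one has $A\sep{\sigma}{G(M)}B\mid C$ if and only if $A\sep{\sigma}{G^a(M)}B\mid C$ with $W$ left unconditioned; moreover a conditional independence in the joint law restricts to the same statement in the $X_V$-marginal. It therefore suffices to show that $\Prb_M(X_V,X_W)$ satisfies the $\sigma$-Markov property relative to $G^a(M)$. To establish this I would pass to the condensation of $G^a(M)$ into strongly connected components, which is a DAG. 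Using simplicity---unique solvability \wrt every subset, together with the fact that simplicity is preserved under marginalization \citep{bongers2021foundations}---each component $\mathrm{Sc}$ admits a solution map expressing $X_{\mathrm{Sc}}$ as a measurable function of the exogenous variables inside $\mathrm{Sc}$ and of the endogenous parents of $\mathrm{Sc}$ lying in earlier components. This yields a factorization of $\Prb_M(X_V,X_W)$ into kernels arranged along the condensation DAG with the exogenous roots mutually independent, and the remaining work is to verify that this factorization entails every independence demanded by $\sigma$-separation, which I would organise as an induction on the number of components, assembling local independences into the global one via the intersection property of conditional independence.

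The hardest part is precisely this last verification. In the presence of cycles the usual acyclic topological-ordering arguments fail, and the extra conditions built into $\sigma$-separation---the clauses $v_k\notin\mathrm{Sc}_{G}(\cdot)$ on the chain and fork cases of Definition~\ref{def:sigma_sep}---are exactly what must be exploited to rule out spurious dependence along an open walk inside a strongly connected component. A naive $d$-separation argument would yield a false statement here, so the component bookkeeping cannot be avoided; this is the technical core of \citet{Forre2017markov} and its measure-theoretic extension in \citet{bongers2021foundations}, and I would follow their strategy rather than attempt a shortcut.
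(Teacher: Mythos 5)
The paper does not prove this theorem: it is stated in Appendix~\ref{app:def} as an imported result, with the proof deferred entirely to \citet{Forre2017markov} and \citet{bongers2021foundations}. Your outline---existence and uniqueness via the essentially unique solution function of a simple SCM, and the $\sigma$-Markov property via the augmented graph and the strongly-connected-component structure, with the technical core explicitly deferred to those references---is a faithful sketch of the cited works' strategy and is consistent with how the paper treats the statement.
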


\section{Some examples and remarks}\label{app:ex}

\begin{remark}[Proof of claim in \cref{rem:conf}]\label{rem:pf}
If we assume that there is an underlying acyclic SCM $M=\SCM$ inducing the potential outcomes $X_A$ and $X_B(x_A)$, then equation~\eqref{eqn:po_unconfound} is equivalent to
\begin{equation}\label{eqn:po_unconfound_scm}
    \forall x_A\in\{0,1\}: g_A(X_W)\ind g^{V\sm A}_B(x_A,X_W),
\end{equation}
where $g$ and $g^{V\sm A}$ are the (essentially unique) solution functions of $M$ \wrt $V$ and $V\sm A$, respectively. Then we can show that (under a positivity assumption) equation~\eqref{eqn:po_unconfound_1} holds. 

Indeed, for every $x_A\in\{0,1\}$ with $\Prb_M(X_A=x_A)>0$,
\[
\begin{aligned}
    \Prb_M(X_B\mid X_A=x_A)&=\Prb_M(g_B(X_W)\mid g_A(X_W)=x_A)\\
    &=\Prb_M(g_B^{V\sm A}(g_A(X_W),X_W)\mid g_A(X_W)=x_A)& \text{(Lemma \ref{lem:consistency})}\\
    &=\Prb_M(g_B^{V\sm A}(x_A,X_W)\mid g_A(X_W)=x_A)\\
    &=\Prb_M(g_B^{V\sm A}(x_A,X_W))  &\text{ (equation \eqref{eqn:po_unconfound_scm})}\\
    &=\Prb_M(X_B(x_A))\\
    &=\Prb_M(X_B\mid \Do(X_A=x_A)).
\end{aligned}
\]
\end{remark}

\begin{remark}[\cref{ass} is mild]\label{rem:assum}

\begin{enumerate}
    \item Note that the class of simple SCMs is a \emph{more general model class} than acyclic SCMs.  Besides, one can easily generalize all the results in this work to an even more general class of SCMs than simple SCMs by carefully postulating corresponding unique solvability assumptions for the SCM \wrt certain subsets of $V$. However, for non-simple SCMs, there are many counter-intuitive phenomena. For example, non-simple SCMs can induce none or multiple (observational and interventional) distributions, they lack a Markov property, and interventions may affect non-descendants of the intervened target (see e.g., \citet{bongers2021foundations}). This suggests that non-simple SCMs are not intuitive causal models. Therefore, we focus on simple SCMs in the current work.

    \item In real-world applications, we mostly observe data from events with positive probabilities. Also, mathematically, although measure theory provides a way to define conditional probabilities given a null event, it is still ambiguous in general when the Borel-Kolmogorov paradox arises \citep{kolmogorov18foundations, Jaynes03probability}. So, it is reasonable not to model selection events with zero probabilities.
\end{enumerate}
\end{remark}

\begin{remark}[Remark on \cref{prop:cond_scm_dmg}]\label{rem:cond_scm_dmg}
 \begin{enumerate}
     \item   Recall that $G(M_{\sm L})$ can be a strict subgraph of $G(M)_{\sm L}$. Also note that the ``merging step'' can be more coarse in $G(M)$ than in $M$. Therefore, $G(M_{|X_S\in\Si})$ can be a strict subgraph of $G(M)_{|S}$ due to the merging step and the marginalization. This means that $G(M)_{|S}$ is generally a (strictly) more conservative representation of the underlying conditioned SCM with less causal information due to the nature of abstraction (recall that a sparser causal graph encodes stronger assumptions).

     \item Any reasonable attempt to a purely graphical conditioning operation should satisfy this property. Otherwise, one would conclude from the graph $G_{|S}$ some results that do not hold for some conditioned SCMs $M_{|X_S\in \Si}$ where $M$ is compatible with the graph $G$. Also note that one is not able to further ``minimize'' the conditioned graph by eliminating some (bi)directed edges, since there always exists an SCM $M$ and a selection mechanism $X_S\in \Si$ such that $G(M_{|X_S\in \Si})=G(M)_{| S}$ (consider a linear SCM with positive coefficients in which every endogenous variable has at least one exogenous parent such that $\Prb(X_w)=\Nc(0,1)$ for all $w\in W$ and with selection mechanism $X_S\in [0,\infty)$).

     \item Recall that graphical marginalization preserves ancestral relationships, i.e., $\anc_{G_{\sm L}}(B)=\anc_{G}(B)\sm L$. This property also holds for graphical conditioning, i.e., $a\in\anc_{G}(b)$ iff $a\in \anc_{G_{| S}}(b)$ for any $S\subseteq V\sm \{a,b\}$. However, this is not the case for SCM marginalization and conditioning in general. At the level of SCMs, the best we can conclude is that if $a\in \anc_{G(M_{\sm L})}(b)$ or $a\in \anc_{G(M_{|X_S\in \Si})}(b)$, then $a\in \anc_{G(M)}(b)$ but not conversely.

     \item The conditioning operation for an SCM does not introduce new directed causal paths to the graph of the original SCM. This aligns with the principle that an ``individual causal effect” present in a subset of the population must also be present in the entire population, though the reverse is not necessarily true.\footnote{Note that in contrast to ``individual level'' causal effects, a ``population causal effect” in a subpopulation may not be present in the whole population due to the fact that “population causal effects” in different subpopulations may cancel out with each other.}
     
 \end{enumerate}
\end{remark}

\begin{remark}[Remark on modeling interpretation]\label{rem:model}
  \begin{enumerate}

      \item The ``node-splitting'' trick has been applied in various forms and  situations \citep{pearl2009causality,richardson2013swig}. Usually, a copy $A^\prime$ is added as a child of $A$ to the original causal model and interventions are performed on $A^\prime$ instead of on $A$. Similarly, one can also use the node-splitting trick when using the conditioning operation where a copy $A^\prime$ is added as a parent of $A$ to the original causal model and selections are performed on $A^\prime$ instead of on $A$.

      \item There are some relations between conditioned SCMs and counterfactual reasoning. Therefore, the conditioning operation can provide an easy way to identify \emph{unnested} counterfactual quantities in some cases.\footnote{Notions defined via nested counterfactual quantity such as various notions of fairness \citep{kusner2017counterfactual,zhang2018fairness}  can always be rewritten as an unnested one using the Counterfactual Unnesting Theorem \citep[Theorem 1]{correa2021nest}.} For example, suppose that we want to identify the counterfactual quantity $\Prb_M(Y(t)\mid S=s)$ given the graph $G(M)$ in Figure \ref{fig:ex_counterfactual}. Then we can apply the graphical conditioning operation to get $G(M)_{|S}$ from $G(M)$. We know that $G(M_{|S=s})$ must be a subgraph of $G(M)_{|S}$ by \cref{prop:cond_scm_dmg}. For finite discrete variables under positivity assumptions, the back-door criterion applied to $M_{|S=s}$ gives \[
      \Prb_{M_{|S=s}}(Y(t))=\sum_{z}\Prb_{M_{|S=s}}(Y \mid T=t,Z=z)\Prb_{M_{|S=s}}(Z=z).
      \]
      Hence, we can conclude that
      \[
      \Prb_{M}(Y(t)\mid S=s)=\sum_{z}\Prb_{M}(Y \mid T=t,Z=z,S=s)\Prb_{M}(Z=z\mid S=s).
      \]
This is related to the problems of ``Type-I selection bias'' and ``internal validity'' according to the jargon of \citet{lu2022toward} and \citet{smith2020selection}, respectively.
\begin{figure}[ht]
  \centering
  \begin{tikzpicture}[scale=0.8, transform shape]
    \begin{scope}[xshift=0]
      \node[ndout] (S) at (-0.5,1) {$S$};
      \node[ndout] (T) at (1,1) {$T$};
      \node[ndout] (Z) at (2,2) {$Z$};
      \node[ndout] (Y) at (3,1) {$Y$};
      \draw[arout] (T) to (S);
      \draw[arout] (Z) to (T);
      \draw[arout] (T) to (Y);
      \draw[arout] (Z) to (Y);
      \node at (2,0) {$G(M)$};
    \end{scope}
    \begin{scope}[xshift=4cm]
      \node[ndash] (T) at (1,1) {$T$};
      \node[ndash] (Z) at (2,2) {$Z$};
      \node[ndout] (Y) at (3,1) {$Y$};
      \draw[arout] (Z) to (T);
      \draw[arout] (T) to (Y);
      \draw[arout] (Z) to (Y);
      \draw[arlat, bend left] (T) to (Z);
      \node at (2,0) {$G(M)_{|S}$};
    \end{scope}
  \end{tikzpicture}
  \caption{Causal graph $G(M)$ and its conditioned graph $G(M)_{|S}$ on $S$.}
  \label{fig:ex_counterfactual}
  \end{figure}

\end{enumerate}
\end{remark}

\begin{example}[The assumptions in Theorem \ref{thm:dsep} cannot be omitted]\label{ex:counterex_sep}

  \begin{enumerate}

    \item The assumption that $\ch_{G}(S)=\emptyset$ cannot be omitted.
    Consider the following case shown in Figure \ref{ce1}. We have $A \underset{G}{\stackrel{\sigma}{\perp}} B\mid S$ but do not have $A \underset{G_{|S}}{\stackrel{\sigma}{\perp}} B$.
    \begin{figure}[ht]
    \centering
    \begin{tikzpicture}[scale=0.8, transform shape]
      \begin{scope}[xshift=0cm]
        \node[ndout] (S) at (1.5,4) {$S$};
        \node[ndout] (A) at (0,4) {$A$};
        \node[ndout] (B) at (3,4) {$B$};
        \draw[arout] (S) to (B);
        \draw[arout] (A) to (S);
        \node at (1.5,3) {$G$};
      \end{scope}
      \begin{scope}[xshift=6cm]
        \node[ndash] (A) at (0,4) {$A$};
        \node[ndout] (B) at (1.5,4) {$B$};
        \draw[arout] (A) to (B);
        \node at (1,3) {$G_{|S}$};
      \end{scope}
    \end{tikzpicture}
    \caption{Causal graphs in the first item of Example~\ref{ex:counterex_sep}.}
    \label{ce1}
    \end{figure}

    \item  The assumption that $C\cap\anc_G(S)=\emptyset$ cannot be omitted. Consider the following case shown in Figure \ref{ce2}. We have $A \underset{G}{\stackrel{\sigma}{\perp}} B\mid C\cup S$ but we do not have $A \underset{G_{|S}}{\stackrel{\sigma}{\perp}} B\mid C$. However, note that we have $A \underset{\mathrm{MAG}(G)_{|S}}{\stackrel{d}{\perp}} B\mid C$ where $\mathrm{MAG}(G)_{|S}$ denotes the conditioned MAG of $G$ given $S$ (see \citet{richardson2002ancestral}).
    \begin{figure}[ht]
    \centering
    \begin{tikzpicture}[scale=0.8, transform shape]
      \begin{scope}[xshift=0cm]
        \node[ndout] (C) at (2,4) {$C$};
        \node[ndout] (A) at (1,3) {$A$};
        \node[ndout] (B) at (3,3) {$B$};
        \node[ndout] (S) at (2,2) {$S$};
        \draw[arout] (C) to (B);
        \draw[arout] (C) to (S);
        \draw[arout] (B) to (S);
        \draw[arout] (A) to (C);
        \node at (2,1) {$G$};
      \end{scope}
      \begin{scope}[xshift=5cm]
        \node[ndash] (C) at (2,4) {$C$};
        \node[ndash] (A) at (1,3) {$A$};
        \node[ndash] (B) at (3,3) {$B$};
        \draw[arlat, bend left] (A) to (C);
        \draw[arlat, bend left] (A) to (B);
        \draw[arlat, bend left] (C) to (B);
        \draw[arout] (C) to (B);
        \draw[arout] (A) to (C);
        \node at (2,1) {$G_{|S}$};
      \end{scope}
      \begin{scope}[xshift=10cm]
        \node[ndout] (C) at (2,4) {$C$};
        \node[ndout] (A) at (1,3) {$A$};
        \node[ndout] (B) at (3,3) {$B$};
        \draw[tut] (A) to (C);
        \draw[tut] (C) to (B);
        \node at (2,1) {$\mathrm{MAG}(G)_{|S}$};
      \end{scope}
    \end{tikzpicture}
    \caption{Causal graphs and MAGs in the second item of Example~\ref{ex:counterex_sep}.}
    \label{ce2}
    \end{figure}

    \item  The assumption that $S$ is a singleton set cannot be omitted. Consider the following case shown by Figure \ref{ce3}. For $S=\{S_1,S_2\}$, we have $A \underset{G}{\stackrel{\sigma}{\perp}} Y\mid Z\cup S$, but we do not have $A \underset{G_{|S}}{\stackrel{\sigma}{\perp}} Y\mid Z$.
    \begin{figure}[ht]
    \centering
    \begin{tikzpicture}[scale=0.8, transform shape]
      \begin{scope}[xshift=0cm]
        \node[ndout] (Z) at (2,4) {$Z$};
        \node[ndout] (A) at (0,4) {$A$};
        \node[ndout] (Y) at (3,4) {$Y$};
        \node[ndout] (S1) at (1,2.5) {$S_1$};
        \node[ndout] (S2) at (3,2.5) {$S_2$};
        \draw[arout] (Z) to (S1);
        \draw[arout] (Y) to (S2);
        \draw[arout] (A) to (S1);
        \node at (2,1.5) {$G$};
      \end{scope}
      \begin{scope}[xshift=6cm]
        \node[ndash] (Z) at (1.5,4) {$Z$};
        \node[ndash] (A) at (0,3) {$A$};
        \node[ndash] (Y) at (3,3) {$Y$};
        \draw[arlat, bend left] (A) to (Z);
        \draw[arlat, bend left] (A) to (Y);
        \draw[arlat, bend left] (Z) to (Y);
        \node at (1.5,1.5) {$G_{|S}$};
      \end{scope}
    \end{tikzpicture}
    \caption{When $S=\{S_1,S_2\}$ is not a singleton set, the $\sigma$-separation $A \underset{G}{\stackrel{\sigma}{\perp}} Y\mid Z\cup S$ does not imply $A \underset{G_{|S}}{\stackrel{\sigma}{\perp}} Y\mid Z$.}
    \label{ce3}
    \end{figure}

  \end{enumerate}
\end{example}

\begin{example}[Markov property does not imply conditional independence given an event]\label{ex:counterex_ci}
Consider a discrete acyclic causal model $M$ given by
\[
\begin{aligned}
    &\Prb_M(X_A\mid X_S=0)=\frac{1}{2}\delta_0+\frac{1}{2}\delta_1, \quad \Prb_M(X_B\mid X_S=0)=\frac{1}{2}\delta_0+\frac{1}{2}\delta_1,\\
    &\Prb_M(X_A\mid X_S=1)=\frac{1}{2}\delta_0+\frac{1}{2}\delta_1, \quad \Prb_M(X_B\mid X_S=1)=\frac{1}{2}\delta_0+\frac{1}{2}\delta_1,\\
    &\Prb_M(X_A\mid X_S=2)=\frac{1}{3}\delta_0+\frac{2}{3}\delta_1, \quad \Prb_M(X_B\mid X_S=2)=\frac{1}{3}\delta_0+\frac{2}{3}\delta_1, \\
    &\Prb_M(X_S)=\frac{1}{3}\delta_0+\frac{1}{3}\delta_1+\frac{1}{3}\delta_2,
\end{aligned}
\]
and $\Prb_M(X_A,X_B,X_S)=\Prb_M(X_A\mid X_S)\otimes \Prb_M(X_B\mid X_S)\otimes \Prb_M(X_S)$
whose graph is drawn in Figure \ref{fig:counterexample_ci}.

\begin{figure}[ht]
\centering
\begin{tikzpicture}[scale=0.8, transform shape]  
    \node[ndout] (C) at (2,4) {$S$};
    \node[ndout] (A) at (1,3) {$A$};
    \node[ndout] (B) at (3,3) {$B$};
    \draw[arout] (C) to (A);
    \draw[arout] (C) to (B);
    \node at (2,2) {$G(M)$};
\end{tikzpicture}
\caption{Causal graph of $M$ where $X_A\underset{\Prb_M(X_V)}{\notind} X_B\mid X_S\in \{1,2\}$ even if $A\underset{G(M)}{\stackrel{d}{\perp}}  B\mid S$.}
\label{fig:counterexample_ci}
\end{figure}

We can compute that
\[
\begin{aligned}
    \Prb_M(X_A\mid X_S\in \{1,2\})&=\frac{\Prb_M(X_A, X_S\in \{1,2\})}{\Prb_M(X_S\in \{1,2\})}=\frac{5}{12}\delta_0+\frac{7}{12}\delta_1,\\
    \Prb_M(X_B\mid X_S\in \{1,2\})&=\frac{\Prb_M(X_B, X_S\in \{1,2\})}{\Prb_M(X_S\in \{1,2\})}=\frac{5}{12}\delta_0+\frac{7}{12}\delta_1,\\
    \Prb_M(X_A,X_B\mid X_S\in \{1,2\})&=\frac{\Prb_M(X_A,X_B, X_S\in \{1,2\})}{\Prb_M(X_S\in \{1,2\})}=\frac{13}{72}\delta_{00}+\frac{17}{72}\delta_{01}+\frac{17}{72}\delta_{10}+\frac{25}{72}\delta_{11}.
\end{aligned}
\]
Then it is easy to see that
\[
\begin{aligned}
   &\Prb_M(X_A\mid X_S\in\{1,2\})\otimes \Prb_M(X_B\mid X_S\in \{1,2\})\\
   &=\frac{25}{144}\delta_{00}+\frac{35}{144}\delta_{01}+\frac{35}{144}\delta_{10}+\frac{49}{144}\delta_{11}\\
   &\ne \Prb_M(X_A,X_B\mid X_S\in \{1,2\}).
\end{aligned}
\]
Note that $A\underset{G(M)}{\stackrel{d}{\perp}}  B\mid S$. From the Markov property or direct calculation, one can see that $X_A\underset{\Prb_M(X_V)}{\ind} X_B\mid X_S$. However, the above calculation implies that $X_A\underset{\Prb_M(X_V)}{\notind} X_B\mid X_S\in \{1,2\}$.

\end{example}

\begin{remark}[Remark on \cref{ex:counterex_ci}]
    Note that one \emph{cannot} say that this falsifies the Markov property. In fact, the Markov property is about conditioning on a variable and the above example is about conditioning on an event, which are fundamentally different.
\end{remark}

\section{Proofs}\label{app:pf}

\partition*
\begin{proof}
     We first show that $(\Pf_{\Si}, \vee)$ is a finite join semi-lattice where $\Ic\vee \Jc\coloneqq \{I\cap J: I\in \Ic \text{ and } J\in \Jc\}\sm \{\emptyset\}$. To achieve that, it suffices to show that $\Pf_{\Si}$ is closed under the join operation. If $\{X_{I_i}\}_{i=1}^p$ are mutually independent and $\{X_{J_j}\}_{j=1}^q$ are mutually independent under some probability distribution $\tilde{\Prb}$ then we have that $\{X_{K_k}\}_{k=1}^m$ are mutually independent under $\tilde{\Prb}$ where $\{K_k\}_{k=1}^m=\Ic\vee \Jc$. That is, $\tilde{\Prb}(X_W)=\bigotimes_{i=1}^p\tilde{\Prb}(X_{I_i})=\bigotimes_{i=1}^p \bigotimes_{j=1}^q\tilde{\Prb}(X_{I_i\cap J_j})=\bigotimes_{k=1}^m \tilde{\Prb}(X_{K_k}) $. Since $(\Pf_{\Si}, \vee)$ is finite, there must exist a largest element, which we denote by $\Hc=\{H_i\}_{i=1}^n$. This means that every partition $\Jc$ in $\Pf_{\Si}$ must be coarser than $\Hc$ according to the order induced by the join $\vee$, so the partition $\Hc$ is the finest partition in $\Pf_{\Si}$. 
 \end{proof}

\modelclass*
 \begin{proof}
Since exogenous random variables do not have parents, merging exogenous random variables will not introduce cycles. Merging exogenous random variables will also preserve the simplicity and linearity of SCMs. Indeed, if $g^A:\Xc_{V\sm A}\times \Xc_W\rightarrow \Xc_A$ is the essentially unique solution function of $M$ \wrt $A$ for some $A\subseteq V\sm S$, then the function $\tilde{g}^A:\Xc_{V\sm A}\times \Xc_{\hat{W}}\rightarrow \Xc_A$ defined by $\tilde{g}^A(x_{V\sm A},x_{\hat{W}})=g^A(x_{V\sm A},x_{W})$ with $x_W=(x_{\hat{w}})_{\hat{w}\in \hat{W}}=x_{\hat{W}}\in \Xc_{\hat{W}}$ is the essentially unique solution function of $\tilde{M}$ \wrt $A$, where $\tilde{M}$ is the same as $M$ but with $W$ replaced by $\hat{W}$ and $\Xc$ by $\Xc_V\times \Xc_{\hat{W}}$. So merging exogenous variables preserves simplicity. For linearity, let $\Xc_{u}$ and $\Xc_v$ denote some linear vector spaces (they do not have to be the real line $\Rb$), and $\Lc(\Xc_u, \Xc_v)$ denote the set of linear mappings from $\Xc_u$ to $\Xc_v$. Let $f_v(x_V,x_W)=\sum_{u\in V}(T_{vu}(x_W))x_u+\Gamma_v(x_W)$ where $T_{vu}:\Xc_{W}\rightarrow \Lc(\Xc_u, \Xc_v)$ and $\Gamma_v:\Xc_W\rightarrow \Xc_v$ are (nonlinear) mappings such that $f_v$ is measurable. Then the mapping $\tilde{f}_v(x_V,x_{\hat{W}})=\sum_{u\in V}(\tilde{T}_{vu}(x_{\hat{W}}))x_u+\tilde{\Gamma}_v(x_{\hat{W}})$ is still linear and measurable where
\[
\tilde{T}_{vu}(x_{\hat{W}})\coloneqq
        T_{vu}(x_{W}) \text{ and } \tilde{\Gamma}_v(x_{\hat{W}})\coloneqq
        \Gamma_v(x_{W})  \text{ with } x_{\hat{W}}=x_W.
\]

Updating the probability distributions of the exogenous random variables to the posterior preserves simplicity, acyclicity, and linearity of SCMs. In particular, this preserves simplicity because if $\Prb_M(X_S\in \Si)>0$, then $\Prb_M(X_W\mid X_S\in \Si)\ll\Prb_M(X_W)$. By slightly generalizing \citet[Propositions 8.2, 5.11 and C.5]{bongers2021foundations},\footnote{Our definition of linear SCMs is more general than the one in \citet{bongers2021foundations}.} we have that marginalization preserves simplicity, acyclicity, and linearity of SCMs. Hence, we obtain that the conditioning operation preserves simplicity, acyclicity, and linearity of SCMs.
\end{proof}

\intcond*
\begin{proof}
     We check the definition one by one. Write $\left(M_{|\Si}\right)_{\Do(X_T=x_T)}\coloneqq (\hat{V},\hat{W},\hat{\Xc},\hat{\Prb},\hat{f})$ and $\left(M_{\Do(X_T=x_T)}\right)_{|\Si}\coloneqq (\ol{V},\ol{W},\ol{\Xc},\ol{\Prb},\ol{f})$. Set $O\coloneq V\sm S$.

     First, it is easy to see that $\hat{V}=V \sm S=\ol{V}$. Because $T\cap \anc_{G(M)}(S)=\emptyset$ and $M$ is simple, $g_S^{-1}(\Si)=\tilde{g}_S^{-1}(\Si)$ up to a $\Prb(X_W)$-null set where $g$ and $\tilde{g}$ are solution functions of $M$ and $M_{\Do(X_T=x_T)}$ respectively. Therefore, we can conclude that $\hat{W}=\ol{W}$. Then we have $\hat{\Xc}=\Xc_O\times \Xc_{\hat{W}}=\Xc_O\times \Xc_{\ol{W}}=\ol{\Xc}$. Since $M$ and $M_{\Do(X_T=x_T)}$ have the same exogenous distribution $\Prb$ and $g_S^{-1}(\Si)=\tilde{g}_S^{-1}(\Si)$ up to a $\Prb(X_W)$-null set, we have $\Prb_M(X_{\hat{w}_i},X_S\in\Si)=\Prb_{M_{\Do(X_T=x_T)}}(X_{\ol{W}_i},X_S\in\Si)$. Hence, we can conclude that
     \[
      \Prb_M(X_{\hat{w}_i}\mid X_S\in \Si)=\Prb_{M_{\Do(X_T=x_T)}}(X_{\ol{W}_i}\mid X_S\in \Si).
     \]
     Therefore, we have $\hat{\Prb}=\bigotimes_{i=1}^n\hat{\Prb}(X_{\hat{w}_i})=\bigotimes_{i=1}^n\ol{\Prb}(X_{\ol{W}_i})=\ol{\Prb}$. For the causal mechanisms, we have $\hat{f}\left(x_{\hat{V}}, x_{\hat{W}}\right)=\left(f_{O\sm T}\left(x_O, g^S\left(x_O, x_{\hat{W}}\right), x_{\hat{W}}\right),x_T\right)$ with $g^S$ the (essentially unique) solution function of $M$ \wrt $S$. Let $\tilde{f}$ be the causal mechanism of $M_{\Do(X_T=x_T)}$ and let $\tilde{g}^S$ be the (essentially unique) solution function of $M_{\Do(X_T=x_T)}$ \wrt $S$. Then we have
     \[
        \begin{aligned}
            \ol{f}\left(x_{\ol{V}}, x_{\ol{W}}\right)&=\tilde{f}_{O}\left( x_O, \tilde{g}^S\left( x_O, x_{\ol{W}}\right), x_{\ol{W}}\right)\\
            &=\left(f_{O\sm T}\left(x_O, \tilde{g}^S\left(x_O, x_{\ol{W}}\right), x_{\ol{W}}\right),x_T\right).
        \end{aligned}
     \]
     Since $T\cap S=\emptyset$, we have $f_S(x_V,x_W)=\tilde{f}_S(x_V,x_W)$ for all $x_W\in\Xc_W$ and all $x_{V}\in \Xc_{V}$. Recall that $g^S$ and $\tilde{g}^S$ are the (essentially unique) solution functions of $M$ and $M_{\Do(X_T=x_T)}$ \wrt $S$ respectively, i.e., for $\Prb(X_W)$-a.a.\ $x_W\in\Xc_W$ and all $x_{V}\in \Xc_{V}$
  \[
    x_S=g^S(x_O,x_{W})\Longleftrightarrow x_S=f_S(x_V,x_W) \text{ and } x_S=\tilde{g}^S(x_O,x_{W})\Longleftrightarrow x_S=\tilde{f}_S(x_V,x_W).
  \]
  Therefore, for $\Prb(X_{W}\mid X_S\in \Si)$-a.a.\ $x_{\hat{W}}=x_{\ol{W}}\in\Xc_{\hat{W}}=\Xc_{\ol{W}}$ and all $x_{O}\in \Xc_{O}$
  \[
  g^{S}(x_{O},x_{\hat{W}})=\tilde{g}^S(x_{O},x_{\ol{W}}).
  \]
  Hence, for $\Prb(X_{W}\mid X_S\in \Si)$-a.a.\ $x_{\hat{W}}=x_{\ol{W}}\in\Xc_{\hat{W}}=\Xc_{\ol{W}}$ and all $x_{O}\in \Xc_{O}$
  \[
  \hat{f}(x_O,x_{\hat{W}})=\ol{f}(x_O,x_{\ol{W}}).
  \]
  With the definition of the conditioning operation we conclude
\[
\left(M_{\Do(X_T=x_T)}\right)_{|X_S\in\Si}\equiv\left(M_{|X_S\in\Si}\right)_{\Do(X_T=x_T)}.
\]
\end{proof}

\causalsemantics*
\begin{proof}
    Let $T\subseteq O$.
    Let $\tilde{g}^{O\sm T}:\Xc_T \times \Xc_W \rightarrow \Xc_{O\sm T}$ be the (essentially unique) solution function of $M_{\sm S}$ \wrt $O\sm T$ and $\hat{g}^{O\sm T}:\Xc_T\times \Xc_{\hat{W}}\rightarrow \Xc_{O\sm T}$ be the (essentially unique) solution function of $M_{|\Si}$ \wrt $O\sm T$. For $\Prb_M(X_W\mid X_S\in \Si)$-a.a\ $x_{\hat{W}}=x_W\in\Xc_W$ and all $x_V \in \Xc_V$, we have
    \[
        \begin{aligned}
            x_{O\sm T}=\tilde{g}^{O\sm T} \left(x_T, x_W\right)
            & \Longleftrightarrow x_{O\sm T}=f_{O\sm T}\left(x_O, g^S\left(x_O, x_W\right), x_W\right) \\
            & \Longleftrightarrow x_{O\sm T}=\hat{f}_{O\sm T}\left(x_O, x_W\right) \\
            & \Longleftrightarrow x_{O\sm T}=\hat{g}^{O\sm T}\left(x_T,x_{\hat{W}}\right)
        \end{aligned}
    \]
    This implies that $\tilde{g}^{O\sm T}(x_W,x_T)=\hat{g}^{O\sm T}(x_{\hat{W}},x_T)$ for $\Prb_M(X_W\mid X_S\in\Si)$-a.a.\ $x_{\hat{W}}=x_W\in \Xc_W$ and all $x_{T}\in \Xc_{T}$. Hence, we have
    \[
    \begin{aligned}
        \Prb_{M_{|\Si}}\left(\{X_{O\sm T_i}(x_{T_i})\}_{1\leq i\leq n}\right)&= \left(\hat{g}^{O\sm T_1}(x_{T_1},\cdot),\ldots, \hat{g}^{O\sm T_n}(x_{T_n},\cdot)\right)_*\Prb_{M_{|\Si}}(X_{\hat{W}})\\
        &=\left(\tilde{g}^{O\sm T_1}(x_{T_1},\cdot),\ldots, \tilde{g}^{O\sm T_n}(x_{T_n},\cdot)\right)_*\Prb_{M}(X_{W}\mid X_S\in\Si)\\
        &=\left(g^{V\sm T_1}_{O\sm T_1}(x_{T_1},\cdot),\ldots, g^{V\sm T_n}_{O\sm T_n}(x_{T_n},\cdot)\right)_*\Prb_{M}(X_{W}\mid X_S\in\Si)\\
        &=\Prb_M\left(\{X_{O\sm T_i}(x_{T_i})\}_{1\leq i\leq n}\mid X_S\in\Si\right),
    \end{aligned}
    \]
    since $g^{V\sm T_i}_{O\sm T_i}(x_W,x_{T_i})=\tilde{g}^{O\sm T_i}(x_W,x_{T_i})$ for $\Prb_M(X_W)$-a.a.\ $x_W\in \Xc_W$ and all $x_{T_i}\in \Xc_{T_i}$ where $g^{V\sm T_i}:\Xc_{T_i\cup S} \times \Xc_W \rightarrow \Xc_{V\sm T_i}$ is the (essentially unique) solution function of $M$ \wrt $V\sm T_i$ by \citet[Lemma~6.8.4]{forre25causality}.
\end{proof}

\optimal*
\begin{proof}
    The proof is obvious by noting that 
\[
    \begin{aligned}
        \Prb_M\left(X_{O\setminus T}\mid \Do(X_{T}=x_{T}),X_S\in \Si\right) &=\Prb_{\tilde{M}}\left(X_{O\setminus T}\mid \Do(X_T=x_T)\right)\\
        &=\Prb_{\tilde{M}}\left(X_{O\setminus T}(x_T)\right)\\
        &=\Prb_M(X_{O\sm T}(x_T)\mid X_S\in \Si)\\
        &=\Prb_{M_{|X_S\in \Si}}\left(X_{O\setminus T}(x_T)\right)\\
        &=\Prb_{M_{|X_S\in \Si}}\left(X_{O\setminus T}\mid \Do(X_T=x_T)\right).
    \end{aligned}
\]
\end{proof}

\condmarg*
\begin{proof}
    Write $(M_{|X_S\in\Si})_{\setminus L}=(\hat{V},\hat{W},\hat{\Xc},\hat{\Prb},\hat{f})$ and $(M_{\setminus L})_{|X_S\in \Si}=(\ol{V},\ol{W},\ol{\Xc},\ol{\Prb},\ol{f})$.

    First, it is easy to see that $\hat{V}=\ol{V}=V\sm(L\cup S)$.
    Let $\tilde{g}:\Xc_W\rightarrow \Xc_{V\sm L}$ be the (essentially unique) solution function of $M_{\sm L}$. Then we have $g_{S}=\tilde{g}_S$ $\Prb(X_W)$-a.s. Therefore, by the definition of $\hat{W}$ and $\ol{W}$, we have $\hat{W}=\ol{W}$ and $\hat{\Prb}(X_{\hat{W}})=\ol{\Prb}(X_{\ol{W}})=\Prb_M(X_W\mid X_S\in \Si)$. Furthermore, we have $\hat{\Xc}=\ol{\Xc}$. By \citet[Proposition 5.4]{bongers2021foundations}, we have $\hat{f}(x_{V\sm (L\cup S)},x_{\hat{W}})=\ol{f}(x_{V\sm (L\cup S)},x_{\ol{W}})$  for $\Prb_M(X_W\mid X_S\in \Si)$-a.a.\ $x_{\hat{W}}=x_{\ol{W}}\in \Xc_{\hat{W}}$ and all $x_{V\sm (L\cup S)}\in \Xc_{V\sm (L\cup S)}$. Overall, $(M_{\setminus L})_{|X_S\in \Si}\equiv (M_{|X_S\in\Si})_{\setminus L}$.
\end{proof}

 \iterativecond*
\begin{proof}
    It is easy to see that $(M_{|\Si_1})_{|\Si_2}$, $(M_{|\Si_2})_{|\Si_1}$, and $M_{|\Si_1\times\Si_2}$ are well defined given the assumption above.
    Write $O:=V\sm(S_1\cup S_2)$. The counterfactual and potential-outcome equivalence among the three SCMs can be deduced by the following observation:
    \[
        \begin{aligned}
            \Prb_{M_{|\Si_1\times \Si_2}}\big(\{X_O(x_{T_i})\}_{i=1}^n\big)&=\Prb_{M}\big(\{X_O(x_{T_i})\}_{i=1}^n\mid X_S\in \Sc_1\times \Si_2\big)\\
            &=\Prb_{(M_{|\Si_1})_{|\Si_2}}\big(\{X_O(x_{T_i})\}_{i=1}^n\big)\\
            &=\Prb_{(M_{|\Si_2})_{|\Si_1}}\big(\{X_O(x_{T_i})\}_{i=1}^n\big).
        \end{aligned}
    \]

    Now we show that $G(M_{|\Si_1\times \Si_2})$ is a subgraph of $G((M_{|\Si_1})_{|\Si_2})$ and $G((M_{|\Si_2})_{|\Si_1})$. By \cref{lem:indicator}, we can find a simple SCM $\widetilde{M}$ such that $\sib_{G(\widetilde{M})}(\widetilde{S}_1\cup \widetilde{S}_2)\cup \ch_{G(\widetilde{M})}(\widetilde{S}_1\cup \widetilde{S}_2)=\emptyset$ and $M_{|\Si_1\times \Si_2}\equiv (\widetilde{M}_{|\widetilde{\Si}_1\times \widetilde{\Si}_2})_{\sm S_1\cup S_2}$ and $(M_{|\Si_1})_{|\Si_2}\equiv (((\widetilde{M}_{|\widetilde{\Si}_1})_{\sm S_1})_{|\widetilde{\Si}_2})_{\sm S_2}$. We have by \cref{prop:cond_marg}
    \[
        (((\widetilde{M}_{|\widetilde{\Si}_1})_{\sm S_1})_{|\widetilde{\Si}_2})_{\sm S_2}\equiv (((\widetilde{M}_{|\widetilde{\Si}_1})_{|\widetilde{\Si}_2})_{\sm S_1})_{\sm S_2}\equiv ((\widetilde{M}_{|\widetilde{\Si}_1})_{|\widetilde{\Si}_2})_{\sm S_1\cup S_2}.
    \]
    It suffices to show that $G((\widetilde{M}_{|\widetilde{\Si}_1\times \widetilde{\Si}_2})_{\sm S_1\cup S_2})$ is a subgraph of $G(((\widetilde{M}_{|\widetilde{\Si}_1})_{|\widetilde{\Si}_2})_{\sm S_1\cup S_2})$. Write $M^\prime\coloneqq \widetilde{M}_{|\widetilde{\Si}_1\times \widetilde{\Si}_2}$ and $M^{\prime\prime}\coloneqq (\widetilde{M}_{|\widetilde{\Si}_1})_{|\widetilde{\Si}_2}$. By definition, $M^{\prime}$ and $M^{\prime\prime}$ have the same causal mechanisms and the same exogenous distribution, which is $\Prb_M(X_W\mid X_{S_1}\in \Si_1,X_{S_2}\in \Si_2)$. Therefore, we can use the same solution functions \wrt $S_1\cup S_2$ for marginalizations. Hence, the directed edges in their graphs coincide. Note that the exogenous nodes of $\widetilde{M}_{|\widetilde{\Si}_1\times \widetilde{\Si}_2}$ have the finest partition of $W$ given $X_{S_1}\in \Si_1$ and $X_{S_2}\in \Si_2$. Therefore, the bidirected edges of $\widetilde{M}_{|\widetilde{\Si}_1\times \widetilde{\Si}_2}$ are a subset of those of $(\widetilde{M}_{|\widetilde{\Si}_1})_{|\widetilde{\Si}_2}$. This finishes the proof.

    Finally, we show that, under the conditions of the proposition, we have $(M_{|\Si_1})_{|\Si_2}\equiv (M_{|\Si_2})_{|\Si_1}\equiv M_{|\Si_1\times \Si_2}$. Write
    \[
    \begin{aligned}
    (M_{|\Si_1})_{|\Si_2}&=(V^{12},W^{12},\Xc^{12},f^{12},\Prb^{12})\\
    (M_{|\Si_2})_{|\Si_1}&=(V^{21},W^{21},\Xc^{21},f^{21},\Prb^{21})\\
    M_{|\Si_1\times\Si_2}&=(V^{1\times 2},W^{1\times 2},\Xc^{1\times 2},f^{1\times 2},\Prb^{1\times 2}).
    \end{aligned}
    \]

    First, note that to establish $(M_{|\Si_1})_{|\Si_2}\equiv (M_{|\Si_2})_{|\Si_1}\equiv M_{|\Si_1\times \Si_2}$, it suffices to show that $W^{12}=W^{21}=W^{1\times 2}$.\footnote{Strictly speaking they are not exactly equal to each other but are isomorphic. For simplicity, we see being isomorphic as equal.} Indeed, it is easy to see that $V^{12}=V^{21}=V^{1\times 2}$. Given that $W^{12}=W^{21}=W^{1\times 2}$, we have $\Xc^{12}=\Xc^{21}=\Xc^{1\times 2}$ and $\Prb^{12}=\Prb^{21}=\Prb^{1\times 2}$. By the properties of marginalization, we have $f^{12}=f^{21}=f^{1\times 2}$ $\Prb^{12}$-a.s..

    By the property of SCMs, there exist measurable functions $\tilde{g}_{S_1}:\Xc_{\anc_{G^a(M_{\sm (V\sm S_1)})}(S_1)\cap W}\to \Xc_{S_1}$ and $\tilde{g}_{S_2}:\Xc_{\anc_{G^a(M_{\sm (V\sm S_2)})}(S_2)\cap W}\to \Xc_{S_2}$ such that $g_{S_1}(x_W)=\tilde{g}_{S_1}(x_{\anc_{G^a(M_{\sm (V\sm S_1)})}(S_1)\cap W})$ and $g_{S_2}(x_W)=\tilde{g}_{S_2}(x_{\anc_{G^a(M_{\sm (V\sm S_2)})}(S_2)\cap W})$ for $\Prb(X_W)$-a.a.\ $x_W\in \Xc_W$. Since $\anc_{G^a(M_{\sm (V\sm S_1)})}(S_1)\cap \anc_{G^a(M_{\sm (V\sm S_2)})}(S_2)=\emptyset$, we have $W^{12}=W^{21}$. To see $W^{1\times 2}=W^{12}$, it suffices to note that $g^{-1}_{S_1\cup S_2}(\Si_1\times \Si_2)=g^{-1}_{S_1}(\Si_1)\cap g^{-1}_{S_2}(\Si_2)$. If $\Prb(X_W\in (g^{-1}_{S_1}(\Si_1)\sd g^{-1}_{S_2}(\Si_2)))=0$, then we have $g^{-1}_{S_1}(\Si_1)\pequ g^{-1}_{S_2}(\Si_2)\pequ g^{-1}_{S_1}(\Si_1)\cap g^{-1}_{S_2}(\Si_2)= g^{-1}_{S_1\cup S_2}(\Si_1\times \Si_2)$ where $\pequ$ denotes equality up to a null set. This implies $W^{12}=W^{21}=W^{1\times 2}$.

\end{proof}

\indicator*
\begin{proof}
First note that $\tilde{M}$ defined above is a simple SCM and $\Prb_{\tilde{M}}(X_{\tilde{S}}=1)=\Prb_{M}(X_{S}\in \Si)>0$. We declare some notation. We write $M_{|X_S\in\Si}=(\hat{V},\hat{W},\hat{\Xc},\hat{\Prb},\hat{f})$ and $(\tilde{M}_{|X_{\tilde{S}}=1})_{\sm S}=(\ol{V},\ol{W},\ol{\Xc},\ol{\Prb},\ol{f})$. It is easy to see that $\hat{V}=\ol{V}=V\sm S$ and $\hat{\Xc}=\ol{\Xc}=\Xc_{V\sm S}$. Also note that $g_S(\Si)^{-1}=\tilde{g}_{\tilde{S}}(1)^{-1}$ up to a $\Prb$-null set where $g$ and $\tilde{g}$ are the (essentially unique) solution functions of $M$ and $\tilde{M}$ respectively. Then we can conclude that $\hat{W}=\ol{W}$ and also $\hat{\Prb}=\ol{\Prb}=\Prb_M(X_W\mid X_S\in \Si)$. It is also obvious that $\hat{f}(x_{V\sm S},x_{\hat{W}})=\ol{f}(x_{V\sm S},x_{\ol{W}})=f_{V\sm S}(x_{V\sm S},g^{S}(X_{V\sm S},x_{W}),x_W)$ where $g^S$ is the (essentially unique) solution function of $M$ \wrt $S$, for $\Prb_M(X_W\mid X_S\in \Si)$-a.a.\ $x_{\ol{W}}=x_{\hat{W}}=x_W\in\Xc_W$ and all $x_{V\sm S}\in \Xc_{V\sm S}$. Overall, we have $M_{|X_S\in\Si}\equiv (\tilde{M}_{|X_{\tilde{S}}=1})_{\sm S}$.
\end{proof}

\dsep*

\begin{proof}
  We show the first statement. By \cref{prop:prop_graph_cond}, we can assume WLOG that $S=\{s\}$ is a singleton set. To show
  \[
    A \underset{G_{|S}}{\stackrel{\sigma}{\perp}} B\mid C \quad \Longrightarrow \quad A \underset{G}{\stackrel{\sigma}{\perp}} B\mid C\cup S,
  \]
  it suffices to show
  \[
    A \underset{G}{\stackrel{\sigma}{\notsep}} B\mid C\cup S \quad \Longrightarrow \quad A \underset{G_{|S}}{\stackrel{\sigma}{\notsep}} B\mid C.
  \]

    Let $\pi:v_0\sus \cdots \sus v_n$ be a $\sigma$-open walk between $A$ and $B$ given $C\cup S$ in $G$ such that all the colliders are in $C\cup S$ \citep[Proposition~3.3.6]{forre25causality}.  WLOG we can assume that $s$ appears on $\pi$ at most once. Indeed, assume that $\pi$ is of the form $v_0\sus \cdots \sus v_i\sus s \sus \cdots \sus s\sus v_j\sus \cdots v_n$ and the subwalks $\pi(v_0,v_i)$ and $\pi(v_j,v_n)$ do not contain $s$. If we have $v_i\suh s $ and $s\hus v_j$, then the walk $\pi(v_0,v_i)\oplus \pi(v_i,s)\oplus \pi(s,v_j)\oplus \pi(v_j,n)$ is $\sigma$-open given $C\cup S$. If we have $v_i\hut s$ and $s \hus v_j$, then $v_i$ must be in the same strongly connected component of $s$ since otherwise $\pi$ is blocked by $S$. Therefore,  the walk $\pi(v_0,v_i)\oplus \pi(v_i,s)\oplus \pi(s,v_j)\oplus \pi(v_j,n)$ is $\sigma$-open given $C\cup S$. The same conclusion holds if $v_i\suh s$ and $s\tuh v_j$, and likewise if $v_i\hut s$ and $s\tuh v_j$. If $s$ occurs on $\pi$ as a non-collider, i.e., $v_{i-1}\suh s \tuh v_{i+1}$ or $v_{i-1}\hut s \tuh v_{i+1}$ or $v_{i-1}\hut s \hus v_{i+1}$, then, by replacing the segments by $v_{i-1}\suh v_{i+1}$ or $v_{i-1}\huh v_{i+1}$ or $v_{i-1}\hus v_{i+1}$ respectively and keeping other parts of $\pi$ intact, we have a $\sigma$-open walk from $A$ to $B$ given $C$ in $G_{|S}$. So consider the case where $s$ does not occur on $\pi$ as a non-collider in the following. If all colliders on $\pi$ are in $C$ and $\pi$ does not contain $s$, then $\pi$ is $\sigma$-open between $A$ and $B$ given $C$ in $G_{|S}$. We consider the case where $s$ occurs on $\pi$ as collider. Define $v_l$ and $v_r$ to be the most left and most right nodes on $\pi$ respectively that are in $\anc_{G}(S)\cup \sib_G(\anc_G(S))$. So $\pi$ is of the form $v_0\sus \cdots \sus v_l \suh \cdots \tuh s\hut \cdots \hus v_r \sus \cdots \sus v_n$. We replace the subwalk $\pi(v_l,v_r)$ with $v_l\huh v_r$ on $\pi$ to construct a new walk $\tilde{\pi}$. Note that $v_l$ cannot become a collider on $\tilde{\pi}$ if it is a non-collider on $\pi$ and similarly for $v_r$. If $v_l$ is a collider on $\tilde{\pi}$, it must be in $C$. Similarly for $v_r$. If $v_l$ is a non-collider on $\pi$, it must be unblockable or not in $C$. If it is unblockable on $\pi$, it remains so on $\tilde{\pi}$. Similarly for $v_r$. Hence, $\tilde{\pi}$ is $\sigma$-open in $G_{|S}$ from $A$ to $B$ given $C$.

Overall, given any $\sigma$-open walk between $A$ and $B$ given $C\cup S$ in $G$, we can find a $\sigma$-open walk between $A$ and $B$ given $C$ in $G_{|S}$. This shows that
\[
  A \underset{G}{\stackrel{\sigma}{\notsep}} B\mid C\cup S \quad \Longrightarrow \quad A \underset{G_{|S}}{\stackrel{\sigma}{\notsep}} B\mid C.
\]

We now show
\[
  A \underset{G}{\stackrel{\sigma}{\perp}} B\mid C\cup S \quad \Longrightarrow \quad A \underset{G_{|S}}{\stackrel{\sigma}{\perp}} B\mid C
\]
under the conditions that $C\cap\anc_G(S)=\emptyset$, $\ch_G(S)=\emptyset$, and $S$ is a singleton set. Assume that $\pi$ is a $\sigma$-open walk between $A$ and $B$ given $C$ in $G_{|S}$ such that all the colliders are in $C$. We shall construct a $\sigma$-open walk between $A$ and $B$ given $C\cup S$ in $G$. 

For all edges in $\pi$ that are also in $G$, we keep them untouched. Note that since $\ch_G(S)=\emptyset$, there do not exist directed edges in $G_{|S}$ that are not in $G$. Therefore, if all bidirected edges on $\pi$ in $G_{|S}$ are also in $G$, then we are done. Thus, we are left with the case where there are some bidirected edges $v_i\huh v_{i+1}$ of $\pi$ not in $G$. Note that since $S$ is singleton and the possibilities of $v_i\huh s \tuh v_{i+1}$ and $v_i\hut s \huh v_{i+1}$ and $v_i\hut s \tuh v_{i+1}$ are excluded by the assumption $\ch_G(S)=\emptyset$, we can replace those bidirected edges $v_i\huh v_{i+1}$ with $v_{i}\suh w_{1} \tuh \cdots \tuh w_{k}\tuh s \hut w_{k+1}\hut\cdots\hut w_m \hus v_{i+1}$ ($k$ could be zero and $m$ could be $k$) and get a new walk $\tilde{\pi}$ in $G$. Next, we show that $\tilde{\pi}$ cannot be blocked by
$C\cup S$ at $v_i$ or $v_{i+1}$ in $G$.

If $v_i$ or $v_{i+1}$ is an endnode of $\pi$, then $\tilde{\pi}$ cannot be blocked at that node in $G$. Thus, we assume that $v_{i}$ and $v_{i+1}$ are not endnodes of $\pi$. Suppose that in $\pi$ we have $\cdots\suh v_i\huh v_{i+1}\ \cdots$. Since $v_i$ is a collider on $\pi$, by the assumption on $\pi$, we have $v_i\in C$.  Then it is impossible to have
$\cdots\suh v_{i}\tuh w_{1} \tuh \cdots \tuh w_{k}\tuh s $ in $G$, since in this case we would have $\anc_{G}(S)\cap C\neq \emptyset$.
Hence, we must have $\cdots\suh v_{i}\huh w_{1} \tuh \cdots w_{k}\tuh s $, and then we know that this walk is $\sigma$-open given $C$ at $v_i$ in $G$.

Now we suppose $\cdots \hut v_i\huh v_{i+1}$. We have to show that $\cdots \hut v_{i}\suh w_{1} \tuh \cdots \tuh w_{k}\tuh s $ is $\sigma$-open given $C\cup S$  at $v_i$. To check it, first assume $\cdots \hut v_{i}\huh w_{1} \tuh \cdots \tuh w_{k}\tuh s $.
In this case, if $v_i\notin C$, then this is obviously $\sigma$-open. If $v_i\in C$, then it is also $\sigma$-open, since $v_i$ must point to the same strongly connected component. Otherwise, $\pi$ cannot be $\sigma$-open given $C$ at $v_i$ in $G_{|S}$. Second, we assume $\cdots \hut v_{i}\tuh w_{1} \tuh \cdots \tuh w_{k}\tuh s $. In this case $v_i\notin C$, since if $v_i\in C$ then $\anc(C)\cap S\neq\emptyset$. Thus, $\tilde{\pi}$ is $(C\cup S)$-$\sigma$-open at $v_i$ in $G$.

A similar argument can be made for $v_{i+1}$. Then we have constructed a $\sigma$-open walk between $A$ and $B$ given $C\cup S$ in $G$. This contradicts the fact that $A \underset{G}{\stackrel{\sigma}{\perp}} B\mid C\cup S$. Therefore, there is no $\sigma$-open walk between $A$ and $B$ given $C$ in $G_{|S}$. So we have $A \underset{G_{|S}}{\stackrel{\sigma}{\perp}} B\mid C$.

Almost the same argument also applies to the case of $d$-separation.
\end{proof}

\propgraphcond*

\begin{proof}
We show the first two results. Denoting by $G_{\abe(S)}$ the graph obtained by the first step of \cref{def:cond_dmg} allows us to write $G_{|S}=(G_{\abe(S)})_{\sm S}$. By \cref{lem:abemarg} and the fact that marginalizations of two disjoint sets commute \citep[Proposition~5.8]{bongers2021foundations}, we have 
 \[
    (G_{\sm L})_{|S}=((G_{\sm L})_{\abe(S)})_{\sm S}=((G_{\abe(S)})_{\sm L})_{\sm S}=((G_{\abe(S)})_{\sm S})_{\sm L}=(G_{|S})_{\sm L},
 \]
 and by \cref{lem:abeabe}
 \[
 \begin{aligned}
     \left(G_{| S_1}\right)_{| S_2}&=\left(\left((G_{\abe(S_1)})_{\sm S_1}\right)_{\abe(S_2)}\right)_{\sm S_2}\\
     &=\left(\left((G_{\abe(S_1)})_{\abe(S_2)}\right)_{\sm S_2}\right)_{\sm S_1}\\
     &=\left(\left((G_{\abe(S_2)})_{\sm S_2}\right)_{\abe(S_1)}\right)_{\sm S_1}
     =(G_{|S_2})_{|S_1}.
 \end{aligned}
 \]
 Since $(G_{\abe(S_1)})_{\abe(S_2)}\subseteq G_{\abe(S_1\cup S_2)}$, we have $\left(G_{| S_1}\right)_{| S_2}\subseteq G_{|S_1\cup S_2}$.

We now show the third result. It suffices to consider $T=\{t\}$ and show that $(G_{\Do(T)})_{\abe(S)}=(G_{\abe(S)})_{\Do(T)}$. Note that there are two cases. One is $t\notin \sib_G(\anc_G(S))$ and the other one is $t\in\sib_G(\anc_G(S))\sm\anc_G(S)$. Local configuration of a node means the edges adjacent to that node in the graph. In the first case, the local configuration of $t$ is independent of performing $\abe(S)$. In the second case, first intervening on $t$ breaks the bidirected edge between $t$ and $\anc_{G}(S)$ and second performing $\abe(S)$ keeps the local configuration of $t$ untouched, while first performing $\abe(S)$ adds some bidirected edges between $t$ and nodes in $\anc_G(S)\cup \sib_G(\anc_G(S))$, but second intervening on $t$ then throws these bidirected edges away. Hence, no matter what order of the conditioning and the intervening are applied, one still ends up with the same graph.
\end{proof}

\condscmdmg*

\begin{proof}
    We call a subset $A$ of $V$ ancestral if $\anc_{G(M)}(A)=A$ and call an SCM $M$ ancestrally uniquely solvable if for every ancestral subset $A$ of $V$ the SCM $M$ is essentially uniquely solvable \wrt $A$. Since simple SCMs are ancestrally uniquely solvable, we have that $G(M_{\sm S})$ is a subgraph of $G(M)_{\sm S}$ by \citet[Proposition 5.11]{bongers2021foundations}. Let $g:\Xc_W\to \Xc_V$ be the (essentially) unique solution function of $M$. Note that there exists a measurable map $\tilde{g}:\Xc_{W\cap \anc_{G^a}(S)}\to \Xc_S$ such that $\tilde{g}=g_S$ $\Prb(X_W)$-a.s. So, exogenous variables that are not in $W\cap \anc_{G^a}(S)$ will not merge. Denote by $M_{\mathrm{merge}}$ the SCM obtained from the merging operation. This implies that $G(M_{\mathrm{merge}})$ is a subgraph of $G(M)_{\abe(S)}$. Since $\Prb_M(X_W\mid X_S\in \Si)\ll \Prb(X_W)$, $G((M_{\mathrm{merge}})_{\mathrm{update}})$ is a subgraph of $G(M_{\mathrm{merge}})$, where $M_{\mathrm{update}}$ denotes the SCM obtained from $M$ via updating the exogenous distribution to the posterior given $X_S\in \Si$.   Definition of the conditioned DMG and \citet[Proposition 5.11]{bongers2021foundations} yield
    \[
        \begin{aligned}
            G(M_{|X_S\in \Si})&=G(((M_{\mathrm{merge}})_{\mathrm{update}})_{\sm S})\\
            &\subseteq G((M_{\mathrm{merge}})_{\mathrm{update}})_{\sm S}
            \subseteq G(M_{\mathrm{merge}})_{\sm S} \subseteq (G(M)_{\abe(S)})_{\sm S}=G(M)_{|S}, 
        \end{aligned}
    \]
    since $(M_{\mathrm{merge}})_{\mathrm{update}}$ is simple.
    
    We now show the second claim. From the last part, we have that $G(M_{|X_{s_1}\in \Si_1})$ is a subgraph of $G(M)_{|s_1}$. Note that if $G_1$ is a subgraph of $G_2$ then $(G_1)_{|S}$ is a subgraph of $(G_2)_{|S}$. By \cref{prop:iterative_cond}, it holds that $G(M_{|X_{\{s_1,s_2\}}\in \Si_1\times \Si_2})$ is a subgraph of $G((M_{|X_{s_1}\in \Si_1})_{|X_{s_2}\in \Si_2})$ and is therefore a subgraph of $(G(M)_{|s_1})_{|s_2}$. Hence, we can conclude that $G(M_{|X_S\in\Si})$ is a subgraph of $((G(M)_{|s_1})_{\ldots})_{|s_n}$.
\end{proof}

\cievent*

\begin{proof}
Let $\{\Hc_n\}_{n=1}^\infty$ be a countable generator of $\Bc_{\Xc_S}$ such that $\Prb(X_S\in \Hc_n)>0$ for all $n$, which exists since $(\Xc_S,\Bc_{\Xc_S})$ is standard. Define 
\[
\begin{aligned}
    \Gc_n&\coloneqq \sigma(X_C)\vee\sigma(\Ibbm (X_S\in \Hc_1),\ldots,\Ibbm (X_S\in \Hc_n))\\
    \text{ and }
    \Gc_\infty&\coloneqq \sigma(X_C)\vee\sigma(\Ibbm (X_S\in \Hc_1),\Ibbm (X_S\in \Hc_2),\ldots).
\end{aligned}
\]
We suppose that $X_A\ind X_B \mid X_C,X_S\in \Hc$ for all $\Hc\in \Bc_{\Xc_S}$ such that $\Prb(X_S\in \Hc)>0$.  Then we have for all measurable subsets $\Cc\subseteq \Xc_A$ and $\Dc \subseteq \Xc_B$ and for all $n\in \Nb$
\[
    \Prb\left((X_A, X_B)\in \Cc\times \Dc\mid \Gc_n\right)
    =\Prb(X_A\in \Cc\mid \Gc_n)\cdot \Prb(X_B\in \Dc\mid \Gc_n) \quad \Prb\text{-a.s.}
\]
L\'{e}vy's upward theorem \citep{Williams91prob} implies that $\Prb(\ldots\mid \Gc_n)\overset{\text{a.s.}}{\to} \Prb(\ldots\mid \Gc_{\infty})=\Prb(\ldots\mid X_C,X_S)$ as $n\to \infty$. Therefore,
\[
    \Prb((X_A, X_B)\in \Cc\times \Dc\mid X_C,X_S)=\Prb(X_A\in \Cc\mid X_C,X_S)\cdot \Prb(X_B\in \Dc\mid X_C,X_S) \quad \Prb\text{-a.s.},
\]
which means that $X_A\ind X_B\mid X_C,X_S$.
\end{proof}

\begin{lemma}[The first step of \cref{def:cond_dmg} commutes with marginalization]\label{lem:abemarg}
    Let $G=(V, E^d, E^b)$ be a DMG and $S,L\subseteq V$ be two disjoint subsets. Then we have $(G_{\abe(S)})_{\sm L}=(G_{\sm L})_{\abe(S)}$.
\end{lemma}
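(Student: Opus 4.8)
The plan is to check that the two DMGs agree on each of the three constituents separately: the vertex set, the directed edges, and the bidirected edges. Both graphs have vertex set $V\sm L$ (since $S\cap L=\emptyset$, the set $S$ survives in both), and because $\abe(S)$ only ever adds bidirected edges — and in particular creates no new directed paths — the directed edges of $(G_{\abe(S)})_{\sm L}$, of $G_{\sm L}$, and of $(G_{\sm L})_{\abe(S)}$ all coincide with those produced by the latent projection of the directed part of $G$. So the entire content of the lemma lies in the bidirected edges.

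The key tool I would set up first is a path characterization of the bidirected edges of a latent projection $H_{\sm L}$: for $a,b\in V\sm L$ with $a\ne b$, one has $a\huh b$ in $H_{\sm L}$ iff $H$ contains a path from $a$ to $b$ all of whose intermediate nodes lie in $L$ and are non-colliders, with arrowheads at both $a$ and $b$. A short case analysis of the non-collider constraint (using that no tail--tail edge exists) shows that every such path has the rigid shape $a\hut\cdots\hut u\,[\huh]\,w\tuh\cdots\tuh b$: a possibly empty run of edges directed towards $a$, then at most one bidirected edge, then a run directed towards $b$. Hence $a\huh b$ in $H_{\sm L}$ is equivalent to the existence of $u,w\in\{a,b\}\cup L$ with $u$ an $L$-ancestor of $a$, $w$ an $L$-ancestor of $b$, and either $u=w\in L$ (a common ancestor inside $L$) or $u\huh w$ in $H$. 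I would also record two facts: marginalization preserves ancestry, $\anc_{G_{\sm L}}(S)=\anc_G(S)\sm L=:P$, and $A\sm L\subseteq A'$, where $A:=\anc_G(S)\cup\sib_G(\anc_G(S))$ and $A':=\anc_{G_{\sm L}}(S)\cup\sib_{G_{\sm L}}(\anc_{G_{\sm L}}(S))$.

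With these in hand I would prove the two inclusions. Writing the bidirected edges of $(G_{\abe(S)})_{\sm L}$ as those coming from paths using only $G$-edges (which reproduce exactly the bidirected edges of $G_{\sm L}$) together with the ``extra'' edges coming from paths that use a newly added $A$-clique edge, the shape lemma forces any such new clique edge to be the single middle bidirected edge $u\huh w$ with $u,w\in A$; and since all non-endpoints lie in $L$, each of $u,w$ is either its own endpoint or lies in $A\cap L$. A short descent argument — if $u\in\anc_G(S)\cap L$, follow a directed $L$-path to the first non-$L$ node $c\in P$; if $u\in\sib_G(\anc_G(S))\cap L$, use its $G$-sibling in $\anc_G(S)$ and then descend — then shows both endpoints lie in $A'$, giving $\mathrm{Bidir}((G_{\abe(S)})_{\sm L})\subseteq\mathrm{Bidir}((G_{\sm L})_{\abe(S)})$. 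For the reverse inclusion it suffices to realize every clique edge on $A'$ inside $(G_{\abe(S)})_{\sm L}$: the same descent shows that each $a\in A'$ admits a node $u_a\in A\cap(\{a\}\cup L)$ with a directed $L$-path $u_a\tuh\cdots\tuh a$, and then for $a,b\in A'$ the path $a\hut\cdots\hut u_a\,[\huh u_b]\,\tuh\cdots\tuh b$ — using the clique edge $u_a\huh u_b$ in $G_{\abe(S)}$, or a common ancestor when $u_a=u_b\in L$ — is a valid bidirected-inducing path through $L$.

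The main obstacle, and the reason this is not a one-line bookkeeping check, is that the naive guess $A'=A\sm L$ is \emph{false}: marginalization itself manufactures new siblings, so the set of node pairs that $\abe$ joins after marginalizing can be strictly larger than the image of the set it joins before marginalizing (for instance $b\hut u\tuh c\tuh s$ makes $b$ a sibling of $c\in\anc(S)$ only once $u$ is projected out). The crux is therefore to show that the two different mechanisms — ``complete the clique on $A$, then let marginalization propagate the bidirected edges through $L$'' versus ``let marginalization create the enlarged sibling set $A'$, then complete the clique on $A'$'' — generate exactly the same bidirected edges. The shape lemma together with the descent-to-$P$ observation is what makes this matching work, and I expect the most delicate part to be verifying the non-collider and arrowhead conditions in the path-surgery steps, especially the degenerate cases where $u_a$ or $u_b$ coincides with an endpoint.
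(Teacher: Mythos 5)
Your proposal is correct and follows essentially the same route as the paper's proof: reduce to the bidirected edges, characterize latent-projection bidirected edges via the rigid ``bifurcation'' shape $a\hut\cdots\hut u\,[\huh]\,w\tuh\cdots\tuh b$, use $\anc_{G_{\sm L}}(S)=\anc_G(S)\sm L$ together with $\SA_G(S)\sm L\subseteq\SA_{G_{\sm L}}(S)$, and handle both inclusions by a descent along directed $L$-paths to the first non-$L$ node. Your explicit statement of the shape lemma and of the fact that $\SA_{G_{\sm L}}(S)$ can strictly exceed $\SA_G(S)\sm L$ makes the same argument slightly more transparent, but it is not a different proof.
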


\begin{proof}
    The two graphs $(G_{\abe(S)})_{\sm L}$ and $(G_{\sm L})_{\abe(S)}$ have the same set of nodes. We show that they have the same set of edges. It is easy to see that $v\tuh u$ is in $(G_{\abe(S)})_{\sm L}$ iff $v\tuh u$ is in $(G_{\sm L})_{\abe(S)}$. We show that for any $v,u\in V\sm L$, we have $v\huh u$ in $(G_{\abe(S)})_{\sm L}$ iff $v\huh u$ is in $(G_{\sm L})_{\abe(S)}$. 
    
    Suppose that $v\huh u$ is in $(G_{\abe(S)})_{\sm L}$. There are two possibilities: (i) $v\huh u$ is in $G_{\abe(S)}$ and (ii) $v\huh u$ is not in $G_{\abe(S)}$. We now consider the first case. We will show that if $v\huh u$ is in $G$, then it must be in $(G_{\sm L})_{\abe(S)}$. By the properties of graphical marginalization, we have $\anc_{G_{\sm L}}(S)=\anc_{G}(S)\sm L$ and $\sib_G(\anc_G(S))\sm L \subseteq \sib_{G_{}\sm L}(\anc_{G_{\sm L}}(S))$. This implies $\SA_G(S)\sm L\subseteq \SA_{G_{\sm L}}(S)$, where $\SA_G(S)\coloneqq \anc_{G}(S)\cup \sib_G(\anc_{G}(S))$. So, if $u,v\in \SA_G(S)$, then $v\huh u$ is in $(G_{\sm L})_{\abe(S)}$. We now consider the second case. The bidirected edge $v\huh u$ must come from some bifurcation in $G_{\abe(S)}$. If the bifurcation consists of only edges in $G$, then $v\huh u$ must also occur in $(G_{\sm L})_{\abe(S)}$. If the bifurcation contains edges not in $G$, then it must be of the form $v\hut l_1\hut \cdots \hut l_{k-1}\huh l_k\tuh \cdots \tuh l_{n-1}\tuh u$ with $l_{k-1},l_k\in \SA_G(S)$, where $l_i\in L$ for all $i$. This forces $v,u\in \SA_{G_{\sm L}}(S)$.  Therefore, $v\huh u$ is in $(G_{\sm L})_{\abe(S)}$.

    Suppose that $v\huh u$ is in $(G_{\sm L})_{\abe(S)}$. If there is 
    a bifurcation in $G$ between $v$ and $u$ with all intermediate nodes in $L$, then the same bifurcation exists in $G_{\abe(S)}$. Therefore, marginalizing out $L$ in $G_{\abe(S)}$ creates $v\huh u$ in $(G_{\abe(S)})_{\sm L}$. If $v\huh u$ comes from the ``abe'' operation on $G_{\sm L}$, then $v,u \in \SA_{G_{\sm L}}(S)$.  If $v,u \in \anc_{G_{\sm L}}(S)$, then $v,u\in \anc_{G}(S)$. Assume $v\in \sib_{G_{\sm L}}(\anc_{G_{\sm L}}(S))\sm \anc_{G_{\sm L}}(S)$. Then $v\huh \tilde{v}$ must be in $G_{\sm L}$ for some $\tilde{v}\in \anc_{G_{\sm L}}(S)$. We also have $u \huh \tilde{u}$ in $G_{\sm L}$ for some $\tilde{u}\in \anc_{G_{\sm L}}(S)$ if $u\notin \anc_{G_{\sm L}}(S)$. If these bidirected edges are present in $G$, then we are done. So we assume that $v\huh \tilde{v}$ is not present in $G$. Then there must exist bifurcations of the form $v\hut l_1\hut \cdots \hut l_{k-1}\suh l_k\tuh \cdots \tuh l_{n-1}\tuh \tilde{v}$ with $l_{i}\in L$ and  $\tilde{u}\hut \tilde{l}_1\hut \cdots \hut \tilde{l}_{j-1}\hus \tilde{l}_j\tuh \cdots \tuh \tilde{l}_{m-1}\tuh u$ with $\tilde{l}_{i}\in L$ ($m=1$ if $u\huh \tilde{u}$ is present in $G$ or $u=\tilde{u}$) in $G$. Then $l_{k-1},\tilde{l}_j \in \SA_{G}(S)$. Therefore, we have $v\hut l_1 \hut \cdots \hut l_{k-1} \huh \tilde{l}_j \tuh \cdots \tuh \tilde{l}_{m-1}\tuh u$ or $v\hut l_1 \hut \cdots \hut l_{k-1} \huh u$ in $G_{\abe(S)}$. Finally, $v \huh u $ is in $(G_{\abe(S)})_{\sm L}$. This finishes the proof.
\end{proof}

\begin{lemma}[The first step of \cref{def:cond_dmg} commutes with itself]\label{lem:abeabe}
    Let $G=(V, E^d, E^b)$ be a DMG and $S_1,S_2\subseteq V$ be two disjoint subsets. Then we have 
    \[
    (G_{\abe(S_1)})_{\abe (S_2)}=(G_{\abe(S_2)})_{\abe(S_1)}\subseteq G_{\abe(S_1\cup S_2)}.
    \]
\end{lemma}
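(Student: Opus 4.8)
The plan is to track both operations purely through their effect on bidirected edges, exploiting that the first step of \cref{def:cond_dmg} only \emph{adds} bidirected edges and never alters directed edges. Writing $\SA_G(S)\coloneqq\anc_G(S)\cup\sib_G(\anc_G(S))$ as in the proof of \cref{lem:abemarg}, the operation $G\mapsto G_{\abe(S)}$ fixes $E^d$ and adjoins exactly the complete bidirected graph on the node set $\SA_G(S)$. The first fact I would record is therefore that all three graphs in the statement share the directed edge set $E^d$, and consequently $\anc_{G_{\abe(S)}}=\anc_G$ on every argument; in particular ancestral sets never change along the way. This reduces the lemma to a statement about node sets and the bidirected cliques they induce. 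Set $K_1\coloneqq\SA_G(S_1)$ and $K_2\coloneqq\SA_G(S_2)$.

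The crux is to compute $\SA_{G_{\abe(S_1)}}(S_2)$, i.e.\ to see how the sibling set of $\anc_G(S_2)$ grows once the clique on $K_1$ has been adjoined. Since ancestors are unchanged, $\anc_{G_{\abe(S_1)}}(S_2)=\anc_G(S_2)$, and the only new bidirected edges available lie inside $K_1$. Hence a node becomes a new sibling of $\anc_G(S_2)$ precisely when it lies in $K_1$ and $K_1$ meets $\anc_G(S_2)$. This produces a dichotomy governed by the \emph{overlap condition} $\anc_G(S_2)\cap K_1\neq\emptyset$: if it fails then $\SA_{G_{\abe(S_1)}}(S_2)=K_2$, whereas if it holds the whole of $K_1$ is swept into the sibling set, giving $\SA_{G_{\abe(S_1)}}(S_2)=K_1\cup K_2$. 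The step I expect to need the most care is verifying that this overlap condition is \emph{symmetric} in $S_1$ and $S_2$: unfolding $\anc_G(S_2)\cap K_1\neq\emptyset$ shows it is equivalent to ``$\anc_G(S_1)\cap\anc_G(S_2)\neq\emptyset$, or there is a bidirected edge of $G$ joining a node of $\anc_G(S_1)$ to a node of $\anc_G(S_2)$'', which manifestly coincides with $\anc_G(S_1)\cap K_2\neq\emptyset$ because bidirected edges are symmetric.

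With the symmetry in hand the commutation follows by bookkeeping. In the non-overlap case both $(G_{\abe(S_1)})_{\abe(S_2)}$ and $(G_{\abe(S_2)})_{\abe(S_1)}$ carry $E^b$ together with the cliques on $K_1$ and on $K_2$; in the overlap case both acquire $E^b$ together with the single clique on $K_1\cup K_2$ (which already contains the clique on $K_1$, \resp $K_2$). Either way the two iterated graphs agree, giving the equality. For the inclusion in $G_{\abe(S_1\cup S_2)}$ I would use that $\anc$ and $\sib$ distribute over unions, so $\SA_G(S_1\cup S_2)=\SA_G(S_1)\cup\SA_G(S_2)=K_1\cup K_2$; thus $G_{\abe(S_1\cup S_2)}$ carries $E^b$ plus the full clique on $K_1\cup K_2$. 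This dominates the non-overlap edge set $E^b$ together with the cliques on $K_1$ and $K_2$, and equals the overlap edge set, so $(G_{\abe(S_1)})_{\abe(S_2)}\subseteq G_{\abe(S_1\cup S_2)}$ in both cases, completing the plan.
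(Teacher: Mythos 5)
Your proposal is correct and follows essentially the same route as the paper's proof: both reduce the claim to tracking the bidirected cliques added on $\SA_G(S_1)$ and $\SA_G(S_2)$, perform a case split on whether these regions interact (your overlap condition $\anc_G(S_2)\cap\SA_G(S_1)\neq\emptyset$ is equivalent to the paper's condition $D\neq\emptyset$), and observe that this condition is symmetric in $S_1$ and $S_2$. The only cosmetic differences are that you verify the symmetry by unfolding definitions rather than via a chosen element of $D$, and you prove the inclusion into $G_{\abe(S_1\cup S_2)}$ by directly computing $\SA_G(S_1\cup S_2)=\SA_G(S_1)\cup\SA_G(S_2)$ instead of the paper's monotonicity argument.
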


\begin{proof}
To simplify the notation, we write $G_{ij}\coloneqq (G_{\abe(S_i)})_{\abe(S_j)}$. Write $D\coloneqq (\SA_G(S_1)\cap \SA_G(S_2))\cap (\anc_{G}(S_1)\cup \anc_{G}(S_2))$. There are two cases: (i) $D=\emptyset$, and (ii) $D\ne \emptyset$. In the first case, $\SA_G(S_1)\cap \SA_G(S_2)=\emptyset$ or $\SA_G(S_1)\cap \SA_G(S_2)$ does not contain ancestors of $S_1\cup S_2$ in $G$. Then it is not hard to see that adding bidirected edges to nodes in $\SA_G(S_1)$ and adding bidirect edges to nodes in $\SA_G(S_2)$ are independent. Therefore, we have $a\huh b$ for every $a,b\in \SA_G(S_1)$ and $c\huh d$ for every $c,d \in\SA_G(S_2)$ in both $G_{12}$ and $G_{21}$, and there are no other bidirected edges added to $G$. Hence, $G_{12}=G_{21}$ in this case. We now consider the second case. Pick an arbitrary node $a\in D$. By the definition of the set $D$, it holds $a\in \anc_G(S_1)\cup \anc_G(S_2)$. If $a\in \anc_G(S_1)\cap \anc_G(S_2)$, then we have $b\huh c$ for all $b,c\in \SA_G(S_1)\cup \SA_G(S_2)$ in both $G_{12}$ and $G_{21}$, and there are no other bidirected edges added to $G$. In the following, WLOG, we assume $a\in \anc_G(S_1)$ but $a\notin \anc_{G}(S_2)$ by the symmetry of $\anc_G(S_1)$ and $\anc_G(S_2)$. Note that we have $a\huh \tilde{a}$ for some $\tilde{a}\in \anc_{G}(S_2)$. Then we have $\tilde{a}\huh d$ for all $d\in \SA_G(S_1)$ in $G_1$. This implies $\SA_G(S_1)\subseteq \SA_{G_1}(S_2)$ and hence $\SA_{G_1}(S_2)=\SA_{G}(S_1)\cup \SA_G(S_2)$. Overall, in $G_{12}$ we have bidirected edges between any two nodes in $\SA_{G}(S_1)\cup \SA_{G}(S_2)$ added to $G$ and no other bidirected edges are added. By the symmetry of $S_1$ and $S_2$, we have the same argument and conclusion hold for $G_{21}$. This concludes that $G_{12}=G_{21}$.

We now show the second claim. For that, observe that $G_{\abe(S_1)}\subseteq G_{\abe(S_1\cup S_2)}$ and $G_{\abe(S_1\cup S_2)}=(G_{\abe(S_1\cup S_2)})_{\abe(S_2)}$.
\end{proof}

\begin{lemma}\label{lem:consistency}
Let $M=\SCM$ be a simple SCM. Let $g: \mathcal{X}_W \rightarrow \mathcal{X}_V$ be the (essentially unique) solution function of $M$. Let $A\subseteq V $ and write $B\coloneqq V\sm A$. Let $g^{B}: \mathcal{X}_A \times \mathcal{X}_W \rightarrow \mathcal{X}_B$ be the solution function of $M$ \wrt $B$. Then
$$
g_B\left(x_W\right)=g^{B}\left(g_A\left(x_W\right), x_W\right)
$$
for $\Prb(X_W)$-a.a.\ $x_W \in \mathcal{X}_W$.
\end{lemma}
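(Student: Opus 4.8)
The plan is to reduce the claim to the essential uniqueness of solution functions \wrt $B$ that is guaranteed by simplicity (Definition~\ref{def:simple_SCM}, Definition~\ref{def:scm_unique_solvability}). Writing $g=(g_A,g_B)$ for the components of the solution function \wrt $V$, the defining identity $g(x_W)=f(g(x_W),x_W)$ gives in particular that, for $\Prb(X_W)$-\almostall $x_W$,
\[
  g_B(x_W)=f_B(g_A(x_W),g_B(x_W),x_W),
\]
so $g_B(x_W)$ is a fixed point of $y\mapsto f_B(g_A(x_W),y,x_W)$. By the defining property of $g^B$ (Definition~\ref{def:scm_solution_functions}), evaluated at $x_A=g_A(x_W)$, the value $g^B(g_A(x_W),x_W)$ is another fixed point of the \emph{same} map. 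The naive hope is to conclude equality by uniqueness of this fixed point; however, simplicity only provides essential uniqueness of the solution \emph{function} (equality for \almostall $x_W$ and \emph{all} $x_A$), not uniqueness of the fixed point at an individual pair $(x_A,x_W)$, so a little more care is needed.

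To leverage essential uniqueness I would construct a second solution function \wrt $B$ that is forced to agree with $g_B$ on the ``diagonal''. Concretely, set $D\coloneqq\{(x_A,x_W)\in\Xc_A\times\Xc_W: x_A=g_A(x_W)\}$ and define
\[
  \tilde{g}^B(x_A,x_W)\coloneqq \Ibbm_{D^c}(x_A,x_W)\,g^B(x_A,x_W)+\Ibbm_{D}(x_A,x_W)\,g_B(x_W).
\]
Here $D$ is measurable, being the preimage of the (measurable) diagonal of the standard measurable space $\Xc_A\times\Xc_A$ under the measurable map $(x_A,x_W)\mapsto(x_A,g_A(x_W))$, so $\tilde{g}^B$ is measurable. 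The next step is to verify that $\tilde{g}^B$ is itself a solution function \wrt $B$: off the diagonal this is immediate from $g^B$ being a solution function, while on the diagonal $x_A=g_A(x_W)$ the required identity $g_B(x_W)=f_B(g_A(x_W),g_B(x_W),x_W)$ is exactly the $B$-component of the equation solved by $g$. Taking the union of the two exceptional $\Prb(X_W)$-null sets then shows that the structural equations for $B$ hold for \almostall $x_W$ and all $x_A$.

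Finally, since $M$ is simple it is uniquely solvable \wrt $B$, so $g^B$ and $\tilde{g}^B$ coincide for $\Prb(X_W)$-\almostall $x_W$ and \emph{all} $x_A$. Specializing this equality to $x_A=g_A(x_W)$ yields $g^B(g_A(x_W),x_W)=\tilde{g}^B(g_A(x_W),x_W)=g_B(x_W)$ for \almostall $x_W$, which is the claim. The one point to handle carefully — and the main obstacle — is the interplay of the quantifiers ``\almostall $x_W$, for all $x_A$'' in the definition of essential uniqueness: it is precisely the ``for all $x_A$'' clause that permits specializing to the (measure-zero-in-$x_A$) diagonal $x_A=g_A(x_W)$, which is why patching $g^B$ on $D$ is the right move rather than attempting a direct pointwise fixed-point argument.
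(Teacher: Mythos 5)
Your proof is correct, and it takes a slightly different --- and in fact more careful --- route than the paper's. The paper's proof works entirely with ``iff'' characterizations of solution sets: it asserts that for $\Prb(X_W)$-\almostall $x_W$ and all $x_V$ one has $x_B=f_B(x_V,x_W)$ if and only if $x_B=g^{B}(x_A,x_W)$ (and the analogous equivalence for $g$ \wrt $V$), and then chains these equivalences to read off $g_B(x_W)=g^{B}(g_A(x_W),x_W)$ from unique solvability \wrt $V$. The forward direction of those equivalences is exactly the a.e.\ pointwise uniqueness of fixed points that you correctly flag as not being literally what Definition~\ref{def:scm_unique_solvability} states; the paper is implicitly using the equivalent ``solution-set'' formulation of unique solvability from \citet{bongers2021foundations}. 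Your patching construction --- modifying $g^{B}$ on the measurable diagonal $D$ to agree with $g_B$ and then invoking essential uniqueness of solution functions \wrt $B$ --- supplies precisely that missing bridge and is self-contained under the definitions as stated here. The individual steps all check out: $D$ is measurable since the diagonal of a standard measurable space is, $\tilde{g}^{B}$ satisfies the structural equations for $B$ outside the union of the two exceptional null sets, and the final specialization to $x_A=g_A(x_W)$ is legitimate exactly because essential uniqueness quantifies over \emph{all} $x_A$. What the paper's route buys is brevity; what yours buys is that it never needs the fixed-point-uniqueness form of solvability.
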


\begin{proof}
For $\Prb(X_W)$-a.a.\ $x_W \in \mathcal{X}_W$ and all $x_V\in \Xc_V$, we have
$$
\left\{\begin{array} { l }
{ x _ { A } = f _ { A } ( x_V,x_W )  } \\
{ x _ { B } = f _ { B } ( x_V,x_W) }
\end{array} \Longleftrightarrow \left\{\begin{array}{ll}
x_A=g_A\left(x_W\right) \\
x_B=g_B\left(x_W\right) .
\end{array}\right.\right.
$$
Since $g^B$ is the essentially unique solution function of $M$ \wrt $B$, for $\Prb(X_W)$-a.a.\ $x_W \in \mathcal{X}_W$ and all $x_V\in \Xc_V$ we have
$$
x_B=f_B(x_V,x_W) \Longleftrightarrow x_B=g^{B}\left(x_A, x_W\right).
$$
Hence, for $\Prb(X_W)$-a.a.\ $x_W \in \mathcal{X}_W$ and all $x_V\in \Xc_V$ we have
\[
\left\{\begin{array} { l }
{ x _ { A } = f _ { A } ( x_V,x_W  ) } \\
{ x _ { B } = f _ { B } ( x_V,x_W  ) }
\end{array} \Longleftrightarrow \left\{\begin{array} { l l }
{ x _ { A } } { = g _ { A } (  x _ { W } )  } \\
{ x _ { B } = g ^ { B  } ( x _ { A } ,  x _ { W } ) }
\end{array} \Longleftrightarrow \left\{\begin{array}{ll}
x_A =g_A\left(x_W\right) \\
x_B =g^{B}\left(g_A\left(x_W\right), x_W\right) .
\end{array}\right.\right.\right.
\]
Since $M$ is uniquely solvable, we can conclude for $\Prb(X_W)$-a.a.\ $x_W \in \mathcal{X}_W$
$$
g_B\left(x_W\right)=g^{B}\left(g_A\left(x_W\right), x_W\right).
$$
\end{proof}

\section{Discussions on conditioning operation for SCMs}\label{app:discuss_cdSCM}

\subsection{SCMs are not flexible enough for representing s-SCMs}\label{app:impossible}

In this subsection, we show that, in general, it is impossible to find a simple SCM encoding all the causal semantics of a simple s-SCM. This gives an answer to Question \ref{Q1}: the causal semantics of the ancestors of selection nodes cannot be preserved in general under the framework of simple SCMs and \cref{thm:causal_semantics} shows that the causal semantics of the non-ancestors of the selection nodes can always be preserved when applying the conditioning operation.

We can see the rung-one and rung-two semantics of an SCM $M=\SCM$ as a collection of distributions $\Cc^V=\left\{\Prb^{[x_T]}(X_{V\sm T}):T\subseteq V, x_T\in \Xc_T\right\}$, which satisfies some constraints given by the SCM $M$, i.e.,  $\Prb^{[x_T]}(X_{V\sm T})=\Prb_M(X_{V\sm T}\mid \Do(X_T=x_T))$ for all $T\subseteq V$ and $x_T\in \Xc_T$.

Let
\[
\Cc^{V\sm S}\coloneqq\left(\Prb^{[x_T]}(X_{V\sm (S\cup T)}):T\subseteq V\sm S, x_T\in \Xc_T, \Prb_M(X_S\in \Si\mid \Do(X_T=x_T))>0\right)
\]
where
\[
\Prb^{[x_T]}(X_{V\sm (S\cup T)})\coloneqq\frac{\Prb_M(X_{V\sm (T\cup S)},X_{S}\in \Si\mid \Do(X_T=x_T))}{\Prb_M(X_{S}\in \Si\mid \Do(X_T=x_T))}
\]
for some simple s-SCM $(M,X_S\in \Si)$ with $M=\SCM$. Note that $\Cc^{V \sm S}$ is just the collection of observational and interventional distributions induced by the simple s-SCM $(M,X_S\in \Si)$. Now Question \ref{Q1} can be rephrased as: given $\Cc^{V\sm S}$ defined above, can we \emph{always} find a simple SCM $\tilde{M}$ with endogenous nodes $O\coloneqq V\sm S$ such that
\begin{equation}\label{eq:Q1}
   \forall T\subseteq O, x_T\in \Xc_T \quad \Prb^{[x_T]}(X_{O\sm T})= \Prb_{\tilde{M}}(X_{O}\mid \Do(X_T=x_T)) ?
\end{equation}
The answer is \emph{No}:\footnote{This is also related to \citet{Lauritzen98generating}, who shows that any hierarchical model can be generated from graphical models represented by a DAG with selection.}
\begin{proposition}\label{prop:impossible}
    There exists a simple s-SCM $(M,X_S\in \Si)$ such that it is impossible to find a simple SCM $\tilde{M}$ with endogenous nodes $O\coloneqq V\sm S$ such that \cref{eq:Q1} holds.
\end{proposition}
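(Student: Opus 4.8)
The plan is to produce one explicit simple (indeed acyclic) s-SCM whose induced family of observational and interventional distributions $\Cc^{V\sm S}$ already fails to be realizable by any simple SCM on $O=V\sm S$. The single structural obstruction I will exploit is the consistency bound
\[
\Prb_{\tilde M}(Y=1\mid\Do(X=1))\ \ge\ \Prb_{\tilde M}(X=1,Y=1),
\]
valid for every simple SCM $\tilde M$ on $\{X,Y\}$. It follows from Lemma~\ref{lem:consistency}, which gives $g_Y=g^{\{Y\}}(g_X(\cdot),\cdot)$ and hence $Y=Y(1)$ on the event $\{X=1\}$, so that $\{X=1,Y=1\}\subseteq\{Y(1)=1\}$ and therefore $\Prb_{\tilde M}(Y=1\mid\Do(X=1))=\Prb_{\tilde M}(Y(1)=1)\ge\Prb_{\tilde M}(X=1,Y=1)$. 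Crucially, this holds for general simple (possibly cyclic) $\tilde M$, since Lemma~\ref{lem:consistency} is stated at that level of generality.

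Concretely I would take $V=\{X,Y,S\}$ with $S$ the binary selection node, independent exogenous variables $U\sim\mathrm{Ber}(1/2)$ and $E\sim\mathrm{Ber}(\epsilon)$ with $0<\epsilon<1/2$, structural equations $X=U$, $Y=U\oplus E$, $S=X$, and selection set $\Si=\{1\}\subseteq\Xc_S$. This $M$ is acyclic, hence simple; $\Prb_M(X_S\in\Si)=\Prb(U=1)=1/2>0$, so Assumption~\ref{ass} holds; and $X\in\anc_{G(M)}(S)$, placing us exactly in the problematic regime of intervening on an ancestor of the selection node. After marginalizing $U$, the graph $G(M)$ is $X\huh Y$ together with $X\to S$, so $X$ has no directed effect on $Y$.

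Next I would read off the two members of $\Cc^{V\sm S}$ that clash. For $T=\emptyset$, conditioning on $S=1$ is conditioning on $U=1$, so $\Prb^{[\emptyset]}(X=1,Y=1)=\Prb(E=0)=1-\epsilon$. For $T=\{X\}$ and $x_T=1$, the mechanism $S=X$ makes $\Do(X=1)$ force $S=1$ almost surely, so the selection becomes vacuous, while $Y=U\oplus E$ is still driven by the free variable $U\sim\mathrm{Ber}(1/2)$; hence $\Prb^{[X=1]}(Y=1)=\Prb_M(Y=1\mid\Do(X=1))=1/2$. Both entries are well defined, since $\Prb_M(X_S\in\Si)=1/2>0$ and $\Prb_M(X_S\in\Si\mid\Do(X=1))=1>0$. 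Informally: selection together with the confounder $U$ pins the selected subpopulation to $X=Y=1$, whereas the intervention severs the confounding and returns $Y$ to its uniform marginal.

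Finally I would argue by contradiction. If some simple SCM $\tilde M$ on $O=\{X,Y\}$ satisfied \cref{eq:Q1}, then matching the $T=\emptyset$ and $T=\{X\},\,x_T=1$ equations would give $\Prb_{\tilde M}(X=1,Y=1)=1-\epsilon$ and $\Prb_{\tilde M}(Y=1\mid\Do(X=1))=1/2$; combined with the consistency bound above, this forces $1/2\ge 1-\epsilon$, i.e.\ $\epsilon\ge 1/2$, contradicting $\epsilon<1/2$. Hence no such $\tilde M$ exists. The only step needing genuine care is the consistency bound for arbitrary simple $\tilde M$, which I would isolate as a one-line lemma resting on Lemma~\ref{lem:consistency}; everything else is immediate computation. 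The parameter $\epsilon$ is cosmetic — it merely keeps $X$ and $Y$ from being identical — and letting $\epsilon\to 0$ makes the violation most extreme ($1/2\ge 1$).
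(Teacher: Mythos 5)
Your proposal is correct and rests on exactly the same key ingredient as the paper's proof: the consistency bound $\Prb_{\tilde M}(X=x,Y)\le\Prb_{\tilde M}(Y\mid\Do(X=x))$, derived from Lemma~\ref{lem:consistency} and valid for every simple (possibly cyclic) SCM, violated by the selected distributions of an explicit acyclic s-SCM. The only differences are cosmetic: you use a confounder-plus-selection example ($X=U$, $Y=U\oplus E$, $S=X$) in place of the paper's chain $B\tuh A\tuh S$ with hand-picked probabilities, and you omit the paper's preliminary (and logically unnecessary for the stated proposition) argument that no \emph{acyclic} SCM suffices.
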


\begin{proof}
    We first show that in general there is no acyclic SCM $\tilde{M}$ such that equation~\eqref{eq:Q1} holds and second show that in general there is no simple (even cyclic) SCM satisfying equation~\eqref{eq:Q1}. This can be summarized by Figure~\ref{fig:venn_diagram_scms}.

\begin{figure}[t]\centering
  \begin{tikzpicture}[scale=0.8, transform shape]
      \begin{scope}
        \draw[draw,fill,opacity=0.2] (0,0.3) ellipse (2cm and 1cm);
        \draw[draw,fill,opacity=0.2] (0,0.6) ellipse (3cm and 1.5cm);
        \draw[draw,fill,opacity=0.2] (0,0.9) ellipse (4cm and 2cm);
        \node at (0,0.3) {acyclic SCMs};
        \node at (0,1.6) {simple SCMs};
        \node at (0,2.4) {simple s-SCMs};
      \end{scope}
  \end{tikzpicture}
  \caption{Venn diagram for different causal modeling classes with rung-two information.\label{fig:venn_diagram_scms}}
\end{figure}

\paragraph{No acyclic SCM $\tilde{M}$ satisfying equation~\eqref{eq:Q1}.} Let $M$ be an acyclic SCM with $X_S=X_A+X_B$, $X_A=E_A\sim \Uni[0,1]$, $X_B=E_B\sim \Uni[0,1]$, and selection mechanism $X_S\geq 0.8$, whose graph is shown in Figure~\ref{fig:disdef}.

\begin{figure}[ht]
\centering
\begin{tikzpicture}[scale=0.8, transform shape]
  \begin{scope}[xshift=0]
    \node[ndout] (X) at (1,3) {$A$};
    \node[ndout] (Y) at (3,3) {$B$};
    \node[ndsel] (S) at (2,2) {$S$};
    \draw[arout] (X) to (S);
    \draw[arout] (Y) to (S);
    \node at (2,1) {$G(M)$};
  \end{scope}

\end{tikzpicture}
\caption{The causal graphs of the SCM $M$ with selection $X_S\geq 0.8$.}
\label{fig:disdef}
\end{figure}

We shall explain the non-existence of an acyclic SCM $\tilde{M}$ satisfying equation~\eqref{eq:Q1} by contradiction. Assume that there exists an acyclic SCM $\tilde{M}$ satisfying equation~\eqref{eq:Q1}. For $x_A\in [0,1]$ and $x_B\in [0,1]$, we require $\Prb_{\tilde{M}}(X_B\mid \Do(X_A=x_A))=\Prb_M(X_B\mid \Do(X_A=x_A),X_S\geq 0.8)$ and $\Prb_{\tilde{M}}(X_A\mid \Do(X_B=x_B))=\Prb_M(X_A\mid \Do(X_B=x_B),X_S\geq 0.8)$. Since $\Prb_M(X_B\mid \Do(X_A=x_A),X_S\geq 0.8)$ and $\Prb_M(X_A\mid \Do(X_B=x_B),X_S\geq 0.8)$ are not constant in $x_A$ and $x_B$ respectively, we must have a directed edge from $A$ to $B$ and a directed edge from $B$ to $A$ in the causal graph $G(\tilde{M})$. Hence, $\tilde{M}$ cannot be an acyclic SCM.

\paragraph{No simple (even cyclic) SCM $\tilde{M}$ satisfying equation~\eqref{eq:Q1}.} Let $M$ be an SCM that satisfies
\[
\begin{aligned}
    \Prb_M(X_B)&=0.9\delta_0+0.1\delta_1,\\
    \Prb_M(X_A\mid \Do(X_B=0))&=0.6\delta_0+0.4\delta_1,\
    \Prb_M(X_A\mid \Do(X_B=1))=0.1\delta_0+0.9\delta_1,\\
    \Prb_M(X_S\mid \Do(X_A=0))&=0.9\delta_0+0.1\delta_1,\
    \Prb_M(X_S\mid \Do(X_A=1))=0.1\delta_0+0.9\delta_1,
\end{aligned}
\]
selection mechanism $X_S=1$, and graph $G(M)$ shown in Figure \ref{fig:impossible2}.

\begin{figure}[ht]
\centering
\begin{tikzpicture}[scale=0.8, transform shape]
  \begin{scope}[xshift=0]
    \node[ndout] (X) at (1.5,3) {$A$};
    \node[ndout] (Y) at (0,3) {$B$};
    \node[ndsel] (S) at (3,3) {$S$};
    \draw[arout] (Y) to (X);
    \draw[arout] (X) to (S);
    \node at (1.5,2) {$G(M)$};
  \end{scope}
\end{tikzpicture}
\caption{The causal graph of the SCM $M$ with selection $X_S=1$.}
\label{fig:impossible2}
\end{figure}

\citet{manski98bounding} proved a ``natural'' bound and we point out here that it also holds for simple SCMs. More specifically, for an arbitrary simple SCM $\tilde{M}$ with discrete endogenous variables $\{X_A,X_B\}$, it must hold that $\Prb_{\tilde{M}}(X_A=x_A,X_B)\leq \Prb_{\tilde{M}}(X_B\mid \Do(X_A=x_A))$ for all $x_A\in \Xc_A$.\footnote{This inequality is interpreted as $\Prb_{\tilde{M}}(X_A=x_A,X_B=x_B)\leq \Prb_{\tilde{M}}(X_B=x_B\mid \Do(X_A=x_A))$ for all $x_B\in \Xc_B$.} Indeed, by the consistency $X_B=X_B(X_A)$ a.s.\ (see also \citet[Proposition~7.5.1]{forre25causality}) and elementary probability theory, we have
\[
\begin{aligned}
    \Prb_{\tilde{M}}(X_A=x_A,X_B)&=\Prb_{\tilde{M}}(X_A=x_A,X_B(x_A))\\
    &\leq \Prb_{\tilde{M}}(X_B(x_A))=\Prb_{\tilde{M}}(X_B\mid \Do(X_A=x_A)).
\end{aligned}
\]

Assume that $\tilde{M}$ is a simple SCM satisfying equation~\eqref{eq:Q1}. Then by requiring 
\[
\Prb_{\tilde{M}}(X_A=1,X_B=1)=\Prb_{M}(X_A=1,X_B=1\mid X_S=1) 
\]
and 
\[
\Prb_{\tilde{M}}(X_B=1\mid \Do(X_A=1))=\Prb_{M}(X_B=1\mid \Do(X_A=1),X_S=1),
\]
we have
\[
\Prb_{\tilde{M}}(X_A=1,X_B=1)\approx 0.176> 0.1=\Prb_{\tilde{M}}(X_B=1\mid \Do(X_A=1)),
\]
which contradicts the natural bound $\Prb_{\tilde{M}}(X_A=1,X_B=1)\leq \Prb_{\tilde{M}}(X_B=1\mid \Do(X_A=1))$ that the simple SCM $\tilde{M}$ must satisfy.
\end{proof}

\begin{remark}[Interventions on ancestors of selection nodes]
  Let $T\subseteq \anc_{G(M)}(S)$.
  The above theorem tells us that in general $\Prb_{M_{|X_S\in \Si}}(X_{O\sm T}\mid \Do(X_T=x_T))\ne \Prb_M(X_{O\sm T}\mid \Do(X_T=x_T),X_S\in \Si)$, since $\Prb_{M_{|X_S\in \Si}}(X_{O\sm T}\mid \Do(X_T=x_T))=\Prb_M(X_{O\sm T}(x_T)\mid X_S\in \Si)$ and $\Prb_M(X_{O\sm T}(x_T)\mid X_S\in \Si)\ne \Prb_M(X_{O\sm T}\mid \Do(X_T=x_T),X_S\in \Si)$ in general. However, in some cases, we can infer that $\Prb_{M_{|X_S\in \Si}}(X_{O\sm T}\mid \Do(X_T=x_T))= \Prb_M(X_{O\sm T}\mid \Do(X_T=x_T),X_S\in \Si)$. For example, if $\Prb_M(X_S\in \Si\mid \Do(X_T=x_T))=\Prb_M(X_S\in \Si)$ then we can conclude that $\Prb_{M_{|X_S\in \Si}}(X_{O\sm T}\mid \Do(X_T=x_T))= \Prb_M(X_{O\sm T}\mid \Do(X_T=x_T),X_S\in \Si)$. As another example, if we know that the second rule or the third rule of do-calculus applies to $\Prb_{M_{|X_S\in\Si}}(X_{O\sm T}\mid \Do(X_T=x_T),X_C)$ and $\Prb_M(X_{O\sm T},X_S\in \Si\mid \Do(X_T=x_T),X_C)$ to reduce $\Do(X_T)$ to given $X_T$ or eliminate $\Do(X_T)$ entirely, then we have the equality $\Prb_{M_{|X_S\in \Si}}(X_{O\sm T}\mid \Do(X_T=x_T))= \Prb_M(X_{O\sm T}\mid \Do(X_T=x_T),X_S\in \Si)$ under the assumption of discreteness and positivity. Consider a concrete toy example where we have an SCM with causal graph $S\hut T\tuh O$. Under discreteness and positivity assumption, we have
  \[
    \begin{aligned}
      \Prb_{M_{|X_S\in \Si}}(X_O\mid \Do(X_T=x_T))&=\Prb_{M_{|X_S\in \Si}}(X_O\mid X_T=x_T)\\
      &=\Prb_{M}(X_O\mid X_T=x_T,X_S\in \Si)\\
      &=\frac{\Prb_{M}(X_O,X_S\in \Si\mid X_T=x_T)}{\Prb_{M}(X_S\in\Si\mid X_T=x_T)}\\
      &=\Prb_M(X_O \mid \Do(X_T=x_T),X_S\in \Si).
    \end{aligned}
  \]
  It might be possible to find some interesting conditions to guarantee this, and therefore in the given setting one does not need to treat the ancestors of selection nodes differently than the non-ancestors. 

\end{remark}

\subsection{Other variants of conditioning operation for SCMs}\label{app:variants_cond}

\cref{def:cdSCM} is not the only possible way to define a ``conditioned SCM''. In this section, we explore some other possibilities of conditioning operations such as different decompositions of exogenous nodes, and conditioning for Causal Bayesian Networks.\footnote{Conditioning on a variable that we want to observe is discussed in \cref{sec:3.2}.}


\subsubsection{Different decomposition of exogenous nodes}

Why do we care about decomposition of exogenous variables and make sure that the new coarse-grained variables are mutually independent given selection? That is because we want to have a Markov property of the causal graphs of our SCMs (so that we can apply do-calculus, and so on) and without mutual independence of the exogenous variables this may fail. 

For example, in some literature (such as \citet{bareinboim22pearl_hierarchy}), SCMs are not required to have mutually independent exogenous random variables but can have any exogenous probability distribution $\Prb(X_W)$ on $X_W$. A bidirected edge is drawn between two endogenous variables $X_{v_1}$ and $X_{v_2}$ if they share the same exogenous variables or their exogenous parents are correlated according to $\Prb(X_W)$. It seems that if we adopt this framework, then we just need to update the exogenous distribution and not to merge exogenous variables when defining a conditioning operation. Consider the ``SCM''
\[
    M: \left\{
    \begin{aligned}
        \Prb(E_1,E_2,E_3)&=\frac{1}{4}\delta_{000}+\frac{1}{4}\delta_{011}+\frac{1}{4}\delta_{101}+\frac{1}{4}\delta_{110}\\
        X_1 & = E_1  \\
        X_2 & = E_2  \\
        X_3 & = E_3.
    \end{aligned}
    \right.
\]
\begin{figure}[ht]
\centering
\begin{tikzpicture}[scale=0.8, transform shape]
  \begin{scope}[xshift=0]
    \node[ndexo] (E1) at (0,4.5) {$E_1$};
    \node[ndexo] (E2) at (1.5,4.5) {$E_2$};
    \node[ndexo] (E3) at (3,4.5) {$E_3$};
    \node[ndout] (X1) at (0,3) {$X_1$};
    \node[ndout] (X2) at (1.5,3) {$X_2$};
    \node[ndout] (X3) at (3,3) {$X_3$};
    \draw[arout] (E1) to (X1);
    \draw[arout] (E2) to (X2);
    \draw[arout] (E3) to (X3);
    \node at (1.5,2) {$G^a(M)$};
  \end{scope}
  \begin{scope}[xshift=6cm]
    \node[ndout] (X1) at (0,3) {$X_1$};
    \node[ndout] (X2) at (1.5,3) {$X_2$};
    \node[ndout] (X3) at (3,3) {$X_3$};
    \node at (1.5,2) {$G(M)$};
  \end{scope}
\end{tikzpicture}
\caption{The ``causal'' graph of the ``SCM'' $M$.}
\label{fig:wrongscm}
\end{figure}
According to the above definition, this ``SCM'' $M$ would have graph $G(M)$ shown in Figure~\ref{fig:wrongscm}. The graph is of this form because $E_i\ind E_j$ for $i,j=1,2,3$ and $i\ne j$. However, although the graph implies that $X_1,X_2\underset{G(M)}{\stackrel{d}{\perp}} X_3$, we know from the model $M$ that \[X_1,X_2\underset{\Prb_M(X_1,X_2,X_3)}{\notind} X_3.\] Therefore, the Markov property does not hold, and all the results based on it may not hold either, such as the back-door criterion and Pearl's do-calculus.

It is worth mentioning that one can have different decompositions of the exogenous nodes of the conditioned SCMs. One extreme is to merge all the exogenous nodes into one single node, which results in the ``coarsest'' model without much information. The other extreme is to consider the ``finest'' decomposition of the exogenous nodes under the current framework of SCMs, which results in ``most fine-grained'' models with as much information as possible.

In Definition~\ref{def:cdSCM}, we used a ``finest'' decomposition. We can consider any decomposition of label set $W$ given $X_S\in \Si$ that is coarser than the one given in Definition~\ref{def:cdSCM}. There are two such examples that appear natural, but one should note that the properties of these two operations are slightly different from the ones of Definition~\ref{def:cdSCM}.

\begin{definition}[Conditioned SCMs II]\label{def:cdSCM2}
    In the setting of Definition~\ref{def:cdSCM}, we define $M_{|X_S\in\Si}=(\hat{V},\bar{W},\hat{\Xc},\hat{f},\hat{\Prb})$ where $\hat{V},\hat{\Xc},\hat{f},\hat{\Prb}$ are the same as Definition~\ref{def:cdSCM}, while
    \[
    \bar{W}\coloneqq \{W\cap \anc_{G^a(M_{\sm (V\sm S)})}(S)\}\dcup (W\sm \anc_{G^a(M_{\sm (V\sm S)})}(S)).
    \]
\end{definition}

\begin{definition}[Conditioned SCMs III]\label{def:cdSCM3}
    In the setting of Definition~\ref{def:cdSCM}, we define $M_{|X_S\in\Si}=(\hat{V},\bar{\bar{W}},\hat{\Xc},\hat{f},\hat{\Prb})$ where $\hat{V},\hat{\Xc},\hat{f},\hat{\Prb}$ are the same as Definition~\ref{def:cdSCM}, while
    \[\bar{\bar{W}}\coloneqq \{W\cap \anc_{G^a(M)}(S)\}\dcup (W\sm \anc_{G^a(M)}(S)).\]
\end{definition}

 For \cref{def:cdSCM3}, we do not have Proposition~\ref{prop:cond_marg}, but we can obtain the commutativity of marginalization and conditioning up to counterfactual equivalence (see an old version of the current article \citep{Chen24Modeling}).

\subsubsection{conditioning operation for causal Bayesian networks}

Given a Causal Bayesian Network $N=(G=(V,E^d),\{\Prb(X_v\mid \Do(X_{\pa_G(v)}))\}_{v\in V})$, we can
use the deterministic representation of Markov kernels to construct an SCM $M_N=\SCM$ (i.e., there exist a uniformly distributed random variable $U_v$ and a measurable function $f_v$ such that $\Prb(X_v\mid \Do(X_{\pa_G(v)}))=f_v(X_{\pa_G(v)},U_v)_*\Prb(U_v)$, see, e.g., \citet[Proposition 10.7.6]{bogachev2007measure}). We can then apply the conditioning operation for $M_N$ and then transform the conditioned SCM back to get a conditioned Causal Bayesian Network with latent variables.





\subsection{Conditioning operation for SCMs with inputs}\label{sec:cond_iSCM}

In this section, we extend the definition of the conditioning operation (Definition~\ref{def:cdSCM}) on SCMs to SCMs with input nodes, which we call \textbf{iSCMs} and are introduced in \citet{forre25causality}. The difference between iSCMs and SCMs is that iSCMs have exogenous (non-stochastic) input variables in addition to endogenous and exogenous random variables. Note that such an extension of conditioning operation is not straightforward due to interactions between non-stochastic variables and stochastic variables (\cf Remark~\ref{rem:interaction} and Definition~\ref{def:cdiSCM}).

\begin{definition}[SCMs with input nodes (iSCMs)]
    A \textbf{Structural Causal Model with input nodes (iSCM)} is a tuple $M=(J,V,W,\Xc,\Prb,f)$ where $J$ represents the label set for \textbf{exogenous input (non-stochastic) variables}. Other components have the same definitions as their counterparts in Definition~\ref{def:scm}, except for $\Xc=\prod_{i\in J\dcup V\dcup W} \Xc_i$.
\end{definition}

All the definitions in Section~\ref{sec:pre} can be extended to iSCMs with minor modifications (see \citep{forre25causality} for more details. For example, an iSCM could induce a Markov kernel $\Prb_M(X_V\mid \Do(X_J))$ in general and not merely a probability distribution $\Prb_M(X_V)$. A solution function $g:\Xc_J\times \Xc_W\to \Xc_V$ has also arguments from $x_J\in \Xc_J$ and all quantifiers ``for $\Prb(X_W)$-a.a.\ $x_W\in \Xc_W$, and all $x_V\in \Xc_V$'' used in the relevant definition (e.g., essentially unique solution function) are replaced by ``for all $x_J\in \Xc_J$, $\Prb(X_W)$-a.a.\ $x_W\in \Xc_W$, and all $x_V\in \Xc_V$'' (note that the ordering of these quantifiers is important). We define $M_{\Do(T)}$ as an iSCM where we transform $T$ from endogenous nodes to exogenous input nodes for $T\subseteq V$. This is in contrast to the case in SCMs where we always need to assign a specific value to the intervened variables. 

The input variables can model hard/soft interventions, (non-stochastic) parameters for models, context variables, and regime indicators \citep{Dawid02influence,Dawid21decision} or say policy variables \citep{spirtes2001causation}, etc. Conceptually, iSCMs model data-generating processes where first the values of $X_J$ are assigned (e.g., by the experimenter) and then the process in Algorithm~\ref{alg:sampler} is implemented (function $g$ also depends on $X_J$).  Mathematically, input variables can help rigorously develop a general measure-theoretic causal calculus \citep{forre2021transitional,forre25causality}.\footnote{When the variables are not discrete, a naive approach would not work. See the discussions in \citet{forre2020causal} and \citet{forre2021transitional}.} One common feature of all these concepts is that they are modeled by variables without probability distributions on them and, therefore, need a distinct treatment from ordinary random variables. 

One convenient foundational framework for dealing with such non-stochastic and ordinary stochastic variables universally is named \textbf{transitional probability theory} in \citep{forre2021transitional}.  We will use relevant concepts directly and refer the reader to \citet{forre2021transitional} for more details.

\begin{remark}[Interaction between non-stochastic variables and stochastic variables]\label{rem:interaction}
   Given $X_S\in \Si$, the exogenous distribution $\Prb(X_W)$ becomes an exogenous Markov kernel
   \[
   \Prb_M(X_W\mid X_S\in \Si, \Do(X_J))=\frac{\Prb_M(X_W,X_S\in \Si\mid \Do(X_J))}{\Prb_M(X_S\in \Si \mid \Do(X_J))},
   \] 
   where the exogenous distribution has a dependency on $X_J$. Since $\Prb_M(X_S\in \Si \mid \Do(X_J=x_J))=0$ might be possible for some $x_J\in \Xc_J$, this might require merging some exogenous input nodes and restricting $\Xc_J$.
\end{remark}
 
To eliminate the ``entanglement'' between $X_J$ and $X_W$ mentioned in Remark~\ref{rem:interaction}, we need \citet[Theorem Proposition~10.7.6]{bogachev2007measure}. It states that we can represent a Markov kernel as a deterministic map of a random variable.

\begin{defprop}[Deterministic representation of Markov kernels]\label{defprop:rep_mk}
Let $\mathcal{Z}$ be any measurable space, $\Xc$ be a standard measurable space, and $\mathrm{P}(X \mid Z): \mathcal{Z} \rightarrow \Pc(\mathcal{X})$ be a Markov kernel. Then there exists a measurable function $R : \mathcal{Z} \times  \Uc \rightarrow \mathcal{X}$ such that 
  $$
  \Prb(X\mid Z)=\Prb(R(Z,U)\mid Z),
  $$ 
  where $\Uc$ is a measurable space and $U$ a random variable taking values in $\Uc$. We call $(U,R)$ a \textbf{deterministic representation of the Markov kernel $\Prb(X\mid Z)$}. 
\end{defprop}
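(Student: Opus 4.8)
The plan is to reduce to the real line and then use inverse-transform sampling, with the quantile function of the kernel playing the role of $R$. Since $\Xc$ is standard, it is Borel isomorphic to a Borel subset $B\subseteq \Rb$; fix such an isomorphism $\phi\colon \Xc\to B$. Pushing the kernel forward along $\phi$ yields a Markov kernel $\Prb(\phi(X)\mid Z)\colon \Zc\to \Pc(\Rb)$, and once a deterministic representation $\tilde R$ of this real-valued kernel is produced, setting $R\coloneqq \phi^{-1}\circ \tilde R$ gives a representation of the original kernel, since $\phi^{-1}\colon B\to\Xc$ is measurable so measurability is preserved. Hence I may assume from now on that $\Xc=\Rb$.

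Next I would take $\Uc=[0,1]$ and let $U\sim \Uni([0,1])$ be independent of $Z$. For each $z\in\Zc$ define the conditional CDF
\[
F_z(t)\coloneqq \Prb(X\le t\mid Z=z),\qquad t\in\Rb.
\]
For each fixed $t$ the map $z\mapsto F_z(t)$ is measurable, because $z\mapsto \Prb(X\mid Z=z)$ is a Markov kernel and $(-\infty,t]$ is measurable. I then define the generalized inverse (quantile function)
\[
R(z,u)\coloneqq \inf\{t\in\Rb: F_z(t)\ge u\},\qquad (z,u)\in \Zc\times[0,1].
\]

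The key step is the joint measurability of $R$. Using monotonicity of $t\mapsto F_z(t)$ and the density of $\Qb$, for every $s\in\Rb$ one has
\[
\{(z,u): R(z,u)<s\}=\bigcup_{q\in\Qb,\, q<s}\{(z,u): F_z(q)\ge u\}.
\]
Each set on the right lies in the product $\sigma$-algebra, since $(z,u)\mapsto F_z(q)-u$ is measurable as a difference of the measurable map $z\mapsto F_z(q)$ and the coordinate projection $u\mapsto u$, and the union is countable; hence $R$ is jointly measurable. To verify the law identity, fix $z$: by right-continuity of $F_z$ one has the equivalence $\{u: R(z,u)\le t\}=\{u: u\le F_z(t)\}$ for all $t$, so $R(z,U)=F_z^{-1}(U)$ has CDF $F_z$, i.e. its law equals $\Prb(X\mid Z=z)$. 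Since $U$ is independent of $Z$, the conditional law of $R(Z,U)$ given $Z=z$ is the law of $R(z,U)$, giving $\Prb(R(Z,U)\mid Z)=\Prb(X\mid Z)$ as kernels; composing back with $\phi^{-1}$ finishes the argument.

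The main obstacle is precisely this joint measurability of the quantile function: one must check that the mere measurability of $z\mapsto F_z(\cdot)$ coming from the kernel property propagates to the sublevel sets of $(z,u)\mapsto R(z,u)$ in the product $\sigma$-algebra (the displayed countable-union identity is the crux). The reduction through a Borel isomorphism is the only place where the standardness of $\Xc$ is used, and it is genuinely needed, since an arbitrary measurable $\Xc$ need not admit such a uniform deterministic representation.
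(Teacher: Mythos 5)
The paper never proves this statement: it is quoted as a known result (Bogachev, Proposition~10.7.6), so there is no in-paper argument to compare against, and supplying the standard inverse-transform proof is a reasonable thing to do. The core of your argument is sound. The identity $\{(z,u): R(z,u)<s\}=\bigcup_{q\in\Qb,\,q<s}\{(z,u):F_z(q)\ge u\}$ does follow from monotonicity of $F_z$ and does reduce joint measurability of $R$ to the kernel-measurability of $z\mapsto F_z(q)$ for countably many rational $q$; and the right-continuity argument correctly gives $\{u:R(z,u)\le t\}=\{u:u\le F_z(t)\}$, hence $R(z,\cdot)_*\Uni([0,1])=\Prb(X\mid Z=z)$ for every $z$, which is exactly the asserted kernel identity in the form needed by the paper (a pointwise-in-$z$ statement, so the closing remark about independence of $U$ and $Z$ is not really load-bearing).

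One step would fail as literally written: $R\coloneqq\phi^{-1}\circ\tilde R$ need not be well defined, because the quantile $\tilde R(z,u)$ need not lie in $B=\phi(\Xc)$ even though the law $\mu_z\coloneqq\Prb(\phi(X)\mid Z=z)$ is concentrated on $B$. Quantiles can land on boundary points outside $B$ (take $B=\Rb\setminus\{1/2\}$ and $\mu_z$ equal to Lebesgue measure on $[0,1]$, so that $\tilde R(z,1/2)=1/2\notin B$), and moreover $\tilde R(z,0)=-\infty$. The repair is one line: the set $N\coloneqq\{(z,u):\tilde R(z,u)\notin B\}$ is product-measurable, and each section $N_z$ is Lebesgue-null because $\tilde R(z,\cdot)_*\Uni([0,1])=\mu_z$ is concentrated on $B$; so redefine $R\coloneqq x_0$ on $N$ for any fixed $x_0\in\Xc$, which preserves measurability and leaves every law $R(z,\cdot)_*\Prb(U)$ unchanged. (Alternatively, use the sharper isomorphism theorem: an uncountable standard space is Borel-isomorphic to all of $\Rb$, and in the countable case the quantile of a law on $B\subseteq\Nb$ always lands on an atom, which necessarily lies in $B$.) With that patch your proof is complete.
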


\begin{remark}
     One can take $\Prb(U)$ to be $\mathrm{Uni}([0,1]^n)$ or $\mathcal{N}(0,I_{n})$. After fixing $U$, in general, the measurable function $R$ is \emph{not} unique, injective, or surjective.
\end{remark}

The conditioning operation for iSCMs can be seen as the composition of (i) merging all the input nodes that are ancestors of the selection nodes and exogenous random nodes that are ancestors of the selection nodes, respectively, restricting the values that input variables can take given $X_S\in \Si$, and then representing the corresponding conditioned Markov kernel deterministically to eliminate the ``dependence'' between input variables and exogenous random variables, and (ii) marginalizing out the selection variables.


Let $M=\left( J, V, W, \Xc, \Prb, f\right)$ be a simple iSCM \citep[Definition~6.6.5]{forre25causality}. Let $S\subseteq V$ be a subset of endogenous nodes and $\mathcal{S} \subseteq \Xc_S$ a measurable subset of values $X_S$ may take where there exists $x_J\in\Xc_J$ such that $\Prb_M(X_S\in \Si\mid \Do(X_J=x_J))>0$. Let $g:\Xc_J \times \Xc_W\rightarrow \Xc_V$ be the (essentially unique) solution function of $M$. Note that we can see $g_S$ as a map from $\Xc_{J\cap \anc_{G(M)}(S)}\times \Xc_{W\cap \anc_{G^a(M)}(S)}$ to $\Xc_S$.  Write $O\coloneqq V\sm S$, $B_1\coloneqq \anc_{G(M)}(S)\cap J$, and $B_2\coloneqq \anc_{G^a(M)}(S)\cap W$.


\begin{definition}[Conditioned iSCM]\label{def:cdiSCM}  Under the above setting, we define a conditioned iSCM
$M_{|X_S\in\mathcal{S}}\coloneqq\left(\hat{J}, \hat{V}, \hat{W}, \hat{\Xc}, \hat{\Prb}, \hat{f}\right)$ by 
\begin{enumerate}
    \item $\hat{J}\coloneqq \{\star_J\}\dcup (J\sm B_1)$ where $\star_J=B_1$;

    \item $\hat{W}\coloneqq \{\star_W\}\dcup (W\sm B_2)$ where $\star_W=B_2$;
    
    \item $\hat{V}\coloneqq V\sm S$;

    \item $\Xc_{\hat{J}}\coloneqq \Xc_{\star_J}\times \Xc_{J\sm B_1}$ where 
    \[
    \begin{aligned}
        \Xc_{\star_J}\coloneqq \{x_{B_1}\in\Xc_{B_1}\mid    \Prb_M(X_{S}\in \Si\mid \Do(X_{B_1}=x_{B_1}))>0\}
    \end{aligned}
    \] 
    and $\Xc_{\hat{W}}\coloneqq \Xc_{\star_W}\times \Xc_{W\sm B_2}$ with $\Xc_{\star_W}=[0,1]$ and $\Xc_{\hat{V}}\coloneqq \Xc_{V\sm S}$;

    \item $\hat{\Prb}(X_{\hat{W}})\coloneqq \Prb(X_{W\sm B_2})\otimes \Prb(X_{\star_W})$ with $\Prb(X_{\star_W})=\mathrm{Uni}([0,1])$;

    \item causal mechanism:
\[
  \begin{aligned}
    &\hat{f}(x_{\hat{J}},x_{\hat{V}},x_{\hat{W}})\coloneqq\\
   &  f_O(x_{J\sm B_1},x_{\star_J}, x_{O}, g^S(x_{J\sm B_1},x_{\star_J}, x_{O},
  x_{W\setminus B_2}, R(x_{\star_J},x_{\star_W})), x_{W\setminus B_2}, R(x_{\star_J},x_{\star_W})),
  \end{aligned}
  \]
  where $(X_{\star_W},R)$ is a deterministic representation of the Markov kernel $\Prb_M(X_{B_2}\mid \Do(X_{B_1}),X_S\in \Si)$ and $g^S:\Xc_{J}\times \Xc_O\times \Xc_W\to \Xc_S$ is the (essentially unique) solution function of $M$ \wrt $S$.
\end{enumerate}
\end{definition}

\begin{remark}
    Here we simply merge all the input node ancestors of $S$ and exogenous node ancestors of $S$, respectively. One can also derive a finer merging scheme, similar to what we did for \cref{def:cdSCM}. For the sake of space, we did not spell out all the details. The essential point that we want to show in this subsection is how to eliminate the dependency between input variables and exogenous random variables given $X_S\in \Si$. 
\end{remark}

  One can develop a theory for the conditioning operation for iSCMs and show the corresponding properties in parallel to what we did in \cref{sec:cond}. We now define an operation on iSCMs called \textbf{exogenous (quasi-)pullback of iSCMs}, which defines formally the first step of the conditioning operation for iSCMs. Because of the nice properties of marginalization \citet[Section~6.8]{forre25causality}, to show properties of the conditioning operation for iSCMs it suffices to show that the corresponding properties of exogenous quasi-pullback of iSCMs hold. See \citet[Section~8.3.2]{forre25causality} for some properties of exogenous pullback of iSCMs. Note that merging exogenous nodes is a special case of exogenous (quasi-)pullback of iSCMs.

\begin{definition}[Exogenous quasi-pullback iSCMs]
  Let $M=(J, V, W, \Xc, \Prb, f)$ be an iSCM. Let $\tilde{M}=(\tilde{J}, V, \tilde{W}, \tilde{\Xc}, \tilde{\Prb}, \tilde{f})$ be an iSCM. Let $\Phi_J:\Xc_{\tilde{J}\cup\tilde{W}}\rightarrow \Xc_J$ be a measurable mapping that does not depend on the $X_{\tilde{W}}$-component, and let $\Phi_W: \Xc_{\tilde{J}\cup\tilde{W}}\rightarrow \Xc_W $ be a measurable mapping such that $\Qr(X_W\mid X_{\tilde{J}})\coloneqq (\Phi_W)_*\left(\delta(X_{\tilde{J}}\mid X_{\tilde{J}})\otimes\tilde{\Prb}(X_{\tilde{W}})\right)\ll \Prb(X_W)$, i.e., for all $x_{\tilde{J}}\in \Xc_{\tilde{J}}$, and for every measurable subset $\Ac\subseteq \Xc_W$, $\Prb(X_W\in \Ac)=0$ implies that $\Qr(X_W\in \Ac\mid X_{\tilde{J}}=x_{\tilde{J}})=0$. Assume that
  \[
    \tilde{f}(x_{\tilde{J}},x_V,x_{\tilde{W}})=f\left(\Phi_J(x_{\tilde{J}}),x_V,\Phi_W(x_{\tilde{J}},x_{\tilde{W}})\right).
  \]
  Then we call $\Phi=(\Phi_J,\Phi_W):\Xc_{\tilde{J}\cup\tilde{W}}\rightarrow \Xc_J\times \Xc_W$ an \textbf{exogenous quasi-pullback function of $M$} and $\tilde{M}$ an \textbf{exogenous quasi-pullback iSCM of $M$ associated with $(\Phi,\tilde{\Prb})$}. We  denote $\tilde{M}$ by $M_{\rp(\Phi,\tilde{\Prb})}$.

\end{definition}
  
  It is easy to see that $M_{|X_S\in \Si}=\Big(M_{\mathrm{ep}(\Phi,\hat{\Prb})}\Big)_{\sm S}$ is a marginalized exogenous quasi-pullback iSCM of $M$ associated with the exogenous quasi-pullback function $\Phi=(\Phi_J,\Phi_W)$ and distribution $\hat{\Prb}(X_{\hat{W}})$ where
\[
\begin{aligned}
    &\Phi_J:\Xc_{\hat{J}\cup \hat{W}}\to \Xc_J, \Phi_J(x_{\hat{J}},x_{\hat{W}})\coloneqq x_J\\ 
    \text{and } 
    &\Phi_W:\Xc_{\hat{J}\cup \hat{W}}\to \Xc_W,\Phi_W(x_{\hat{J}},x_{\hat{W}})\coloneqq (x_{W\sm B_2},R(x_{\star_J},x_{\star_W})).
\end{aligned}
\]
  
  The graphical conditioning operation for DMGs with input nodes can be defined similarly to Definition~\ref{def:cond_dmg} by (i) merging all the input node ancestors of $S$, (ii) adding bidirected edges to output nodes that are ancestors or siblings of ancestors of $S$, and (iii) marginalizing out $S$. In the graph, we make both the merged input node (that corresponds with those input nodes that were ancestors of $S$) and the output nodes that were ancestors of $S$ dashed. One can develop a theory for this operation and show the corresponding properties in parallel to what we did in \cref{sec:condDMG}. Note that one needs to replace stochastic conditional independence and the usual graphical $\sigma$-separation with transitional conditional independence \citep[Definition~3.1]{forre2021transitional} and a nuanced graphical separation \citep[Definition~5.9]{forre2021transitional}, respectively. 

We leave further exploration of the properties of the conditioning operations for iSCMs and on DMGs with input nodes for future work.

\newpage

\bibliographystyle{apalike}
\bibliography{bibfile_cdscm}

\end{document}